\begin{document}

\title{Bitcoin-Enhanced Proof-of-Stake Security:\\ Possibilities and Impossibilities
\thanks{Contact author: DT}
}

\author{%
\IEEEauthorblockN{Ertem Nusret Tas}%
\IEEEauthorblockA{
  Stanford University\\
  nusret@stanford.edu%
}%
\and%
\IEEEauthorblockN{David Tse}%
\IEEEauthorblockA{%
  Stanford University\\
  dntse@stanford.edu%
}
\and%
\IEEEauthorblockN{Fangyu Gai}%
\IEEEauthorblockA{%
  BabylonChain\\
  fangyu.gai@babylonchain.io%
}%
\linebreakand%
\IEEEauthorblockN{Sreeram Kannan}%
\IEEEauthorblockA{%
  University of Washington, Seattle\\
  ksreeram@uw.edu%
}%
\and%
\IEEEauthorblockN{Mohammad Ali Maddah-Ali}%
\IEEEauthorblockA{%
  University of Minnesota\\
  maddah.ali.ee@gmail.com%
}%
\and%
\IEEEauthorblockN{Fisher Yu}%
\IEEEauthorblockA{%
  BabylonChain\\
  fisher.yu@babylonchain.io%
}%
}

\newtheorem{theorem}{Theorem}
\newtheorem{corollary}{Corollary}
\newtheorem{lemma}{Lemma}
\newtheorem{proposition}{Proposition}
\newtheorem{conjecture}{Conjecture}

\newtheorem{definition}{Definition}
\newtheorem{remark}{Remark}
\newtheorem{todo}{Todo}
\newtheorem{example}{Example}
\newtheorem{question}{Question}

\newcommand{\token}{BBL\xspace}
\newcommand{\chainnosp}{Bitcoin\ignorespaces}
\newcommand{\bpow}{Bitcoin\xspace}
\newcommand{\passive}{passive\xspace}
\newcommand{\bpos}{B-PoS\xspace}
\newcommand{\pos}{PoS\xspace}
\newcommand{\header}{\ensuremath{\mathsf{header}}}

\newcommand{\powchain}{\ensuremath{\mathcal{C}}}
\newcommand{\poschain}{\ensuremath{\mathcal{L}}}
\newcommand{\btc}{Bitcoin\xspace}
\newcommand{\nd}{\ensuremath{\mathsf{bd}}}
\newcommand{\turnover}{\ensuremath{\mathsf{tr}}}
\newcommand{\aux}{\ensuremath{\mathcal{CP}}}

\newcommand{\txroot}{\ensuremath{\mathsf{txr}}}
\newcommand{\ts}{\ensuremath{\mathsf{TS}}}
\newcommand{\gT}{\ensuremath{\mathsf{gT}}}
\newcommand{\sT}{\ensuremath{\mathsf{sT}}}
\newcommand{\LOG}{\ensuremath{\mathsf{Ledger}}}
\newcommand{\Ledger}[2]{%
    \ensuremath{\mathsf{PoSLOG}_{#1}^{#2}}
}
\newcommand{\Ledgerf}[2]{%
    \ensuremath{\mathsf{FinLOG}_{#1}^{#2}}
}
\newcommand{\POWLedger}[2]{%
    \ensuremath{\mathsf{PoWChain}_{#1}^{#2}}
}
\newcommand{\validator}{\ensuremath{\mathsf{v}}}
\newcommand{\client}{\ensuremath{\mathsf{c}}}
\newcommand{\Ttendermint}[0]{\ensuremath{T_{\mathrm{tm}}}}

\renewcommand{\algorithmiccomment}[1]{\hfill$\triangleright$ \emph{#1}}

\algnewcommand{\algorithmicswitch}{\textbf{switch}}
\algdef{SE}[SWITCH]{Switch}{EndSwitch}[1]{\algorithmicswitch\ #1\ \algorithmicdo}{\algorithmicend\ \algorithmicswitch}%
\algtext*{EndSwitch}%

\algnewcommand{\algorithmiccase}{\textbf{case}}
\algdef{SE}[CASE]{Case}{EndCase}[1]{\algorithmiccase\ #1}{\algorithmicend\ \algorithmiccase}%
\algtext*{EndCase}%

\algnewcommand{\algorithmicon}{\textbf{on}}
\algdef{SE}[ON]{On}{EndOn}[1]{\algorithmicon\ #1\ \algorithmicdo}{\algorithmicend\ \algorithmicon}%
\algtext*{EndOn}%

\algrenewcommand{\algorithmicdo}{}
\algrenewcommand{\algorithmicthen}{}

\algnewcommand{\algorithmicgoto}{\textbf{goto}}%
\algnewcommand{\Goto}[1]{\algorithmicgoto~\ref{#1}}%

\algnewcommand{\algorithmicbreak}{\textbf{break}}%
\algnewcommand{\Break}[0]{\algorithmicbreak}%

\algnewcommand{\algorithmiccontinue}{\textbf{continue}}%
\algnewcommand{\Continue}[0]{\algorithmiccontinue}%

\algnewcommand{\algorithmicwaiton}{\textbf{wait on}}%
\algnewcommand{\WaitOn}[1]{\algorithmicwaiton~{#1}}%

\newcommand*\circled[1]{\tikz[baseline=(char.base)]{
            \node[shape=circle,draw,inner sep=2pt] (char) {\footnotesize #1};}}

\newcommand{\AAD}{Availability-Accountability Dilemma\xspace}
\newcommand{\Aad}{Availability-Accountability Dilemma\xspace}
\newcommand{\aad}{availability-accountability dilemma\xspace}

\newcommand{\asr}{accountable safety resilience\xspace}
\newcommand{\Asr}{Accountable safety resilience\xspace}
\newcommand{\ASR}{Accountable Safety Resilience\xspace}

\newcommand{\alr}{accountable liveness resilience\xspace}
\newcommand{\Alr}{Accountable liveness resilience\xspace}
\newcommand{\ALR}{Accountable Liveness Resilience\xspace}

\newcommand{\ssr}{slashable safety resilience\xspace}
\newcommand{\Ssr}{Slashable safety resilience\xspace}
\newcommand{\SSR}{Slashable Safety Resilience\xspace}

\newcommand{\slr}{slashable liveness resilience\xspace}
\newcommand{\Slr}{Slashable liveness resilience\xspace}
\newcommand{\SLR}{Slashable Liveness Resilience\xspace}

\newcommand{\ourprotocol}{our protocol\xspace}
\newcommand{\Ourprotocol}{Our protocol\xspace}

\newcommand{\chlc}[2]{\ensuremath{\mathsf{ch}_{#1}^{#2}}}

\newcommand{\LOGda}[2]{%
    \ifthenelse{\equal{#1}{}}{%
        \ensuremath{\mathsf{LOG}_{\mathrm{da}}^{#2}}%
    }{%
        \ensuremath{\mathsf{LOG}_{\mathrm{da},#1}^{#2}}%
    }%
}
\newcommand{\LOGbft}[2]{%
    \ifthenelse{\equal{#1}{}}{%
        \ensuremath{\mathsf{LOG}_{\mathrm{bft}}^{#2}}%
    }{%
        \ensuremath{\mathsf{LOG}_{\mathrm{bft},#1}^{#2}}%
    }%
}
\newcommand{\LOGacc}[2]{%
    \ifthenelse{\equal{#1}{}}{%
        \ensuremath{\mathsf{LOG}_{\mathrm{acc}}^{#2}}%
    }{%
        \ensuremath{\mathsf{LOG}_{\mathrm{acc},#1}^{#2}}%
    }%
}
\newcommand{\tr}[2]{%
    \ifthenelse{\equal{#1}{}}{%
        \ensuremath{\mathsf{T}^{#2}}%
    }{%
        \ensuremath{\mathsf{T}_{#1}^{#2}}%
    }%
}
\newcommand{\wt}[2]{%
    \ifthenelse{\equal{#1}{}}{%
        \ensuremath{\mathsf{w}^{#2}}%
    }{%
        \ensuremath{\mathsf{w}_{#1}^{#2}}%
    }%
}

\newcommand{\LOGBLANKFIX}[0]{\ensuremath{\mathsf{LOG}}}

\newcommand{\PI}[0]{\ensuremath{\Pi}}
\newcommand{\PIlc}[0]{\ensuremath{\Pi_{\mathrm{lc}}}}
\newcommand{\PIbft}[0]{\ensuremath{\Pi_{\mathrm{bft}}}}
\newcommand{\PIacc}[0]{\ensuremath{\Pi_{\mathrm{acc}}}}

\newcommand{\Adv}[0]{\ensuremath{\mathcal A}}
\newcommand{\Env}[0]{\ensuremath{\mathcal Z}}

\newcommand{\ie}[0]{\emph{i.e.}\xspace}
\newcommand{\eg}[0]{\emph{e.g.}\xspace}
\newcommand{\cf}[0]{\emph{cf.}\xspace}
\newcommand{\Ie}[0]{\emph{i.e.}\xspace}
\newcommand{\Eg}[0]{\emph{E.g.}\xspace}
\newcommand{\wolog}[0]{w.l.o.g.\xspace}

\newcommand{\GST}[0]{\ensuremath{\mathsf{GST}}}
\newcommand{\GAT}[0]{\ensuremath{\mathsf{GAT}}}

\newcommand{\concat}[0]{\ensuremath{\mathbin\Vert}}
\newcommand{\T}[0]{\ensuremath{\mathcal{T}}}

\newcommand{\tx}[0]{\ensuremath{\mathsf{tx}}}
\newcommand{\txs}[0]{\ensuremath{\mathsf{txs}}}
\newcommand{\negl}[0]{\ensuremath{\operatorname{negl}}}
\newcommand{\Tconfirm}[0]{\ensuremath{T_{\mathrm{fin}}}}
\newcommand{\Tcheckpoint}[0]{\ensuremath{T_{\mathrm{cp}}}}
\newcommand{\Tbtcmode}[0]{\ensuremath{T_{\mathrm{btc}}}}
\newcommand{\Ttimeout}[0]{\ensuremath{T_{\mathrm{to}}}}
\newcommand{\Trecent}[0]{\ensuremath{T_{\mathrm{rec}}}}
\newcommand{\kcp}[0]{\ensuremath{k_{\mathrm{cp}}}}
\newcommand{\Tslot}[0]{\ensuremath{T_{\mathrm{slot}}}} %
\newcommand{\Cf}[0]{\ensuremath{\mathcal{C}}}
\newcommand{\Wf}[0]{\ensuremath{\mathcal{W}}}
\newcommand{\rd}[0]{\ensuremath{\mathsf{rd}}}
\newcommand{\Tstall}[0]{\ensuremath{T_\mathsf{st}}}

\newcommand{\bprop}[1]{%
    \ifthenelse{\equal{#1}{}}{%
        \ensuremath{\Hat{b}}%
    }{%
        \ensuremath{\Hat{b}_{#1}}%
    }%
}

\newcommand{\ld}[1]{%
    \ifthenelse{\equal{#1}{}}{%
        \ensuremath{\mathrm{L}^{(c)}}%
    }{%
        \ensuremath{\mathrm{L}^{(#1)}}%
    }%
}

\newcommand{\adj}[0]{\ensuremath{\mathcal{J}}}

\newcommand{\betaS}[0]{\ensuremath{\beta_{\mathrm{s}}}}
\newcommand{\betaA}[0]{\ensuremath{\beta_{\mathrm{a}}}}
\newcommand{\betaL}[0]{\ensuremath{\beta_{\mathrm{l}}}}

\newcommand{\fS}[0]{\ensuremath{f_{\mathrm{s}}}}
\newcommand{\fA}[0]{\ensuremath{f_{\mathrm{a}}}}
\newcommand{\fL}[0]{\ensuremath{f_{\mathrm{l}}}}

\newcommand{\CpReq}[3]{%
    \ifthenelse{\equal{#3}{}}{%
        \ensuremath{\langle\mathsf{#1},#2\rangle}%
    }{%
        \ensuremath{\langle\mathsf{#1},#2\rangle_{#3}}%
    }%
}

\newcommand{\clue}{message\xspace}
\newcommand{\clues}{messages\xspace}
\newcommand{\Clue}{Message\xspace}
\newcommand{\Clues}{Messages\xspace}

\newcommand{\AdvEnvSync}[0]{\ensuremath{(\Adv_{\mathrm{s}}, \Env_{\mathrm{s}})}}
\newcommand{\AdvEnvPsync}[0]{\ensuremath{(\Adv_{\mathrm{p}}, \Env_{\mathrm{p}})}}
\newcommand{\AdvEnvDA}[0]{\ensuremath{(\Adv_{\mathrm{da}}, \Env_{\mathrm{da}})}}
\newcommand{\AdvEnvPG}[0]{\ensuremath{(\Adv_{\mathrm{pda}}, \Env_{\mathrm{pda}})}}

\definecolor{myParula01Blue}{RGB}{0,114,189}
\definecolor{myParula02Orange}{RGB}{217,83,25}
\definecolor{myParula03Yellow}{RGB}{237,177,32}
\definecolor{myParula04Purple}{RGB}{126,47,142}
\definecolor{myParula05Green}{RGB}{119,172,48}
\definecolor{myParula06LightBlue}{RGB}{77,190,238}
\definecolor{myParula07Red}{RGB}{162,20,47}
\newcommand{\revI}[1]{\textcolor{blue}{#1}}
\newcommand{\revRe}[1]{\textcolor{purple}{#1}}

\makeatletter
\def\ps@headings{%
\def\@oddhead{\mbox{}\scriptsize\rightmark \hfil \thepage}%
\def\@evenhead{\scriptsize\thepage \hfil \leftmark\mbox{}}}
\newcommand{\linebreakand}{%
  \end{@IEEEauthorhalign}
  \hfill\mbox{}\par
  \mbox{}\hfill\begin{@IEEEauthorhalign}
}
\makeatother
\pagestyle{headings}

\maketitle

\begin{abstract}
     Bitcoin is the most secure blockchain in the world, supported by the immense hash power of its Proof-of-Work miners. Proof-of-Stake chains are energy-efficient, have fast finality but face several security issues: susceptibility to non-slashable long-range safety attacks, low liveness resilience and difficulty to bootstrap from low token valuation.  We show that these security issues are inherent in any PoS chain without an external trusted source, and propose a new protocol, Babylon, where an off-the-shelf PoS protocol checkpoints onto Bitcoin to resolve these issues. An impossibility result justifies the optimality of Babylon. A use case of Babylon is to reduce the stake withdrawal delay: our experimental results show that this delay can be reduced from weeks in existing PoS chains to less than 5 hours using Babylon, at a transaction cost of less than $\mathbf{10}$K USD per annum for posting the checkpoints onto Bitcoin.
\end{abstract}

\section{Introduction}
\label{sec:introduction}

\begin{figure*}
    \centering
    \includegraphics[width=\linewidth]{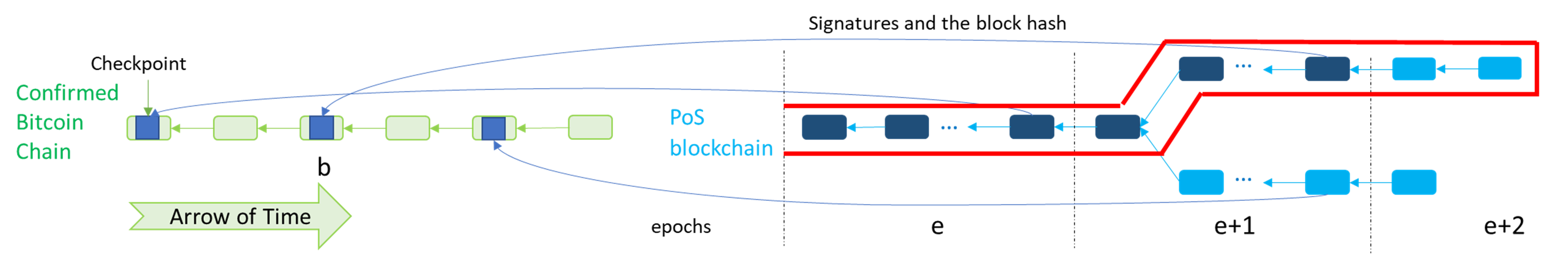}
    \caption{Babylon posts PoS block hashes and the validator signatures on them to Bitcoin. Ordering of these hashes enable clients to break ties between conflicting PoS chains, and slash adversarial validators before they withdraw after a safety violation.
    The canonical \pos chain in a client $\client$'s view is shown by the red circle.
    Dark blue blocks represent the checkpointed chain of \pos blocks in $\client$'s view.
    The fast finality rule determines the PoS chain, while the slow finality rule determines the checkpointed chain, which is always a prefix of the \pos chain.}
    \label{fig:checkpointing}
\end{figure*}

\begin{figure*}
    \centering
    \includegraphics[width=\linewidth]{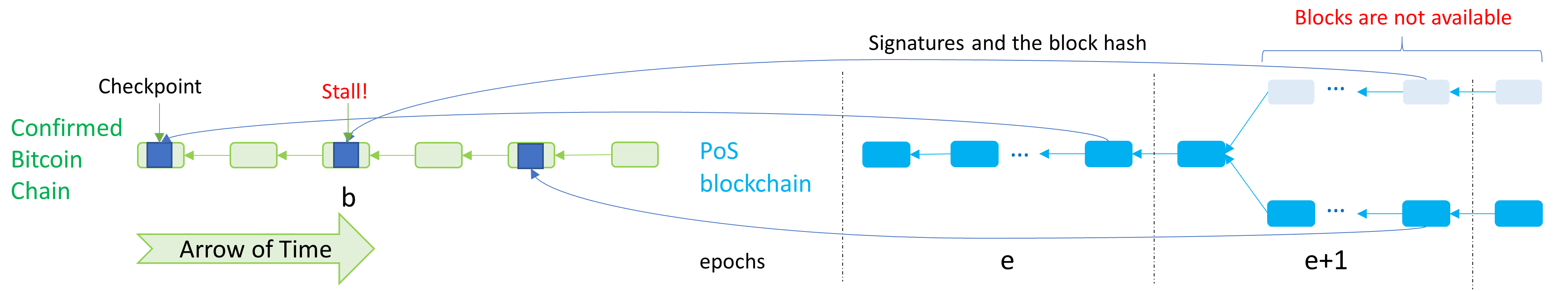}
    \caption{An adversary that controls a supermajority of active validators finalizes PoS blocks on an attack chain (top). It keeps the attack chain private, yet posts the hashes of the private blocks and their signatures on Bitcoin. 
    Once these checkpoints are deep in \btc, adversary helps build a conflicting chain (bottom) in public, and posts the hashes of its blocks and their signatures on Bitcoin.
    A client that sees the earlier checkpoint for the unavailable blocks, and the later one for the public blocks has two options:
    (1) It can stop outputting the new PoS blocks, or
    (2) it can ignore the earlier checkpoint and output the public blocks from the bottom chain. 
    However, the adversary can later publish the unavailable blocks, and convince a late-coming client to output the blocks from the top (attack) chain, causing a safety violation.
    Moreover, as the adversary might have withdrawn its stake by the time the blocks in the top chain are published, it cannot be slashed.
    To avoid this attack, clients choose to stall upon seeing block $b$, \ie \emph{emergency-break}, if they see a signed checkpoint for the unavailable blocks.}
    \label{fig:stalling}
\end{figure*}

\subsection{From Proof-of-work to proof-of-stake}

Bitcoin, the most valuable and arguably the most secure blockchain in the world, is supported by a proof-of-work (PoW) protocol that requires miners to compute many random hashes.
Many newer blockchain projects eschew the proof-of-work paradigm in favor of proof-of-stake (PoS).
A prominent example is Ethereum, which has migrated from PoW to PoS, a process that lasted $6$ years.
Other prominent PoS blockchains include single chain ecosystems such as Cardano, Algorand, Solana, Avalanche as well as multi-chain ecosystems such as Polkadot and Cosmos. The Cosmos ecosystem consists of many application-specific zones, all built on top of the Tendermint PoS consensus protocol~\cite{tendermint,tendermint_thesis}.

PoS protocols replace computational work with financial stake as the means to participate in the protocol.
To execute the protocol as \emph{validators}, nodes acquire coins of the PoS protocol, and lock up their stake as collateral in a contract.
This enables the \pos protocol to hold protocol violators accountable, and slash, \ie, burn their locked stake as punishment.

\subsection{Proof-of-stake security issues}
\label{sec:pos-insecurities}

Security of PoS protocols has traditionally been shown under the honest majority assumption, which states that the honest parties hold the majority of the stake \cite{snowwhite,badertscher2018ouroboros, algorand, tendermint}.
Introduced by Buterin and Griffith \cite{casper}, the concept of \emph{accountable safety} enhances the notion of security under honest majority with the ability to provably identify protocol violators
in the event of a safety violation.
Thus, accountable safety not only implies security under an honest majority, but also the identification of protocol violators if a large quorum of the validators are adversarial and cause a safety violation.
In lieu of making an unverifiable honest majority assumption, this approach aims to obtain a \emph{cryptoeconomic} notion of security by holding protocol violators accountable and \emph{slashing} their stake, thus enabling an exact quantification of the penalty for protocol violation.
This {\em trust-minimizing} notion of security is central to the design of PoS protocols such as Gasper \cite{gasper}, the protocol supporting PoS Ethereum, and Tendermint \cite{tendermint,tendermint_thesis}, supporting the Cosmos ecosystem.
However, there are several fundamental limitations to the security of the PoS protocols:

\noindent\textbf{Safety attacks are not slashable:}
    While a PoS protocol with accountable safety can identify attackers, slashing of their stake is not always possible, implying a lack of \emph{slashable safety}. For example, a posterior corruption attack can be mounted using old coins after the stake is withdrawn, and therefore cannot be slashed \cite{vitalik_weak_subj, snowwhite, badertscher2018ouroboros, long_range_survey}.
    These attacks are infeasible in a PoW protocol like Bitcoin as the attacker needs to counter the total difficulty of the existing longest chain.
    In contrast, they become affordable in a PoS protocol since the old coins have little value and can be bought by the adversary at a small price. 
    Posterior corruption is a long-known problem with PoS protocols, and several approaches have been proposed to deal with it under the honest majority assumption (Section \ref{sec:related-work}). 
    Theorem \ref{thm:pos-non-slashable} in Section \ref{sec:opt_safety} says that no PoS protocol can provide slashable safety without {\em external} trust assumptions. 
    A typical external trust assumption used in practice is {\em off-chain social consensus checkpointing}.
    As social consensus is a slow process, this type of checkpointing leads to a long \emph{withdrawal delay} after a validator requests to withdraw its stake (\eg, $21$ days for Cosmos zones \cite{cosmos_delay}), which reduces the liquidity of the system. Moreover, social consensus cannot be relied upon in smaller blockchains with an immature community.
    
\noindent\textbf{Low liveness resilience:}
    PoS protocols such as Snow White~\cite{sleepy,snowwhite} and Ouroboros~\cite{kiayias2017ouroboros,david2018ouroboros,badertscher2018ouroboros} guarantee liveness as long as the fraction of adversarial stake is below $1/2$.
    However, PoS protocols with accountable safety such as Tendermint and Gasper cannot ensure liveness beyond an adversarial fraction of $1/3$.
    This low liveness resilience of $1/3$ is a fundamental limitation of PoS protocols with accountable safety, even under synchrony~\cite[Appendix B]{forensics}.
    
\noindent\textbf{The bootstrapping problem:}
     Even if a PoS protocol provides slashable safety, the maximum penalty an adversary can suffer due to slashing does not exceed the value of its staked coins.
     Thus, the cryptoeconomic security of a PoS protocol is proportional to its token valuation.
    Many PoS chains, particularly ones that support one specific application (\eg, Cosmos zones) start with a low token valuation. This makes it difficult for new PoS chains to support high-valued applications (\eg, decentralized finance, NFTs). Similarly, a PoS chain that experiences a significant drop in token valuation becomes more susceptible to attacks.

\subsection{Leveraging external trust}
\label{sec:babylon-timestamping-service}

The main reason behind the security issues described above is the absence of a reliable {\em arrow of time}. 
For instance, posterior corruption attacks exploit the inability of the late-coming clients to distinguish between the canonical chain minted by the honest validators and the adversary's history-revision chain that is published much later \cite{snowwhite, long_range_survey}.
Hence, to guarantee a slashable notion of safety, PoS protocols need an external trust source that can periodically and publicly timestamp the canonical chain. Social consensus can be viewed as one such source of external trust, but because it is achieved off-chain, its level of security is hard to quantify. In this paper, we explore a more quantifiable approach, which is to use an {\em existing secure blockchain} as a source of external trust. Given such a trusted blockchain, we ask: {\em What is the limit to the security enhancement the trusted chain can provide to a PoS chain and what is the optimal protocol that achieves this limit?}

A natural example of such a trusted blockchain is Bitcoin. The main result of the paper is the construction of Babylon, where an off-the-shelf PoS protocol posts succinct information to Bitcoin for security enhancement. Moreover, we show that Babylon achieves the optimal security among all protocols that \emph{do not} post the entire PoS data to Bitcoin. Indeed, it is trivial to see that if the PoS protocol is allowed to post its entire data  onto the trusted chain, the PoS protocol can inherit its full security. But in a chain with low throughput like Bitcoin, posting the entire data is clearly infeasible. Our result shows exactly what the loss of security is from this limitation.

The idea of using a trusted parent chain to provide security to a PoS chain has been used in several industry projects and academic works. Most of these works focus on mitigating specific attack vectors. For example, a recently proposed protocol, BMS~\cite{bms}, uses Ethereum to keep track of the dynamic validator set of a PoS chain to withstand posterior corruption attacks. (That work was later extended to a protocol using Bitcoin in a concurrent work \cite{pikachu} to ours.) 
In our paper, we broaden the investigation to find out the best security guarantees a trusted public blockchain such as Bitcoin can provide to a PoS chain, and construct an optimal protocol, Babylon, that achieves these guarantees. A detailed comparison of Babylon and other approaches  is described in Table \ref{tab:just-btc} and Section \ref{sec:related-work}.

\subsection{Babylon}
\label{sec:protocol-specifics}

A PoS protocol is executed by \emph{validators}, which lock up their stake in a contract on the PoS chain. 
The design of Babylon specifies the kind of information validators post on Bitcoin and how this information is used by the clients, \ie, external observers of the protocol, to resolve attacks on the PoS chain (\cf Figure~\ref{fig:checkpointing}).
Its highlights are presented below:
\paragraph{Checkpointing}
\pos protocol proceeds in epochs during which the validator set is fixed.
The honest validators sign the hash of the last \pos block of each epoch (\cf Figure~\ref{fig:checkpointing}).
They subsequently post the hash and their signatures to \btc as \emph{checkpoints}.
Ordering imposed on these checkpoints by \btc enable the clients to resolve safety violations, and identify and slash adversarial validators engaged in posterior corruption attacks before they withdraw their stake.

\paragraph{Fast finality fork-choice rule (\cf Figure~\ref{fig:checkpointing})}
To output a \pos chain, a client $\client$ first identifies the confirmed prefix of the longest \btc chain in its view.
It then uses the sequence of checkpoints on \btc to obtain a \emph{checkpointed chain} of \pos blocks.
While constructing the checkpointed chain, \pos blocks with earlier checkpoints take precedence over conflicting blocks with later checkpoints.
Once $\client$ constructs the checkpointed chain, it obtains the full \pos chain by attaching the remaining \pos blocks that extend the checkpointed chain.
It stalls upon observing a fork among the later \pos blocks that extend the checkpointed chain.

Since \btc helps resolve earlier forks and obtain a unique checkpointed chain, safety can only be violated for recent \pos blocks in $\client$'s view.
Hence, adversarial validators cannot violate the safety of older \pos blocks through long range posterior corruption attacks after withdrawing their stake.
On the other hand, if a safety attack is observed for the recent \pos blocks, $\client$ can detect the adversarial validators and enforce the slashing of their stake.
Protocol thus ensures slashable safety.

\paragraph{Emergency break}
If the adversary controls a supermajority of the validators, it can sign the hashes of \pos blocks privately, and keep the blocks hidden from the clients. 
A client stops adding new blocks to its \pos chain if it observes a signed checkpoint on \btc, yet the corresponding block is unavailable.
This \emph{emergency break} is necessary to protect against data unavailability attacks (\cf Figure~\ref{fig:stalling}).
If the checkpoints consisted only of the block hashes, then the adversary could stall the \pos chain by sending a \emph{single} hash and pretending like it is the checkpoint of an unavailable \pos block.
Thus, the signatures on the checkpoints ensure that the adversary cannot cause an emergency break, unless it corrupts the supermajority of the current validator set.

\paragraph{Fallback to Bitcoin} 
If a transaction is observed to be censored, execution of the \pos protocol is halted, and the hashes of all future \pos blocks and the corresponding signatures on them are posted to \btc, which is directly used to order these blocks.
This is analogous to operating the \pos protocol as a \emph{rollup}, where \btc plays the role of the parent chain and the \pos validators act like sequencers.
A \pos chain that uses \btc directly to order its blocks is said to be in the \emph{rollup mode}.

\paragraph{Bitcoin safety \& slow finality fork-choice rule}
Clients can achieve Bitcoin safety for their \pos chains if they adopt a slow finality rule, where they only output the checkpointed chain in their views.
They wait until a \pos block or its descendants are checkpointed on \btc before outputting the block as part of the \pos chain.
In this case, the \pos chain is always safe (assuming Bitcoin is secure), however, its latency now becomes as large as \btc latency.

\setlength{\tabcolsep}{6pt}
\renewcommand{\arraystretch}{1.5}
\begin{table*}[]
\centering
\begin{tabular}{|c||c|c|c|c|c|}
  \hline
  & Safety & \multicolumn{3}{c|}{Liveness} & Withdrawal \\
  & & \multicolumn{3}{c|}{$f<n/3\quad\quad\quad\ $  $n/3 \leq f < n/2$  $\quad\quad\ f \geq n/2$} & \\
  \hline\hline
 KES~\cite{algorand} & $n/3$-safe & PoS Latency & No guarantee & No guarantee & ? \\
 \hline
 Pikachu~\cite{pikachu}/BMS~\cite{bms} & $n/3$-safe & PoS Latency & No guarantee & No guarantee & Bitcoin latency \\
 \hline
 Babylon: fast finality & $n/3$-slashably safe & PoS Latency & Bitcoin Latency & No guarantee & Bitcoin latency \\
 \hline
 Babylon: slow finality & Always safe  & Bitcoin Latency & Bitcoin Latency & No guarantee & Bitcoin latency \\
 \hline
\end{tabular}
\caption{The security guarantees of Babylon compared to other solutions, assuming the security of Bitcoin for Babylon and the security of Ethereum for BMS. Here, $f$ is the number of adversarial validators and $n$ is the total number of validators. $m$-safe means the protocol is safe whenever $f<m$, $m$-slashable-safe means that whenever safety is violated, $m$ validators can be slashed (which is a stronger property than $m$-safe). 
Stake withdrawals happen with Bitcoin latency on Babylon as long as liveness is satisfied, whereas it happens with Ethereum latency on BMS.
In theory, Algorand~\cite{algorand} can grant stake withdrawal requests in seconds as it uses key-evolving signatures (KES) to recycle keys after every signature, but since KES is highly incentive incompatible, Algorand still uses social consensus checkpointing.}
\label{tab:just-btc}
\end{table*}

\subsection{Security guarantees}
\label{sec:security-guarantees}

Table \ref{tab:just-btc} summarizes the security guarantees achieved by Babylon, assuming that Bitcoin is safe and live. Babylon resolves the three PoS security issues presented in Section \ref{sec:pos-insecurities} in the following way:

\noindent\textbf{Safety:} Under the fast finality rule, Babylon achieves slashable safety via checkpointing, and stalling whenever data is unavailable.
Slashable safety is not possible without an external source of trust. 
Moreover, Babylon achieves this with a stake withdrawal delay equal to the Bitcoin confirmation latency, \ie, the time for the timestamp of a withdrawal request to be confirmed on Bitcoin.
To estimate this delay, we implemented a checkpointing protocol and measured the confirmation latency for the checkpoints. See Section~\ref{sec:implementation}.

\noindent\textbf{Liveness:} 
Babylon improves the liveness resilience from $1/3$ to $1/2$ by using Bitcoin as a fallback. 
However, when the adversarial fraction exceeds $1/2$, liveness cannot be guaranteed. 
As shown by Theorem~\ref{thm:data-limit}, this is not Babylon's fault, but is inherent in any protocol that does not post the entire PoS transaction data to Bitcoin.
Then, the protocol is susceptible to data unavailability attacks.
In Section~\ref{sec:liveness}, we show one such attack on \emph{inactivity leak}, the method used by PoS Ethereum and Cosmos to slash inactive validators~\cite{eth_inactivity_leak,cosmos_inactivity_leak}.

\noindent\textbf{Bootstrapping:} Under the slow finality rule, Babylon is safe no matter how many adversarial validators there are, if Bitcoin is secure. Thus, Babylon achieves Bitcoin safety, albeit at the expense of Bitcoin confirmation latency.
Thus is useful in a bootstrapping mode or for important transactions, where slashable safety is not sufficient.

\subsection{Outline}

The rest of the paper is organized as follows. 
Section~\ref{sec:related-work} reviews related work.
Section~\ref{sec:model} presents the model and definitions of various security notions. 
In Section~\ref{sec:opt_safety}, we show that slashable safety is not possible for PoS chains without external trust.
We then present Babylon 1.0, a Bitcoin-checkpointing protocol that provides slashable safety.
In Section~\ref{sec:liveness}, we show the impossibility of liveness beyond a $1/2$ adversarial fraction of validators, even when there is a data-limited source of external trust.
We then improve Babylon 1.0 to the full Babylon protocol to provide the optimal liveness resilience.
In Section~\ref{sec:implementation}, we provide measurements for the confirmation latency of Bitcoin transactions containing checkpoints and demonstrate the feasibility of the Babylon protocol.
In Section~\ref{sec:btc-sec}, we describe the slow finality rule that provides Bitcoin safety to the PoS chains.

\section{Related Works}
\label{sec:related-work}

\subsection{Posterior corruption attacks} 

Among the PoS security issues discussed in Section \ref{sec:pos-insecurities}, posterior corruption attacks is the most well-known, \cite{vitalik_weak_subj, snowwhite, badertscher2018ouroboros, long_range_survey}.
In a posterior corruption attack also known as founders' attack, long range attack, history revision attack or costless simulation, adversary acquires the old keys of the validators after they withdraw their stake.
It then re-writes the protocol history by building a conflicting \emph{attack} chain in private.
The attack chain forks from the canonical one at a past block, where the old keys constituted a majority of the validator set.
Subsequently, the adversary replaces the old validators with new ones under its control, and reveals the attack chain to the clients observing the system at a later time.
The adversary thus causes clients to adopt conflicting chains at different times.

Several solutions have been proposed to mitigate the posterior corruption attacks:
1) checkpointing via social consensus (\eg, \cite{vitalik_weak_subj,snowwhite,winkle,barber2012}); 
2) use of key-evolving signatures (\eg, \cite{algorand,kiayias2017ouroboros,badertscher2018ouroboros}); 
3) use of verifiable delay functions (\eg, \cite{solana}); 
4) timestamping on an existing PoW chain like Bitcoin~\cite{pikachu}.

\subsubsection{Social consensus}
Social consensus refers to a trusted committee, distinct from the \pos validators, which periodically checkpoints the finalized PoS blocks on the canonical chain.
It attempts to prevent posterior corruption attacks by making the blocks on the attack chain distinguishable from the checkpointed ones on the canonical chain.
For instance, in PoS Ethereum, clients identify the canonical chain with the help of checkpoints received from their peers.
Since no honest peer provides a checkpoint on a private chain, posterior corruption attacks cannot confuse new validators~\cite{weak-subjectivity}.

As clients might receive their checkpoints from different peers, it is often difficult to quantify the trust assumption placed on social consensus.
A small committee of peers shared by all clients would imply centralization of trust, and make security prone to attacks targeting few nodes.
Conversely, a large committee would face the problem of reaching consensus on checkpoints in a timely manner, leading to long withdrawal delays.
For instance, Avalanche, Cosmos zones and PoS Ethereum have withdrawal delays of $14$, $21$ and $13$\footnote{This is calculated for $130,000$ attesters with an average balance of $32$ ETH to accurately model the targeted attester numbers on PoS Ethereum using~\cite[Table 1]{eth_delay}.} days respectively~\cite{avalanche_delay,cosmos_delay,eth_delay}.

\subsubsection{Key-evolving signatures (KES)}
KES requires the validators to forget old keys so that a posterior corruption attack cannot be mounted.
Security has been shown for various PoS protocols using key-evolving signatures under the honest majority assumption for the current, active validators~\cite{algorand,badertscher2018ouroboros}.
This assumption is necessary to ensure that the majority of the active validators willingly forget their old keys so that they cannot be given to an adversary at a later time.
However, there might be a strong incentive for the validators to record their old keys in case they later become useful.
Thus, KES render the honest majority assumption itself questionable by asking honest validators for a favor which they may be tempted to ignore for future gain.
This observation is formalized in Section~\ref{sec:long-range-attack}, which shows that KES cannot provide slashable safety for \pos protocols.

\subsubsection{Verifiable Delay Functions (VDFs)}
VDFs can help the clients distinguish the canonical chain generated a long time ago from an attack chain created much later, thus providing an arrow of time for the clients and protecting the \pos protocol from posterior corruption attacks.
However, like KES, VDFs standalone cannot provide slashable safety for \pos protocols (\cf Section~\ref{sec:long-range-attack}).
Another problem with VDFs is the possibility of finding faster functions~\cite{fastervdf}, which can then be used to mount a posterior corruption attack.

\subsubsection{Timestamping the validator set}

Posterior corruption attacks can be thwarted by timestamping the PoS validator set on an external public blockchain such as Ethereum~\cite{bms} and Bitcoin~\cite{pikachu}.
For instance, Blockchain/BFT Membership Service (BMS)~\cite{bms} uses an Ethereum smart contract as a \emph{reconfiguration service} that records the changes in the \emph{current} validator set.
When validators request to join or leave the current set, the existing validators send transactions containing the new validator set to the contract.
Upon receiving transactions with the same new validator set from sufficiently many existing validators (\eg, from over $1/3$ of the current validator set), the contract replaces the current set with the new one.

The goal of BMS is to protect the PoS protocol against posterior corruption attacks, where the adversary creates an attack chain using the credentials of old validators.
These old validators are then replaced by new, adversarial ones that are distinct from those on the canonical chain.
However, if the honest validators constitute over $2/3$ of the current validator set, the new adversarial validators on the attack chain cannot be validated and recorded by the contract before the new honest validators on the canonical chain.
Hence, BMS enables the late-coming clients to identify and reject the attack chain, providing security to PoS chains as long as the fraction of active adversarial validators is bounded by $1/3$ at any given time (\cf Table~\ref{tab:just-btc}).
By preventing posterior corruption attacks, BMS can also reduce the withdrawal delay of the PoS protocols from weeks to the order of minutes.

To prevent posterior corruption attacks, BMS requires an honest supermajority assumption on the current set of validators, and as such, does not provide slashable safety.
If the adversary controls a supermajority of the validator set, it can create an attack chain in private, simultaneous with the public canonical chain, and post the validator set changes of the private chain to the contract before the changes on the canonical chain.
Then, the late-coming clients could confuse the canonical chain for an attack chain, and identify its validators as protocol violators, which implies lack of slashable safety.

Unlike Babylon, BMS cannot ensure liveness if the fraction of adversarial active validators exceeds $1/3$ (\cf Table~\ref{tab:just-btc}).
Similarly, whereas Babylon can provide \btc safety for young \pos chains and important transactions, BMS cannot provide Ethereum safety to the constituent \pos protocols even by adopting a slow finalization rule.
This is because the BMS contract only keeps track of the \pos validators, and does not receive any information about the \pos blocks themselves such as block headers.

\subsection{Hybrid PoW-PoS protocols}

A PoS protocol timestamped by \btc is an example of a {\em hybrid PoW-PoS protocol}, where consensus is maintained by both the PoS validators and Bitcoin miners.
One of the first such protocols is Casper FFG, a finality gadget used in conjunction with a longest chain PoW protocol~\cite{casper}. 
The finality gadget is run by PoS validators as an overlay to checkpoint and finalize blocks in an underlay PoW chain, where blocks are proposed by the miners.
The finality gadget architecture is also used in many other PoS blockchains, such as PoS Ethereum~\cite{gasper} and Polkadot~\cite{stewart2020grandpa}.
Bitcoing timestamping can be viewed as a "reverse" finality gadget, where the miners run an overlay PoW chain to checkpoint the underlay PoS chains run by their validators.
In this context, our design that combines \btc with PoS protocols leverages off ideas from a recent line of work on secure compositions of protocols~\cite{ebbandflow,sankagiri_clc,acc_gadget}.

Combining a proof-of-work blockchain with classical BFT protocols was also explored in Hybrid consensus~\cite{hybrid} and Thunderella~\cite{thunderella}.
Hybrid consensus selects the miners of a PoW-based blockchain as members of a rotating committee, which then runs a permissioned and responsive BFT protocol that is secure up to $1/3$ adversarial fraction.
Unlike Babylon, Hybrid consensus cannot use Bitcoin as is, with no change to Bitcoin, and does not provide accountable or slashable safety.
Thunderella combines a responsive BFT protocol with a slow, longest chain protocol to achieve responsiveness and security up to $1/4$ and $1/2$ adversarial fraction respectively.
To improve the adversarial fraction tolerable for liveness from $1/3$ to $1/2$, Babylon uses insights from Thunderella~\cite{thunderella}.
However, unlike Babylon, Thunderella (in the PoW setting) cannot use Bitcoin as is and does not provide slashable safety.

\subsection{Timestamping}
Timestamping data on \btc has been used for purposes other than resolving the limitations of \pos protocols.
For instance, timestamping on \btc was proposed as a method to protect Proof-of-Work (PoW) based ledgers against $51\%$ attacks~\cite{karakostas-checkpointing}.
However, this requires the \btc network to contain \emph{observing} miners, which publish timestamps from the PoW ledger to be secured, only if the block data is available.
This implies changing \btc to incorporate data-availability checks for external ledgers, whereas in our work, we analyze the limitations of security that can be achieved by using Bitcoin as is.

Two projects that use \btc to secure \pos and PoW child chains are Veriblock~\cite{veriblock-whitepaper} and Komodo~\cite{komodo}. 
Both projects suggest checkpointing child chains on \btc to help resolve forks. 
However, they lack proper security proofs, and do not analyze how attacks on \pos chains can be made slashable.
Another use-case of timestamping is posting commitments of digital content to \btc to ensure integrity of the data~\cite{bitcoin-timestamp}.
In this vein, \cite{gipp15a} implements a web-based service to help content creators prove possession of a certain information in the past by posting its timestamps on \btc.

Other projects planning to use Bitcoin checkpointing include Filecoin, which aims to prevent posterior corruption attacks using the checkpoints, and BabylonChain that serves as a checkpoint aggregator for Cosmos zones to decrease the checkpointing cost in the presence of many chains~\cite{filecoin-checkpointing,babylonchain}.

\section{Model}
\label{sec:model}
\indent
\textbf{Notation.}
Given a positive integer $m$, $[m]$ denotes the set $\{1,2,\ldots,m\}$.
We denote \pos blocks by capital $B$ and the \btc blocks by lowercase $b$.
An event is said to happen with overwhelming probability if it happens except with negligible probability in the security parameter $\lambda$.

\textbf{Validators and clients.}
In the client-server setting of state machine replication (SMR), there are two sets of nodes: validators and clients.
Validators receive transactions as input, and execute a SMR protocol.
Their goal is to ensure that the clients obtain the same sequence of transactions, thus, the same end state.
We assume that the transactions are batched into \emph{blocks}, and the sequence obtained by the clients is a blockchain, denoted by $\poschain$.
Thus, we hereafter refer to the SMR protocols as \emph{blockchain protocols}.

To output a chain, clients collect consensus messages from the validators and download blocks.
Upon collecting messages from a sufficiently large quorum of validators, each client outputs a chain.
Clients can choose to output a chain at arbitrary times, and might be offline in between.
The set of clients include honest validators, as well as external nodes that observe the protocol infrequently.

The blockchain protocol has \emph{external validity}: 
A transaction in a given chain is valid with respect to its prefix if it satisfies external validity conditions.
A block is valid if it only contains valid transaction.
Clients output only the valid blocks.

\textbf{Blocks and chains.}
Each block consists of a block header and transaction data.
Block headers contain (i) a pointer (\eg, hash) to the parent block, (ii) a vector commitment (\eg, Merkle root) to the transactions data, and (iii) protocol related messages.
The total order across the blocks in a chain together with the ordering of the transactions by each block gives a total order across all transactions within the chain.

For a block $B$, we say that $B \in \poschain$, if $B$ is in the chain $\poschain$.
Similarly, we say $\tx \in \poschain$, if the transaction $\tx$ is included in a block that is in $\poschain$.
A block $B$ is said to \emph{extend} $B'$, if $B'$ can be reached from $B$ by following the parent pointers.
Conversely, the blocks $B$ and $B'$ are said to \emph{conflict} with each other if $B'$ does not extend $B$ and vice versa. 
The notation $\poschain_1 \prec \poschain_2$ implies $\poschain_1$ is a strict prefix of $\poschain_2$, whereas $\poschain_1 \preceq \poschain_2$ implies $\poschain_1$ is either a prefix of or the same as $\poschain_2$.
The chains $\poschain_1$ and $\poschain_2$ conflict with each other if they contain conflicting blocks.

\textbf{Environment and adversary.}
Transactions are input to the validators by the environment $\Env$.
Adversary $\Adv$ is a probabilistic polynomial time algorithm.
Before the protocol execution starts, adversary can corrupt a subset of validators, which are subsequently called \emph{adversarial}.
These validators surrender their internal states to the adversary and can deviate from the protocol arbitrarily (Byzantine faults) under the adversary's control.
The remaining validators are called \emph{honest} and follow the blockchain protocol as specified.
Time is slotted, and the validators are assumed to have synchronized clocks\footnote{Bounded clock offset can be captured as part of the network delay}.

\textbf{Networking.} 
Validators can broadcast messages to each other and the clients.
Messages are delivered by the adversary, which can observe a message before it is received.
Network is synchronous, \ie, the adversary delivers the messages sent by an honest validator to \emph{all} other honest validators and clients within $\Delta$ slots.
Here, $\Delta$ is a known parameter.
Upon joining the network at slot $t$, new validators and late-coming clients receive all messages broadcast by the honest validators before slot $t-\Delta$.

\textbf{Security.}
Let $\poschain^{\client}_r$ denote the chain obtained by a client $\client$ at slot $r$.
Let $\Tconfirm$ be a polynomial function of $\lambda$, the security parameter of the blockchain protocol.
We say that the protocol is $\Tconfirm$-secure if the following properties are satisfied:
\begin{itemize}
    \item \textbf{Safety:} For any slots $r,r'$ and clients $\client,\client'$, either $\poschain_{r}^{\client}$ is a prefix of $\poschain_{r'}^{\client'}$ or vice versa. 
    For any client $\client$, $\poschain^{\client}_{r}$ is a prefix of $\poschain^{\client}_{r'}$ for all slots $r$ and $r'$, $r' \geq r$.
    \item \textbf{$\mathbf{\Tconfirm}$-Liveness:} If $\mathcal{Z}$ inputs a transaction $\tx$ to an honest validator at some slot $r$, then, $\tx \in \poschain_{r'}^{\client}$ for any slot $r' \geq r+\Tconfirm$ and any client $\client$.
\end{itemize}

Let $f$ denote the upper bound on the number of adversarial validators over the protocol execution.
A \pos protocol provides $\fS$-safety if it satisfies safety whenever $f \leq \fS$.
Similarly, a \pos protocol provides $\fL$-$\Tconfirm$-liveness if it satisfies $\Tconfirm$-liveness whenever $f \leq \fL$.

\textbf{Accountable safety.}
To formalize accountable safety, we use the forensic analysis model by Sheng et al.~\cite{forensics}.
During normal execution, validators exchange messages (\eg, blocks or votes), and each validator records all protocol-specific messages received in an execution transcript.
If a client observes a safety violation, it invokes a forensic protocol by sending the conflicting chains to the validators.
Then, the honest validators send their transcripts to the client.
The forensic protocol takes these transcripts, and outputs a proof that identifies $f$ adversarial validators as protocol violators.
This proof is subsequently sent to all other clients, and serves as evidence that the identified validators have irrefutably violated the protocol rules.

\begin{definition}
\label{def:accountable-safety}
A blockchain protocol is said to provide accountable safety with resilience $f$ if when there is a safety violation, (i) at least $f$ adversarial validators are identified by the forensic protocol as protocol violators, and (ii) no honest validator is identified, with overwhelming probability.
Such a protocol provides \emph{$f$-accountable-safety}.
\end{definition}
$f$-accountable-safety implies $f$-safety, and as such is a stronger property.
Accountably-safe protocols considered in this paper have the same safety and \asr, though this is not necessarily true for all protocols.

\textbf{Proof-of-Stake (PoS) protocols.}
In a permissioned protocol, the set of validators stays the same over time.
In contrast, PoS protocols allow changes in the validator set.
In a \pos protocol, nodes lock up stake in a contract executed on the blockchain to become validators.
To distinguish the validators that are currently executing the protocol at a given time from the old validators, we will refer to the current validators as \emph{active}.
We assume that each active validator has the same stake, and is equipped with a unique cryptographic identity.
Once a validator becomes inactive, it immediately becomes adversarial if it has not been corrupted before, thus might engage in posterior corruption attacks.
Protocol execution starts with an initial committee of $n$ validators, and the contract allows at most $n$ validators to be active at any slot.
We assume that at all slots, there is a non-empty queue of nodes waiting to stake their coins, and each validator leaving the active set can be immediately replaced by a new validator\footnote{Many blockchains such as CosmosHub and Osmosis designate the top $n$ participants in the queue with the largest stake as the validator set~\cite{cosmoshub-validators,osmosis-numval}.}.

\pos protocols proceed in epochs starting at $1$ and measured in the number of PoS blocks.
For instance, if each epoch lasts $m$ blocks and a client observes a chain of $5m+3$ \pos blocks, then the first $m$ blocks belong to the first epoch, the second $m$ blocks to the second one, and so on, until the last $3$ blocks, which are part of the on-going epoch $6$.
During an epoch, the active validator set is fixed and the execution of the \pos protocol mimicks that of a permissioned blockchain protocol.
Across epochs, the \pos protocol supports changes in the active set through withdrawals.
An active validator can send a \emph{withdrawal} request to the protocol to leave the active set and retrieve its staked coin.
At the end of each epoch, clients inspect their chains and identify the validators whose withdrawal requests have been included in the chain.
Then, at the next epoch, these validators are replaced with new ones from the staking queue.

When a validator leaves the active set, its coin is \emph{not} necessarily released by the contract immediately.
Different \pos protocols can have different \emph{withdrawal delays}.
The withdrawal mechanism is central to security, and will be analyzed in later sections.
If a withdrawing validator's stake is first released in the view of a client $\client$ at slot $r$, the validator is said to have \emph{withdrawn its stake} in $\client$'s view at slot $r$.

\textbf{Model for Bitcoin.}
We model Bitcoin using the backbone formalism~\cite{backbone} and treat it as a black-box blockchain protocol, which accepts transactions and outputs a totally ordered sequence of \btc blocks containing these transactions.
To output the \btc chain \emph{confirmed} with parameter $k$ at slot $r$, a client $\client$ takes the longest chain of \btc blocks in its view, removes the last $k$ blocks, and adopts the $k$ deep prefix as its \btc chain at slot $r$.
We denote the \btc chain outputted by $\client$ at slot $r$ by $\powchain^{\client}_r$.
If a Bitcoin block $b$ or transaction $\tx$ first appears in the confirmed \btc chain (hereafter called the \btc chain) of a client $\client$ at slot $r$, we say that $\tx$ or $b$ has become \emph{confirmed} in $\client$'s view at slot $r$.
We say that Bitcoin is secure with parameter $k$ if it satisfies safety and $\Tconfirm$-liveness given the $k$-deep confirmation rule. 
Here, $\Tconfirm$ satisfies the following proposition:
\begin{proposition}[Chain Growth]
\label{prop:chain-growth}
Suppose \btc is secure with parameter $k$ with overwhelming probability.
Then, for any client $\client$, if a transaction $\tx$ is sent to \btc at slot $r$ such that $|\powchain^{\client}_{r-3\Delta}|=\ell$, $\tx \in \powchain^{\client}_{r'}$ for any $r' \geq r + \Tconfirm$, and $|\powchain^{\client}_{r + \Tconfirm}| \leq \ell+k$ with overwhelming probability.
\end{proposition}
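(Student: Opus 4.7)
The plan is to derive both parts of the conclusion from the hypothesis that Bitcoin is secure with parameter $k$, which in the backbone model packages together $\Tconfirm$-liveness (used for the inclusion statement) and chain growth (used for the length statement).

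For the inclusion claim, I would invoke Bitcoin's $\Tconfirm$-liveness under the $k$-deep confirmation rule directly: any transaction $\tx$ submitted at slot $r$ is in $\powchain^{\client}_{r+\Tconfirm}$ with overwhelming probability, and by Bitcoin safety (persistence) it then remains in $\powchain^{\client}_{r'}$ for every subsequent slot $r' \geq r+\Tconfirm$. The reason is that removing $\tx$ from the confirmed chain of $\client$ at a later slot would mean $\powchain^{\client}_{r+\Tconfirm}$ and $\powchain^{\client}_{r'}$ are not consistent in the prefix sense, contradicting Bitcoin safety.

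For the chain-length bound, I would appeal to the chain-growth lemma from the backbone analysis~\cite{backbone}. The length of an honest client's $k$-deep confirmed chain is tightly concentrated around a predictable number of blocks per slot, governed by the honest mining rate $\tau$ and the total mining rate $\rho$. The hypothesis $|\powchain^{\client}_{r-3\Delta}|=\ell$ anchors the chain at slot $r-3\Delta$; by time $r+\Tconfirm$ the elapsed window is $\Tconfirm + 3\Delta$ slots. The $3\Delta$ buffer is exactly what is needed to line up the two endpoints of this window with the miners' global view at the corresponding physical times, accounting for $\Delta$ slots of propagation delay on each side (one $\Delta$ for blocks counted in $\powchain^{\client}_{r-3\Delta}$ to have already propagated before being counted, and further slack for freshly produced blocks in the window to reach $\client$ by slot $r+\Tconfirm$). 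A standard Chernoff concentration on honest block arrivals over this window then yields the stated relation between $|\powchain^{\client}_{r+\Tconfirm}|$ and $\ell+k$, with overwhelming probability in the security parameter $\lambda$.

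The main obstacle is the bookkeeping around these $\Delta$-offsets: one must carefully track when a block is produced, when it propagates to all honest miners, and when it enters $\client$'s $k$-deep view, so that the client's local views at slot $r-3\Delta$ and slot $r+\Tconfirm$ correspond to well-defined intervals of backbone block production to which the backbone chain-growth lemma can be applied. Once this mapping is pinned down, the probabilistic step is a routine Chernoff tail bound on honest-block arrivals over an interval of length $\Theta(\Tconfirm+\Delta)$, and the calibration of $\Tconfirm$ in terms of $k$, $\Delta$, the mining rates, and $\lambda$ follows the template of~\cite{backbone} without modification.
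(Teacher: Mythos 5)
The paper does not actually prove this proposition: immediately after stating it, the text simply asserts that the backbone analysis~\cite{backbone} yields a parameter $k$ for which Bitcoin is secure and Proposition~\ref{prop:chain-growth} holds, and leaves it at that. Your sketch is therefore an expansion of the argument the paper delegates to~\cite{backbone}, which is the right thing to attempt, and the first half of your plan is sound: inclusion of $\tx$ in $\powchain^{\client}_{r+\Tconfirm}$ follows from liveness, and membership in every later $\powchain^{\client}_{r'}$ then follows from the safety/persistence property of $\powchain$, exactly as you say.

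There is, however, a genuine gap in the second half. You propose to derive the bound $|\powchain^{\client}_{r+\Tconfirm}| \leq \ell + k$ from ``a standard Chernoff concentration on honest block arrivals.'' But a tail bound on honest block arrivals yields a \emph{lower} bound on chain growth over a window; it cannot by itself give the \emph{upper} bound the proposition needs, because adversarial miners are perfectly entitled to contribute blocks that extend the longest chain and thereby increase its length. To bound $|\powchain^{\client}_{r+\Tconfirm}|$ from above you must control the \emph{total} block production (honest plus adversarial) over the interval $[r-3\Delta,\, r+\Tconfirm]$ — the longest chain in anyone's view at slot $t$ has length at most the total number of blocks mined by slot $t$ — and then use the common-prefix/safety property to identify $\powchain^{\client}_{r-3\Delta}$ (of length $\ell$) as a prefix of $\powchain^{\client}_{r+\Tconfirm}$, so that the increment is bounded by the blocks attributable to the window. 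You do list the total mining rate $\rho$ among the relevant parameters, so the ingredients are on the table, but the concluding Chernoff step should be applied to total arrivals rather than to honest arrivals, together with the common-prefix anchoring; as written, the step would fail against an adversary that simply mines honestly. The rest — the $\Delta$-offset bookkeeping and the joint calibration of $\Tconfirm$ and $k$ so that both the liveness latency and the growth bound hold for the same $k$ — matches the spirit of what~\cite{backbone} provides.
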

If the adversarial fraction of the mining power is less than $1/2-\epsilon$ for some $\epsilon>0$, then there exists a parameter $k$ polynomial in $\lambda$ such that \btc is secure with parameter $k$ and satisfies the above proposition~\cite{backbone}.

\section{Optimal safety}
\label{sec:opt_safety}

We next formalize the concept of slashable safety, and analyze a simplified version of Babylon, called Babylon 1.0, which achieves the optimal safety guarantees.
The full protocol with the optimal liveness guarantees is presented in Section~\ref{sec:liveness}.

\subsection{Slashable safety}
\label{sec:slashable-safety}
A useful feature of the \pos protocols is the ability to impose financial punishments on the protocol violators through the slashing of their locked stake.
Slashable security extends the notion of accountability to \pos protocols.
A validator $\validator$ becomes \emph{slashable} in the view of a client $\client$ at slot $r$ if, 
\begin{enumerate}
    \item $\client$ has received or generated a proof through the forensic protocol by slot $r$ such that $\validator$ is irrefutably identified as a protocol violator,
    \item $\validator$ has not withdrawn its stake in $\client$'s view by slot $r$.
\end{enumerate}

In practice, once the contract that locks $\validator$'s stake receives a proof accusing $\validator$, it will attempt to slash $\validator$'s stake if it is not withdrawn yet.
However, if liveness is violated, the chain might not execute new transactions, and slash $\validator$'s stake.
Consequently, we opted to use the word `slashable' to indicate the conditional nature of \emph{slashing} on the resumption of chain activity after the security violation.
We discuss the conditions under which stake can be slashed inSection~\ref{sec:slashing}.

\begin{definition}
\label{def:slashable-safety}
A blockchain protocol is said to provide slashable safety with resilience $f$, if when there is a safety violation, at least $f$ adversarial validators become slashable in the view of all clients, and (ii) no honest validator becomes slashable in any client's view, with overwhelming probability.
Such a protocol provides $f$-slashable-safety.
\end{definition}

Slashable safety resilience of $f$ implies that in the event of a safety violation, all clients identify $f$ or more adversarial validators as protocol violators before they withdraw their stake, and no client identifies any honest validator as a protocol violator, with overwhelming probability.
By definition, \pos protocols that provide $f$-slashable-safety also provide $f$-accountable safety and $f$-safety.

\subsection{Slashable safety is not possible without external trust}
\label{sec:long-range-attack}
Without additional trust assumptions, no \pos protocol can provide slashable safety.
Suppose there is a posterior corruption attack, and a late-coming client observes two conflicting chains. 
As the client could not have witnessed the attack in progress, it cannot distinguish the attack chain from the canonical one.
Hence, it cannot irrefutably identify any active validator on either chain as a protocol violator.
Although the client might notice that the old validators that have initiated the attack violated the protocol rules by signing conflicting blocks, these validators are not slashable as they have already withdrawn their stake.
Hence, no validator becomes slashable in the client's view.
This observation is formalized by the following theorem, proven in Appendix~\ref{sec:appendix-slashable-safety}:
\begin{theorem}
\label{thm:pos-non-slashable}
Assuming common knowledge of the initial set of active validators, without additional trust assumptions, no \pos protocol provides both $\fS$-slashable-safety and $\fL$-$\Tconfirm$-liveness for any $\fS,\fL>0$ and $\Tconfirm<\infty$.
\end{theorem}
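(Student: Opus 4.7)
The plan is a classical posterior-corruption indistinguishability argument, specialized to the slashable-safety setting. I would build two mirror-image executions of any purportedly $\fS$-slashable-safe, $\fL$-$\Tconfirm$-live \pos protocol $\PI$ whose transcripts are identically distributed in a late-coming client's view but swap the roles of ``canonical'' and ``attack'' committee. The decisive feature of slashable safety, as opposed to mere accountable safety, is that a withdrawn validator is by definition never slashable; together with the absence of any external arrow of time, this forces the forensic protocol into an unwinnable choice between two symmetric worlds.

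First I would fix $\PI$ and run a benign execution $E_1$ in which every validator is honest. By $\fL$-$\Tconfirm$-liveness (with $\fL\geq 1$ and $\Tconfirm<\infty$) the environment $\Env$ can submit withdrawal transactions for every initial validator and wait $\Tconfirm$ slots, forcing the committee rotation to install a fresh active set $V_1$ disjoint from the initial validators; let $r_1$ be a slot past this rotation at which the initial stake has been released in every client's view. The canonical chain $\poschain_1$ is controlled on its tail by $V_1$.

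Next I would construct $E_2$: after the initial validators have withdrawn in $E_1$, an adversary acquires their now stake-free keys, replays $\PI$ from genesis under $\Env$-supplied alternative inputs, and produces a conflicting chain $\poschain_2$ that rotates into a distinct fresh committee $V_2$, disjoint from both the initial set and $V_1$. The network schedule lets a late-joining client $\client$ first come online at some slot $r_2 \gg r_1$ and delivers both chains. By construction the labelled pair $(\poschain_1,V_1,\poschain_2,V_2)$ is perfectly symmetric under swapping: both chains are well-formed $\PI$-transcripts from the common genesis using the same initial keys, and $\client$ has no external timestamp that would tell which chain was produced first. Since $\client$ sees a safety violation, $\fS$-slashable safety requires the forensic protocol, with overwhelming probability, to accuse $\fS\geq 1$ adversarial validators and no honest one. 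The initial validators have withdrawn and are not slashable, so the accusation must name some $\validator \in V_1 \cup V_2$; but by symmetry, whichever side the verdict targets, there is an execution consistent with $\client$'s view in which that $\validator$ is honest and still active, violating clause (ii) of Definition~\ref{def:slashable-safety} with probability at least $1/2$, hence non-negligibly.

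The main obstacle is making the transcript indistinguishability airtight in the cryptographic model: one must exhibit a simulator that, given only $\Env$'s inputs and the withdrawn initial keys, samples an $E_2$ transcript whose joint distribution over network traces equals that of the honest $E_1$ after relabeling $V_1 \leftrightarrow V_2$. Absent any external trusted source, this reduces to the observation that every protocol message produced by a key is offline-simulatable by whoever currently holds that key; the remaining care is to check that the committee rotation in $E_1$ actually terminates (guaranteed by $\Tconfirm<\infty$ and $\fL\geq 1$) and that the ``common knowledge of initial validators'' hypothesis, rather than helping the protocol, is precisely what lets the adversary in $E_2$ identify whose keys to buy.
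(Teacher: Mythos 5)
Your proposal is correct and follows essentially the same strategy as the paper's proof: a posterior-corruption indistinguishability argument in which the initial validators withdraw (exploiting liveness), then forge a conflicting chain with a fresh committee, so that the forensic protocol—unable to accuse withdrawn validators—must accuse the post-rotation committee, but by symmetry any such accusation hits an honest committee in the mirror world. The paper organizes this as four explicit worlds (two to pin down the clients' honest-case outputs, two mirror attack worlds, with two clients reading from disjoint committees), which yields a cleaner "Q'' must be accused with overwhelming probability" step rather than your $\max(p,1-p)\geq 1/2$ averaging, but the core insight and structure are the same.
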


A related but different theorem by Daian et al. states that without additional trust assumptions, it is impossible to design a secure \pos protocol, even under the honest majority assumption for the active validators, due to posterior corruption attacks~\cite[Theorem 2]{snowwhite}.
However, key-evolving signatures (KES) have been shown to prevent these attacks and help guarantee security under honest majority~\cite{david2018ouroboros}.
In contrast, Theorem~\ref{thm:pos-non-slashable} states that external trust assumptions are needed to build \pos protocols with slashable safety, which is a stronger property than safety under honest majority.
To emphasize this point, the following paragraph presents an attack that illustrates how KES falls short of providing slashable safety despite mitigating posterior corruption attacks.

{\bf An attack on slashable safety given key-evolving signatures and VDFs.}
Suppose the adversary controls a supermajority of the active validators.
In the case of KES, the adversarial active validators record their old keys, and use them to initiate a posterior corruption attack after withdrawing their stake.
They can thus cause a safety violation, yet, cannot be slashed as the stake is withdrawn.
In the case of VDFs, the adversarial active validators again construct a private attack chain while they work on the canonical one, and run multiple VDF instances simultaneously for both chains.
After withdrawing their stake, they publish the attack chain with the correct VDF proofs, causing a safety violation without any slashing of their stake.

\subsection{Babylon 1.0 protocol with fast finality}
\label{sec:btc-protocol}
\begin{figure*}
    \centering
    \includegraphics[width=0.8\linewidth]{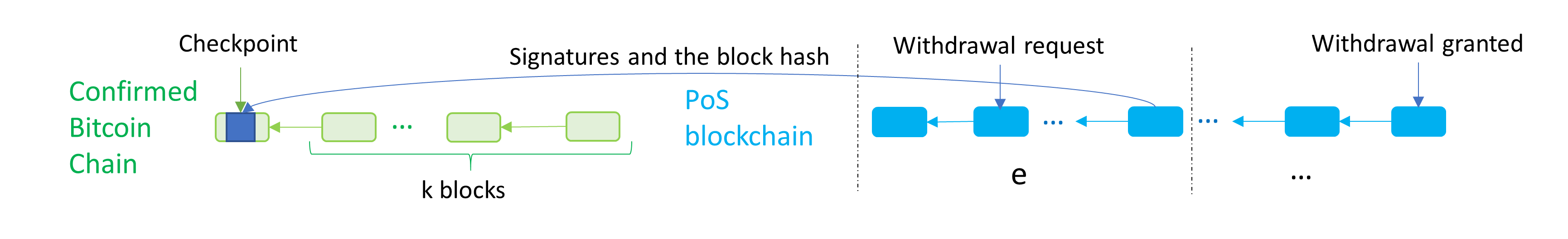}
    \caption{There is an epoch $e$ block containing a withdrawal request, and the hash of the last epoch $e$ block and the signatures from the corresponding active validator set appear in a confirmed \btc block in a client's view. The validator is granted permission to withdraw its stake once the \btc block with the checkpoint becomes $k$ deep in the confirmed \btc chains of sufficiently many validators.}
    \label{fig:withdrawal}
\end{figure*}

\setlength{\textfloatsep}{0.3cm}
\setlength{\floatsep}{0.3cm}
\begin{algorithm}
    \captionsetup{font=small} 
    \caption{The function used by a bootstrapping client $\client$ to find the canonical PoS chain $\mathcal{L}^\client_r$ at some slot $r$. It takes the blocktree $\T$ of finalized PoS blocks and the confirmed \btc chain $\powchain$ in $\client$'s view at slot $r$ as input, and outputs $\poschain^{\client}_r$. The function $\textsc{GetCkpts}$ outputs the sequence of checkpoints on the given Bitcoin chain $\powchain$ in the order of their appearance. The function $\textsc{GetActVals}$ takes a PoS chain $L$, an epoch number $\mathrm{ep}$, and outputs the active validators of epoch $\mathrm{ep}$ as determined by the given chain. The function $\textsc{IsValid}$ checks if the given checkpoint contains signatures by $2/3$ of the given active validators on its block hash, and if its epoch number matches the given number $\mathrm{ep}$. The function $\textsc{GetChain}$ returns the finalized chain within the given blocktree that ends at the preimage of the given block hash. It returns $\bot$ if a block at the preimage of the hash or its prefix chain is unavailable or not finalized. The function $\mathrm{ep}(.)$ returns the epoch of the last block within the input PoS chain. The function $\textsc{IsLast}$ returns true iff the given PoS chain ends at a block that is the last block of its epoch. The function $\textsc{GetChildren}$ returns the children of the given block within the blocktree. $L_0$ denotes the genesis PoS block and the letter $L$ is reserved for PoS chains.}
    \label{alg.bitcoin.checkpointing}
    \begin{algorithmic}[1]\small
    \Function{\sc OutputPosChain}{\T, \powchain}
        \Let{h_1,\ldots,h_m}{\textsc{GetCkpts}(\powchain)}
        \Let{\aux, \mathrm{ep}, \mathrm{actval}}{L_0, 1, \textsc{GetActVals}(L_0, 1)}
        \For{$i=1$ to $m$} \Comment{Obtain the checkpointed chain}
            \If{$\textsc{IsValid}(h_i,\mathrm{actval}, \mathrm{ep})$} \Comment{Check validity.}
                \label{line:isvalid}
                \Let{L_i}{\textsc{GetChain}(\T, h_i)}
                \If{$L_i = \bot$}
                    \label{line:btc2}
                    \State\Return $\aux$ \Comment{Stall: Data Unavailable}
                \ElsIf{$\aux \preceq L_i \land \mathrm{ep}(L_i)=\mathrm{ep}$}
                    \label{line:btc1}
                    \Let{\aux}{L_i}
                    \label{line:update}
                    \Comment{Update $\aux$.}
                    \If{$\textsc{IsLast}(L_i)$}
                        \Let{\rm ep}{ep+1}
                        \Let{\rm actval}{\textsc{GetActVals}(L_i, \mathrm{ep})}
                    \EndIf
                \EndIf
            \EndIf
        \EndFor
        \Let{\poschain, \mathrm{chs}}{\aux, \textsc{GetChildren}(\T,\aux[-1])}
        \label{line:btc3}
        \While{$|\mathrm{chs}|=1$}
            \Let{\poschain}{\poschain \concat \mathrm{chs}}~\Comment{Add the new child to $\mathcal{L}$}
            \Let{\rm chs}{\textsc{GetChildren}(\T,\mathrm{ch})}
        \EndWhile
        \State\Return $\poschain$ 
    \EndFunction
    \end{algorithmic}
\end{algorithm}
\setlength{\textfloatsep}{0.3cm}
\setlength{\floatsep}{0.3cm}

To provide \pos protocols with slashable safety, \btc can be used as the additional source of trust.
Babylon 1.0 is a checkpointing protocol which can be applied on any \pos blockchain protocol with accountable safety (\eg, PBFT~\cite{pbft}, Tendermint~\cite{tendermint}, HotStuff~\cite{yin2018hotstuff}, Streamlet~\cite{streamlet}, PoS Ethereum finalized by Casper FFG~\cite{casper,gasper}) to upgrade the accountability guarantee to slashable safety.
We describe Babylon 1.0 applied on Tendermint, which satisfies $n/3$-accountable safety.

Let $\client$ denote a client (\eg, a late-coming client, an honest validator), whose goal is to output the \emph{canonical} \pos chain $\poschain$ that is consistent with the chains of all other clients.
We assume that $\client$ is a full node that downloads the \pos block headers and the corresponding transaction data.
It also downloads the \btc blocks and outputs a \btc chain $\powchain$, confirmed with parameter $k$.
In the following description, \emph{finalized} blocks refer to the \pos blocks finalized by Tendermint, \ie, the blocks with a $2n/3$ quorum of pre-commit signatures by the corresponding active validator set.
A block finalized by Tendermint is not necessarily included in the canonical \pos chain $\poschain$ of a client.
For instance, when the adversary engages in a posterior corruption attack, conflicting \pos blocks might seem finalized, forcing the late-coming clients to choose a subset of these blocks as part of their canonical \pos chain.
Algorithm~\ref{alg.bitcoin.checkpointing} describes how clients interpret the \btc checkpoints to agree on the canonical chain.

\paragraph{Checkpointing the \pos chain}
At the end of each epoch, the honest active validators sign the hash of the last finalized \pos block of the epoch (the hash is calculated using a collision-resistant hash function).
Then, an honest active validator $\validator$ sends a \btc transaction called the \emph{checkpoint transaction}.
The transaction contains the hash of the block, its epoch and a quorum of signatures on the hash from over $2n/3$ active validators of the epoch.
These signatures can be a subset of the (pre-commit) signatures that have finalized the block within the \pos protocol.
We hereafter refer to the block hash, epoch number and the accompanying signatures within a checkpoint transaction collectively as a \emph{checkpoint}.

Suppose a client $\client$ observes multiple finalized and conflicting \pos blocks. 
As Tendermint provides $n/3$-accountable safety, $\client$ can generate a proof that irrefutably identifies $n/3$ adversarial \pos validators as protocol violators.
In this case, $\client$ posts this proof, called the \emph{fraud proof}, to Bitcoin.

\paragraph{Fork-choice rule} 
(Alg.~\ref{alg.bitcoin.checkpointing}, Figure~\ref{fig:checkpointing})
To identify the canonical \pos chain $\poschain^{\client}_r$ at some slot $r$, $\client$ first downloads the finalized \pos blocks and constructs a blocktree denoted by $\mathcal{T}$. 
The blocktree contains the quorums of pre-commit signatures on its finalized blocks.
Let $h_i$, $i \in [m]$, denote the sequence of checkpoints on Bitcoin, listed from the genesis to the tip of $\powchain^{\client}_r$, the confirmed \btc chain in $\client$'s view at slot $r$.
Starting at the genesis \pos block $B_0=L_0$, $\client$ constructs a \emph{checkpointed} chain $\aux^{\client}_r$ of finalized \pos blocks by sequentially going through the checkpoints.
For $i=1,\ldots,m$, let $L_i$ denote the chain of PoS blocks ending at the block (denoted by $B_i$) at the preimage of the hash within $h_i$, if $B_i$ and its prefix chain are available in $\client$'s view at slot $r$.

If a client observes a checkpoint $h_i$ with epoch $e$ on its Bitcoin chain at slot $r$, yet does not see by slot $r+\Delta$ all the block data of $L_i$ and the respective validator signatures finalizing these blocks, then blocks on chain $L_i$ are deemed to be \emph{unavailable} and \emph{not finalized} in the client's view.

Suppose $\client$ has gone through the sequence of checkpoints until $h_j$ for some $j \in [m]$, and obtained the chain $L$ ending at some block $B$ from epoch $e$ as the latest checkpointed chain based on the blocktree $\mathcal{T}$ and $h_1 \ldots, h_j$.
Define $\tilde{e}=e+1$ if $B$ is the last block of its epoch; and $\tilde{e}=e$ otherwise.
The checkpoint $h_{j+1}$ is said to be \emph{valid} (with respect to its prefix) if $h_{j+1}$ contains $\tilde{e}$ as its epoch, and over $2n/3$ signatures on its block hash by the active validators of $\tilde{e}$ (Alg.~\ref{alg.bitcoin.checkpointing}, Line~\ref{line:isvalid}).
\begin{enumerate}
    \item (Alg.~\ref{alg.bitcoin.checkpointing}, Line~\ref{line:btc1})
    If (i) $h_{j+1}$ is valid, (ii) every block in $L_{j+1}$ is available and finalized in $\client$'s view by the active validators of their respective epochs, (iii) the block $B_{j+1}$ is from epoch $\tilde{e}$, and (iv) $L \preceq L_{j+1}$, then $\client$ sets $L_{j+1}$ as the checkpointed chain.
    \item \textbf{Emergency Break:} (Alg.~\ref{alg.bitcoin.checkpointing}, Line~\ref{line:btc2}, Figure~\ref{fig:stalling}) If (i) $h_{j+1}$ is valid, and (ii) a block in $L_{j+1}$ is either unavailable or not finalized by its respective active validators in $\client$'s view, then $\client$ stops going through the sequence $h_j$, $j \in [m]$, and outputs $L$ as its final checkpointed chain\footnote{The client $\client$ knows the active validator set for all epochs $e \leq \tilde{e}$. Every block in its checkpointed chain $L$ is available in $\client$'s view.
    If $B$ is the last block of epoch $e$, $\client$ can infer the active validator set of epoch $e+1$ from $L$.}.
    This premature stalling of the fork-choice rule is necessary to prevent the data availability attack described by Figure~\ref{fig:stalling}.
    \item If both of the cases above fail, $\client$ skips $h_{j+1}$ and moves to $h_{j+2}$ and its preimage block as the next candidate.
\end{enumerate}
Unless case (2) happens, $\client$ sifts through $h_j$, $j \in [m]$, and subsequently outputs the checkpointed chain $\aux^{\client}_r$.
If (2) happens, $\client$ outputs $L$ as both the checkpointed chain $\aux^{\client}_r$ and the canonical \pos chain $\poschain^{\client}_r$.
However, if $\client$ had previously outputted finalized blocks extending $\aux^{\client}_r$ in its canonical \pos chain, it does not roll back these blocks and freezes its old chain.
Moreover, if (2) happens, after waiting for a period of $2\Delta$, $\client$ sends checkpoint transactions to \btc for \emph{all} the blocks extending the last checkpointed block on its $\poschain^{\client}_r$ to enforce slashable safety (\cf proof of Theorem~\ref{thm:btc-only-slashable-safety}).
If the block at the tip of $\poschain^{\client}_r$ is already checkpointed, as in the case of a bootstrapping late-coming client, then there is no need to send any more checkpoint transactions.
Although an online client might have to send checkpoints for all the blocks on $\poschain^{\client}_r$ from the last epoch, this event happens only when the adversary controls over $2/3$ of the validators and can post an unavailable checkpoint.

Finally, suppose $\client$ outputs a checkpointed chain with some block $B$ at its tip.
Then, starting at $B$, $\client$ traverses a path to the leaves of the blocktree (Alg.~\ref{alg.bitcoin.checkpointing}, Line~\ref{line:btc3} onward).
If there is a single chain from $B$ to a leaf, $\client$ outputs the leaf and its prefix chain as the \pos chain $\poschain^{\client}_r$.
Otherwise, $\client$ identifies the highest \pos block $B'$ (potentially the same as $B$) in the subtree of $B$, which has two or more siblings, and outputs $B'$ and its prefix chain as $\poschain^{\client}_r$.
Since $\client$ attaches the latest finalized \pos blocks to the tip of its \pos chain $\poschain^{\client}_r$, as long as there are no forks among finalized \pos blocks, the latency for transactions to enter $\poschain^{\client}_r$ matches the latency of the \pos protocol, hence rendering \emph{fast finality} to the protocol.

\paragraph{Stake withdrawals}
(Figure~\ref{fig:withdrawal})
To withdraw its stake, a validator $\validator$ first sends a \pos transaction called the \emph{withdrawal request}.
It is granted permission to withdraw in a client $\client$'s view at slot $r$ if
\begin{enumerate}
    \item The withdrawal request appears in a \pos block $B$ in $\client$'s checkpointed chain $\aux^{\client}_r$.
    \item Checkpoint of $B$ or one of its descendants appears in a \btc block that is at least $k$ deep in $\powchain^{\client}_r$.
    \item The client has not observed any fraud proof in $\powchain^{\client}_r$ that identifies $\validator$ as a protocol violator.
    Similarly, the client has not observed checkpoints for conflicting \pos blocks in $\powchain^{\client}_r$, that implicates $\validator$ in a protocol violation.
\end{enumerate}

Once the above conditions are satisfied in the validator $\validator$'s view, it sends a \emph{withdrawal transaction} to the \pos chain.
An honest active validator includes this transaction in its \pos block proposal if the above conditions are satisfied in its view.
Upon observing a \pos proposal containing a withdrawal transaction, the honest active validators wait for $\Delta$ slots before they sign the block.
Afterwards, they sign the proposal only if the above conditions are also satisfied in their views.
By synchrony, if the conditions are satisfied in an honest proposer's view at the time of proposal, then they are satisfied in the view of all honest active validators at the time of signing.
Thus, the $\Delta$ delay ensures that $\validator$'s transaction is finalized by the \pos chain despite potential, short-lived split views among the honest active validators.
Once the transaction is finalized, the on-chain contract releases $\validator$'s staked coin.

\paragraph{Slashing}
Suppose a validator $\validator$ has provably violated the protocol rules.
Then, the contract on the \pos chain slashes $\validator$'s locked coins upon receiving a fraud proof incriminating $\validator$ if the \pos chain is live and $\validator$ has not withdrawn its stake.

If a client $\client$ observes a fraud proof incriminating $\validator$ in its confirmed \btc chain $\powchain^{\client}_r$ at slot $r$, $\client$ does not consider $\validator$'s signatures on future checkpoints as valid.
It also does not consider $\validator$'s signatures as valid when verifying the finality of the \pos blocks \emph{checkpointed} on \btc for the first time after the fraud proof.
For instance, suppose $\client$ observes a checkpoint for a \pos block $B_j$ in its \btc chain.
While verifying whether $B_j$ and the blocks in its prefix (that have not been checkpointed yet) are finalized, $\client$ considers signatures only by the active validators that have \emph{not} been accused by a fraud proof appearing in the prefix of the checkpoint on \btc.

\subsection{Security analysis}
\label{sec:btc-protocol-analysis}
\begin{proposition}
\label{prop:auxiliary-chain}
Suppose \btc is safe with parameter $k$ with overwhelming probability.
Then, the checkpointed chains held by the clients satisfy safety with overwhelming probability.
\end{proposition}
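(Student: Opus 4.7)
The plan is to reduce safety of the checkpointed chains to Bitcoin safety through an induction on the shared prefix of the checkpoint sequences that Algorithm~\ref{alg.bitcoin.checkpointing} walks over. By Bitcoin safety with parameter $k$, for any two clients $\client, \client'$ at slots $r, r'$, one of $\powchain^{\client}_r, \powchain^{\client'}_{r'}$ is a prefix of the other with overwhelming probability; without loss of generality let $\powchain^{\client}_r \preceq \powchain^{\client'}_{r'}$. Consequently the output of $\textsc{GetCkpts}$ for $\client$, say $h_1,\ldots,h_m$, is an initial segment of the output $h_1,\ldots,h_{m'}$ for $\client'$, with $m \leq m'$. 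The goal is to show $\aux^{\client}_r$ and $\aux^{\client'}_{r'}$ are prefix-comparable.

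I would proceed by induction on $j \in \{0,1,\ldots,m\}$ with hypothesis: the intermediate running chains $L^{\client}_j, L^{\client'}_j$ produced inside the for-loop are prefix-comparable, and if neither client has emergency-broken by step $j$ then in fact $L^{\client}_j = L^{\client'}_j$. The base case is immediate since both clients initialize with the genesis chain $L_0$. In the inductive step, as long as both clients share a common running chain $L$, the calls to $\textsc{IsValid}$, $\textsc{GetActVals}$, the epoch derivation, and the prefix check $L \preceq L_{j+1}$ evaluate identically on the shared inputs $(L, h_{j+1})$; the only client-dependent input is the local blocktree $\mathcal{T}$ passed to $\textsc{GetChain}$, which determines whether $L_{j+1}$ is available and finalized in that client's view.

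The main obstacle is the data-availability asymmetry that this creates: the adversary may deliver the block data and finality signatures for $L_{j+1}$ to one client but withhold them from the other. I would handle this by structural case analysis on the branch each client takes at $h_{j+1}$. If both view $L_{j+1}$ as available or both do not, they transition identically. If only one, say $\client'$, sees $L_{j+1}$, then $\client$ emergency-breaks on line~\ref{line:btc2} while $\client'$ either extends via case 1 on line~\ref{line:btc1} to $L_{j+1}$, or skips. The case-1 precondition $L \preceq L_{j+1}$ (enforced on line~\ref{line:btc1}) is precisely what guarantees the resulting chains are prefix-comparable; in the skip branch they coincide at $L$. Once either client emergency-breaks, the for-loop returns and its $\aux$ is frozen, so any later case-1 extensions by the other client only lengthen its chain while preserving the frozen chain as a common prefix.

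Finally, I would take $j = m$ to conclude $L^{\client}_m$ and $L^{\client'}_m$ are prefix-comparable. Since $\aux^{\client}_r = L^{\client}_m$ and the further iterations over $h_{m+1},\ldots,h_{m'}$ processed only by $\client'$ can only extend $L^{\client'}_m$ via case 1 or terminate or skip without altering it, $\aux^{\client'}_{r'}$ is prefix-comparable to $L^{\client'}_m$ and hence to $\aux^{\client}_r$. The entire argument past the initial invocation of Bitcoin safety is deterministic, so the overall conclusion inherits the overwhelming-probability guarantee of the Bitcoin safety assumption.
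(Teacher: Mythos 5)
Your proposal reduces checkpointed-chain safety to Bitcoin safety via an induction over the shared prefix of the two clients' checkpoint sequences, with case analysis on which branch of Algorithm~\ref{alg.bitcoin.checkpointing} each client takes; this is essentially the same argument the paper gives, just organized as an explicit step-by-step induction rather than via indices $i_1,i_2$ marking the first point of availability asymmetry. One small point you leave implicit: when both clients return a non-$\bot$ chain from $\textsc{GetChain}(\T,h_{j+1})$, you treat $L_{j+1}$ as a shared input, which requires collision resistance of the hash function (otherwise the two clients could open $h_{j+1}$ to two different chains) -- the paper invokes this explicitly, and your proof should too.
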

Proof is provided in Appendix~\ref{sec:appendix-security-proofs}, and uses the fact that the safety of \btc implies consensus on the sequence of checkpointed blocks.

\begin{theorem}[Slashable Safety]
\label{thm:btc-only-slashable-safety}
Suppose \btc is secure with parameter $k$ with overwhelming probability, and there is one honest active validator at all times.
Then, the Babylon 1.0 protocol with fast finality (Section~\ref{sec:btc-protocol}) satisfies $n/3$-slashable safety with overwhelming probability.
\end{theorem}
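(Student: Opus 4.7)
The plan is to combine three ingredients: Proposition~\ref{prop:auxiliary-chain} on checkpointed-chain safety, Tendermint's $n/3$-accountable safety, and a timing argument around the withdrawal rule and the emergency-break mechanism. First I reduce the safety violation to a Tendermint-level fork: if $\poschain^{\client_1}_{r_1}$ and $\poschain^{\client_2}_{r_2}$ conflict, then by Proposition~\ref{prop:auxiliary-chain} the checkpointed chains $\aux^{\client_1}_{r_1}$ and $\aux^{\client_2}_{r_2}$ are prefix-consistent with overwhelming probability, so the conflict must lie between two Tendermint-finalized blocks $B$ and $B'$ that both extend the common checkpointed prefix but sit outside $\aux$. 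Tendermint's $n/3$-accountable safety then yields an irrefutable forensic proof naming at least $n/3$ adversarial validators from the active set of the epoch $e$ containing $B$ and $B'$; by synchrony this proof can be broadcast to every client.

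Next I argue that each identified violator $\validator$ cannot have withdrawn in any client's view in the absence of further adversarial action on Bitcoin. Since $\validator$ is active in epoch $e$, its withdrawal-request block $B_w$ lies in some epoch $e' \geq e$ along $B$'s or $B'$'s chain, and the first withdrawal condition requires $B_w \in \aux^{\client}$. Because $\aux$ extends only through valid last-block checkpoints that are $k$-deep on $\powchain^{\client}$, $B_w$ enters $\aux$ only if the adversary itself posts a checkpoint for the last block of epoch $e$ on $B_w$'s chain; absent such a checkpoint, $\validator$'s withdrawal transaction can never be finalized into the canonical chain of any client.

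The main obstacle is the case where the adversary does post such an enabling checkpoint to try to sneak $\validator$ out before the conflict surfaces. Here the emergency-break mechanism is the decisive defense: the client holding the conflicting chain, upon seeing on Bitcoin a valid checkpoint for a block whose data it lacks, triggers emergency break and after a $2\Delta$ wait posts to Bitcoin a valid checkpoint for the tip of its own frozen chain (using the Tendermint signatures already in hand); if instead the data happens to be available, it posts a direct forensic fraud proof. Either output creates, within Bitcoin chain-growth time, a publicly visible conflicting checkpoint or fraud proof implicating each of the $n/3$ double-signers. The delicate part of the argument is then to combine Proposition~\ref{prop:chain-growth}, the $2\Delta$ emergency wait, and the honest signer's mandatory $\Delta$ re-check before signing a withdrawal block to guarantee that this evidence becomes $k$-deep in every honest signer's Bitcoin view before $\validator$'s withdrawal transaction can be finalized into any client's canonical chain, so that the no-conflict condition of the withdrawal rule fails uniformly. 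Combined with the assumption that at least one honest validator is always active, this keeps every identified $\validator$ unwithdrawn in every client's view, yielding $n/3$-slashable safety.
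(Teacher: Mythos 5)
Your proposal follows the same overall strategy as the paper: Proposition~\ref{prop:auxiliary-chain} pushes the fork out of the checkpointed prefix, Tendermint's $n/3$-accountable safety names the double-signers, and the withdrawal rule plus the emergency break ensure that those validators' stake is still at risk. The distinction you draw between an adversary that posts an enabling checkpoint and one that does not, and between data-available and data-unavailable reactions (fraud proof vs.\ conflicting checkpoint), is the right skeleton.

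However, there are two concrete gaps. First, you write that Tendermint's accountable safety ``yields an irrefutable forensic proof... by synchrony this proof can be broadcast to every client,'' but someone must actually hold both finalized conflicting blocks to invoke the forensic protocol, and the client observing the fork may go offline after outputting its chain. The paper handles this by explicitly assuming that a client's outputted blocks and signatures are eventually observed by an honest validator before the client goes offline, and then running a case split on whether $\client_1$ is online or has gone offline, and in the latter case on whether the conflicting data is available to an honest validator. Your argument silently assumes that the observing client stays around to react; without the offline case (and without stating the extra assumption), the claim that a proof reaches every client with the needed timing is not established. Second, your timing description is off in a way that matters: you say $\aux$ extends only through checkpoints that are $k$-deep on $\powchain^{\client}$ and that the evidence must become ``$k$-deep in every honest signer's Bitcoin view.'' In the protocol, a checkpoint enters $\aux$ as soon as it is in the confirmed chain $\powchain^{\client}_r$; it is withdrawal condition~(2) that demands an extra $k$ blocks of depth inside $\powchain^{\client}_r$. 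The race is therefore: the enabling checkpoint is confirmed at slot $r$; by Proposition~\ref{prop:chain-growth} the fraud proof or conflicting checkpoint sent by slot $r+O(\Delta)$ enters $\powchain^{\client}$ before $|\powchain^{\client}|$ grows by $k$; but withdrawal condition~(2) requires exactly that $k$-block growth. Without this precise accounting, the conclusion that condition~(3) fails ``uniformly'' before any client grants withdrawal does not follow. You also do not address the $r=\infty$ case where no enabling checkpoint is ever posted; there the validator can never withdraw, but establishing that every client eventually acquires the forensic proof again leans on the offline-client assumption you omit.
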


\begin{theorem}[Liveness]
\label{thm:btc-only-liveness}
Suppose \btc is secure with parameter $k$ with overwhelming probability, and the number of adversarial active validators is less than $n/3$ at all times.
Then, the Babylon 1.0 protocol with fast finality (Section~\ref{sec:btc-protocol}) satisfies $\Tconfirm$-liveness with overwhelming probability, where $\Tconfirm=\Theta(\lambda)$. 
\end{theorem}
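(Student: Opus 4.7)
The plan is to combine the liveness of the underlying Tendermint instance with the chain-growth guarantee of Bitcoin, while verifying that neither the emergency-break clause nor any adversarially crafted checkpoint can derail the inclusion of a transaction into the canonical \pos chain.

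First, since the number of adversarial active validators is always less than $n/3$, Tendermint is both safe and live. Consequently, a transaction $\tx$ input at slot $r$ to an honest validator is included in a finalized \pos block $B$ within a bounded number of slots $\Ttendermint=O(1)$, and no two conflicting \pos blocks are ever finalized. Honest active validators disseminate $B$ together with its prefix and the pre-commit signatures finalizing it, so by synchrony every client sees this data within $\Delta$ slots of its propagation.

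Next I would argue that a valid checkpoint covering $B$ reliably appears on Bitcoin. At the end of the epoch containing $B$---reached within a further $O(1)$ slots because Tendermint continues to produce blocks---each honest active validator signs the hash of the last finalized block of the epoch and posts a checkpoint transaction. Because more than $2n/3$ of the active validators of that epoch are honest and all sign the same last block (by Tendermint safety), a valid checkpoint is guaranteed; moreover, since any valid checkpoint requires at least one honest signature and honest validators only sign the unique finalized last block they have observed, no valid checkpoint can ever reference a different, unavailable, or non-finalized chain. By Proposition~\ref{prop:chain-growth}, an honest checkpoint becomes $k$-deep confirmed in every client's Bitcoin chain within $\Tconfirm=\Theta(\lambda)$ slots.

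Finally, I would trace through Algorithm~\ref{alg.bitcoin.checkpointing} to confirm that $\tx$ enters $\poschain^{\client}_{r'}$ for every $r' \geq r + \Tconfirm$ and every client $\client$. When $\client$ scans the checkpoints on its confirmed \btc chain, every valid checkpoint references a block on the unique Tendermint-finalized chain, and all referenced blocks and finality signatures are available in $\client$'s view by synchrony; hence the emergency-break clause at Line~\ref{line:btc2} never fires and the checkpointed chain $\aux^{\client}_{r'}$ contains $B$ (or a descendant of $B$). Since Tendermint safety precludes any fork among finalized blocks, the unique-extension step starting at Line~\ref{line:btc3} pushes the canonical chain all the way past $B$, so $\tx \in \poschain^{\client}_{r'}$. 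The main obstacle is the careful bookkeeping of the various latencies (Tendermint finality, epoch completion, Bitcoin chain growth, and the $\Delta$ network delay) together with the argument that the adversary cannot stall a client via the emergency break; that last piece reduces, through the $2n/3$ signature threshold and Tendermint safety, to the impossibility of producing a valid checkpoint for unavailable or non-finalized data.
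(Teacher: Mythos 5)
Your outline captures the correct high-level skeleton (Tendermint liveness under $<n/3$ adversaries, no emergency break because any valid checkpoint carries an honest signature, and the unique-extension step of Algorithm~\ref{alg.bitcoin.checkpointing} carries the finalized block into every client's \pos chain), but there is a genuine gap in the first step that the paper's proof goes to some lengths to close. You assert that a transaction $\tx$ given to an honest validator at slot $r$ lands in a finalized \pos block within $\Ttendermint = O(1)$ slots. This is not justified: with up to $n/3$ adversarial active validators, an adversarial proposer can simply exclude $\tx$ from its proposal, so each new finalized block might omit $\tx$. The inclusion of $\tx$ only becomes overwhelming after enough rounds that an honest proposer is selected with overwhelming probability. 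The paper's proof makes exactly this point: it first bounds (via synchrony and \cite[Lemma 7]{tendermint}) the time for a \emph{new} finalized block to appear in all clients' chains by $\max(\Ttendermint, 3\Delta)$, then observes that each such block is proposed by an honest validator with probability at least $2/3$, and finally sets the number of rounds $m = \Theta(\lambda)$ so that the failure probability $m\negl(\lambda) + (1/3)^m$ is negligible. This $\Theta(\lambda)$ is precisely the factor in the theorem's liveness parameter, and your $O(1)$ claim skips over it.

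A secondary issue is that you route the argument through Bitcoin confirmation of a checkpoint covering $B$. Under fast finality this detour is unnecessary: the client's extension loop starting at Line~\ref{line:btc3} appends \emph{any} available finalized descendant of the checkpointed chain, so $\tx$ enters $\poschain^{\client}_{r'}$ as soon as $B$ is finalized and delivered, without waiting for a checkpoint of $B$ to appear (let alone become $k$-deep) on Bitcoin. The detour also mislocates the source of the $\Theta(\lambda)$ latency as Bitcoin chain growth, when in the paper's proof it actually comes from the Tendermint censorship-resistance argument above. Your handling of the emergency-break clause --- any valid checkpoint has $> 2n/3$ signatures so carries at least one honest one, and honest validators only sign available finalized blocks --- is correct and matches the paper.
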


Proof sketches for Theorems~\ref{thm:btc-only-slashable-safety} and~\ref{thm:btc-only-liveness} are given in Appendix~\ref{sec:proof-sketches}, and the full proofs are in Appendix~\ref{sec:appendix-security-proofs}.
We extend these results to a restricted partial synchrony model in Appendix~\ref{sec:appendix-partial-synchrony}.

\section{Optimal liveness}
\label{sec:liveness}

Babylon 1.0 provides slashable safety to Tendermint.
However, the protocol guarantees liveness only when over $2/3$ of the active validators is honest.
We next explore how Babylon 1.0 can be improved to achieve optimal liveness.

\subsection{No liveness beyond $1/2$ adversarial fraction}
\label{sec:data-limit}
Our first result is that no \pos protocol can achieve a liveness resilience of $\fL \geq n/2$ even with the help of Bitcoin, or for that matter any timestamping service, unless the entirety of the PoS blocks are uploaded to the service.

\emph{Timestamping service.}
Timestamping service is a consensus protocol that accepts messages from the validators, and provides a total order across these messages.
All messages sent by the validators at any slot $r$ are outputted in some order determined by the service, and can be observed by all clients at slot $r+1$.
If a client queries the service at slot $r$, it receives the sequence of messages outputted by the service until slot $r$.
The service can impose limitations on the total size of the messages that can be sent during the protocol execution.

\begin{theorem}
\label{thm:data-limit}
Consider a PoS or permissioned protocol with $n$ validators in a $\Delta$ synchronous network such that the protocol provides $\fS$-accountable-safety for some $\fS>0$, and has access to a timestamping service.
Suppose each validator is given an externally valid input of $m$ bits by the environment $\mathcal{Z}$, but the number of bits written to the timestamping service is less than $m\lfloor n/2 \rfloor-1$.
Then, the protocol cannot provide $\fL$-$\Tconfirm$-liveness for any $\fL \geq n/2$ and $\Tconfirm<\infty$.
\end{theorem}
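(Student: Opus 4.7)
The plan is a split-world pigeonhole argument in the spirit of standard availability-accountability-dilemma (AAD) impossibility proofs, with the timestamping service's bandwidth bound playing the role that network partition plays in the classical AAD. I partition the validator set into $V_1, V_2$ with $|V_1|=\lceil n/2 \rceil$ and $|V_2|=\lfloor n/2 \rfloor$; since $\fL \geq n/2$, the adversary is allowed to corrupt either group in its entirety. The intuition is that the service, receiving fewer than $m\lfloor n/2\rfloor-1$ bits total, cannot uniquely encode the $\lfloor n/2\rfloor$ honest $m$-bit inputs, so a client relying on the service as its consensus-relevant witness cannot tell apart two genuine executions whose honest inputs differ.

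First I invoke pigeonhole. Fixing an arbitrary input $x_1$ for $V_1$, I consider, as $x_2$ ranges over $\{0,1\}^{m\lfloor n/2\rfloor}$, the full service transcript $\tau(x_2)$ produced by the all-honest run with $V_2$ holding input $x_2$. The bandwidth bound forces $\tau$ to take strictly fewer than $2^{m\lfloor n/2\rfloor-1}$ distinct values, so by pigeonhole there exist $x_2^A \neq x_2^B$ with $\tau(x_2^A) = \tau(x_2^B)$. External validity together with $\fL$-liveness implies that the corresponding finalized chains $\poschain^A,\poschain^B$ contain different externally valid transactions, and hence conflict.

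Next I set up two adversarial executions $E^A$ and $E^B$ in which $V_1$ is corrupt but simulates honest behaviour toward $V_2$, while $V_2$ is genuinely honest with input $x_2^A$ or $x_2^B$ respectively. By $\fL$-liveness each world finalizes $\poschain^A$ (resp.\ $\poschain^B$) within $\Tconfirm$, with matching service transcripts. I then introduce a late-joining client $c$ and arrange the adversary's direct-broadcast traffic so that $c$'s entire view coincides across $E^A$ and $E^B$ by $c$'s decision slot; indistinguishability forces $c$ to output the same chain in both worlds, contradicting liveness in at least one of them. If instead $c$ outputs a chain conflicting with honest $V_2$'s finalized chain in order to preserve liveness, the resulting safety violation triggers $\fS$-accountable safety; but since the forensic proof is computed from $c$'s view, which is identical in $E^A$ and $E^B$, the same set of validators is accused in both worlds. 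As $V_2$ is genuinely honest in one of the two executions, the proof would wrongly incriminate members of $V_2$, contradicting the ``no honest validator is accused'' clause of accountable safety.

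The main obstacle is the view-indistinguishability step: under the synchrony model the client automatically receives every broadcast of honest $V_2$, so the adversary must drive the interaction — for instance by withholding its own messages to $V_2$ until after $c$'s decision slot — so that honest $V_2$'s observable output during the relevant window is input-independent, leaving the service transcript (identical by choice of $x_2^A,x_2^B$) as $c$'s only potentially distinguishing signal. This is the classical simulation difficulty in AAD-style proofs, and it is precisely where the extra ``$-1$'' slack in the bandwidth bound $m\lfloor n/2\rfloor-1$ is consumed; once this step is carried out carefully, the contradiction between indistinguishability, liveness, and accountable safety is immediate.
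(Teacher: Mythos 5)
Your proof takes a genuinely different route from the paper's, and that route has a fatal gap at the view-indistinguishability step; the paper avoids it by a construction you do not use.

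In your two executions $E^A$ and $E^B$ the corrupt set is the \emph{same} ($V_1$) and the inputs of the \emph{honest} group $V_2$ differ ($x_2^A$ vs.\ $x_2^B$). In the synchronous model of the paper every message broadcast by an honest validator reaches every client within $\Delta$ slots, and a late-joining client receives the full broadcast history. Since the protocol must get $V_2$'s inputs into the chain to be live and the service is too small to carry them, honest $V_2$ necessarily broadcasts input-dependent data over the ordinary network; the client therefore sees $x_2^A$ vs.\ $x_2^B$ directly and distinguishes $E^A$ from $E^B$ regardless of what the service transcript says. Your acknowledged ``obstacle'' is real, and the suggested fix (adversary withholds \emph{its own} messages to $V_2$) cannot make $V_2$'s observable output input-independent, because the adversary controls $V_1$, not $V_2$, and cannot prevent an honest party's broadcasts from being delivered. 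Relatedly, your accountability endgame is muddled: you write ``$V_2$ is genuinely honest in one of the two executions,'' but $V_2$ is honest in \emph{both} $E^A$ and $E^B$, so accusing members of $V_2$ is wrong in both worlds and does not yield the dichotomy you need.

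The paper sidesteps all of this by \emph{swapping which half is adversarial} between the two worlds (in world~1 the set $Q$ is adversarial and silent toward $\client_1$; in world~2 the set $P$ is), and by using two clients with asymmetric views. The group whose inputs $\client_1$ cannot reconstruct is always the adversarial one, so silence toward $\client_1$ is achievable without violating synchrony, and the bandwidth bound bites directly as an information-theoretic argument (no pigeonhole needed). The role swap also makes the forensic step clean: the accused set $S$ is contained in $P$, which is adversarial in world~2 but honest in world~1, giving the contradiction with the ``no honest validator accused'' clause. Your pigeonhole on the service transcript is a plausible alternative opening, but without the role-swap/two-client structure the indistinguishability and the accountability contradiction do not go through.
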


\begin{figure}
    \centering
    \includegraphics[width=\linewidth]{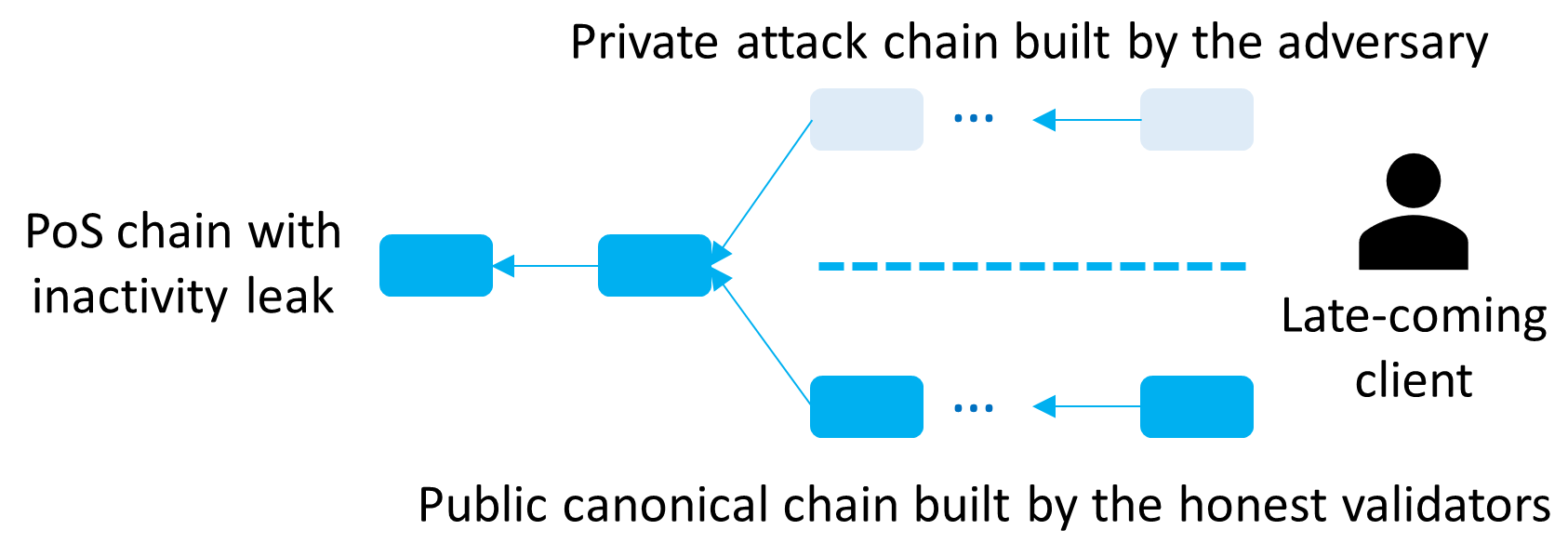}
    \caption{Inactivity leak attack. 
    Due to inactivity leak, honest and adversarial validators lose their stake on the attack and canonical chains respectively. A late-coming client cannot differentiate the canonical and attack chains.}
    \label{fig:inactivity-leak}
\end{figure}

To give intuition for the theorem, we analyze \emph{inactivity leak}, used by Cosmos zones~\cite{cosmos_inactivity_leak} and proposed for PoS Ethereum~\cite{eth_inactivity_leak} to slash inactive validators and recover from liveness attacks.
We show that when the adversary controls half of the validators, it can do a safety attack, where inactivity leak results in a gradual slashing of the {\em honest} validators' stake in the view of \emph{late-coming} clients.

Consider PoS Ethereum with the setup on Figure~\ref{fig:inactivity-leak}.
Half of the validators are adversarial, and build an attack chain that is initially kept private.
They do not communicate with the honest validators or vote for the blocks on the public, canonical chain.
As the honest validators are not privy to the adversary's actions, they also cannot vote for the blocks on the attack chain.
Since only half of the validators are voting for the public blocks, liveness of the finality gadget, Casper FFG, is temporarily violated for the canonical chain.
At this point, inactivity leak kicks in, and gradually slashes the stake of the adversarial validators on the canonical chain to recover liveness.
Similarly, the honest validators lose their stake on the private attack chain due to inactivity leak.
Finally, the adversary publishes the attack chain, which is subsequently adopted by a late-coming client.
As there are conflicting chains in different clients' views, this is a safety violation, and by accountable safety, the late-coming client must identify at least one validator as a protocol violator.
Since the client could not have observed the attack in progress, it cannot distinguish the attack chain from the canonical one.
As a result, with non-negligible probability, it identifies the honest validators on the canonical chain as protocol violators.
This contradicts with the accountability guarantee.

In the attack above, the data-limited timestamping service cannot help the late-coming client distinguish between the canonical and attack chains.
The honest validators cannot timestamp the entirety of public blocks on a data-limited service.
Thus, they cannot prove to a late-coming client that the canonical chain was indeed public before the attack chain was published.
This enables the adversary to plausibly claim to a late-coming client that the canonical chain was initially private, and its private attack chain was the public one.

The proof of Theorem~\ref{thm:data-limit} is given in Appendix~\ref{sec:appendix-liveness-resilience}, and generalizes the attack on inactivity leak to any \pos or permissioned protocol.
It exploits the indistinguishability of two worlds with different honest and adversarial validator sets, when the adversary controls over half of the validators and the timestamping service is data-limited.

\subsection{Optimal liveness: full Babylon protocol with fast finality}
\label{sec:bitcoin-with-honesty}
\begin{figure*}
    \centering
    \includegraphics[width=0.8\linewidth]{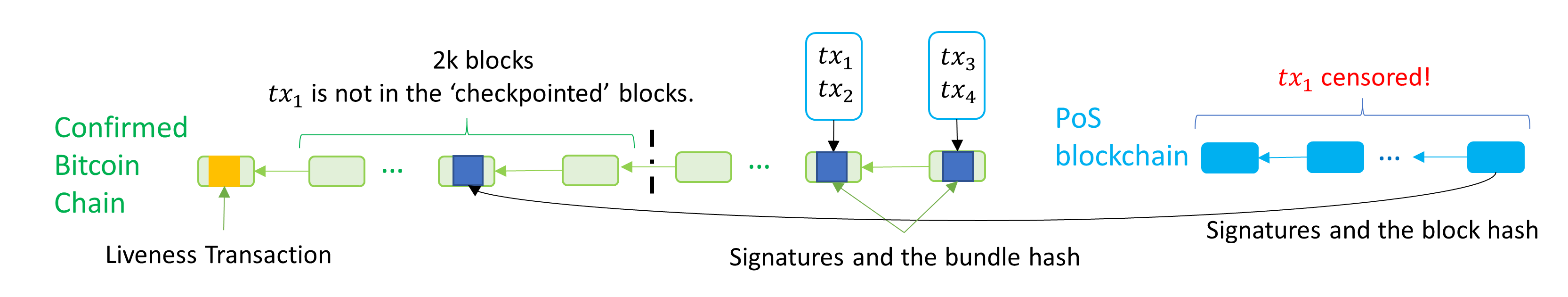}
    \caption{If $\tx_1$ is observed to be censored by an honest validator $\validator$, it sends a \emph{liveness transaction} to \btc. Once the liveness transaction becomes $2k$ deep in $\validator$'s and the clients' views, they enter the rollup mode. In the rollup mode, validators group transactions into bundles and post signed hashes of these bundles to \btc.}
    \label{fig:liveness-recovery}
\end{figure*}

\setlength{\textfloatsep}{0.3cm}
\setlength{\floatsep}{0.3cm} 
\begin{algorithm}
    \captionsetup{font=small} %
    \caption{The function used by a bootstrapping client $\client$ to find the canonical PoS chain $\poschain^{\client}_r$ at a given slot $r$. Here, $t_i$ denotes the type of the transaction on \btc, which can either be a checkpoint transaction, liveness transaction or a bundle. If $t_i$ is a checkpoint transaction or bundle, $h_i$ denotes the checkpoint, whereas if $t_i$ is a liveness transaction, $h_i$ denotes the hash of the censored transaction itself. The functions are defined in the caption of Algorithm~\ref{alg.bitcoin.checkpointing} except for $\textsc{GetHeight}$, which returns the height of the \btc block containing the given checkpoint, and $\textsc{isValB}$, which checks the validity of the bundle checkpoint. The Boolean variables $\mathrm{rmode}$ and $\mathrm{censor}$ become $\mathrm{True}$ if the checkpoints are associated with a rollup mode or the grace period respectively. The variable $\mathrm{censortx}$ denotes the censored transaction. The notation $a \in c$ is a Boolean, which returns $\mathrm{True}$ iff the chain $c$ contains a transaction whose hash is $a$.}
    \label{alg.liveness.recovery}
    \begin{algorithmic}[1]\small
    \Function{\sc OutputPosChain}{\T, \powchain}
        \Let{(t_1,h_1),\ldots,(t_m,h_m)}{\textsc{GetCkpts}(\powchain)}
        \Let{\aux, \mathrm{ep}, \mathrm{actval}}{L_0, 1, \textsc{GetActVals}(L_0, 1)}
        \Let{\rm rmode, censor, censortx, ht}{\mathrm{False},\mathrm{False},\bot,-1}
        \For{$i=1, \ldots, m$}
            \If{$\mathrm{censor} \land \textsc{GetHeight}(h_i) \geq \mathrm{ht} + 2k$}~\label{line:liveness1}
                \CommentLine{Enter rollup mode}
                \Let{\rm rmode, censor, censortx}{\mathrm{True}, \mathrm{False}, \bot}
            \ElsIf{$\mathrm{rmode} \land \textsc{GetHeight}(h_i) \geq \mathrm{ht} + 2k+\Tbtcmode$}~\label{line:liveness2}
                \CommentLine{Exit rollup mode}
                \Let{\rm rmode, ht}{\mathrm{False}, -1}
            \EndIf
            \CommentLine{Obtain the checkpointed chain}
            \If{$t_i = \mathrm{checkpoint} \land \neg\mathrm{rmode} \land \textsc{IsValid}(h_i,\mathrm{actval},\mathrm{ep})$}
                \CommentLine{Normal mode}
                \Let{L_i}{\textsc{GetChain}(\T, h_i)}
                \If{$L_i = \bot$}\label{line:unavailable1}
                    \State\Return $\aux$ \Comment{Stall: Data Unavailable}
                \ElsIf{$\aux \preceq L_i \land \mathrm{ep}(L_i) = \mathrm{ep}$}
                    \Let{\aux}{L_i}
                    \If{$\textsc{IsLast}(L_i)$}
                        \Let{\rm ep, actval}{ep+1, \textsc{GetActVals}(L_i, \mathrm{ep})}
                    \EndIf
                    \If{$\mathrm{censor} \land (\mathrm{censortx} \in \aux)$}~\label{line:not-censored}
                        \Let{\rm censor, censortx, ht}{\mathrm{False}, \bot, -1}
                    \EndIf
                \EndIf
            \ElsIf{$t_i = \mathrm{liveness} \land \neg\mathrm{rmode}$}      \Comment{Liveness transaction}
                \If{$(\tx \notin \aux) \land \neg\mathrm{censor}$}
                    \Let{\rm censor, censortx}{\mathrm{True},\{h_i\}}
                    \Let{\mathrm{ht}}{\textsc{GetHeight}(h_i)}
                \ElsIf{$(\tx \notin \aux) \land \mathrm{censor}$}
                    \Let{\mathrm{censortx}}{\mathrm{censortx} \cup \{h_i\}}
                \EndIf
            \ElsIf{$t_i = \mathrm{bundle} \land \mathrm{rmode} \land \textsc{IsValB}(h_i,\mathrm{actval},\mathrm{ep})$}\label{line:isvalidbundle}
                \CommentLine{Rollup mode}
                \Let{B_i}{\textsc{GetChain}(\T, h_i)}
                \If{$B_i = \bot$}\label{line:unavailable2}
                    \label{line:liveness4}
                    \State\Return $\aux$ \Comment{Stall: Data Unavailable}
                \Else
                   \label{line:liveness3}
                   \Let{\aux}{\aux \concat B_i} \Comment{Attach $B_i$ to $\aux$}
                \EndIf
            \EndIf
        \EndFor
        \Let{\poschain, \mathrm{chs}}{\aux, \textsc{GetChildren}(\T,\aux[-1])}
        \While{$|\mathrm{chs}|=1$}
            \If{$\mathrm{rmode} \lor (\mathrm{censor} \land |\powchain| \geq \mathrm{ht}+k)$}
                \label{line:liveness5}
                \Break
            \EndIf
            \Let{\poschain, \mathrm{chs}}{\poschain \concat \mathrm{chs}, \textsc{GetChildren}(\T,\mathrm{chs})}
        \EndWhile
        \State\Return $\poschain$
    \EndFunction
    \end{algorithmic}
\end{algorithm}
\setlength{\textfloatsep}{0.3cm}
\setlength{\floatsep}{0.3cm}

We now analyze whether \btc, despite its data limitation, can help increase the liveness resilience of Babylon 1.0 presented in Section~\ref{sec:btc-protocol}.
Babylon 1.0 provides $n/3$-slashable-safety and $n/3$-liveness; whereas Theorem~\ref{thm:data-limit} states that the liveness resilence of an accountably safe protocol cannot exceed $n/2$.
We next close this gap and achieve the optimal liveness resilience of $n/2$ for the full Babylon protocol.
Note that by \cite[Appendix B]{forensics}, liveness resilience of a \pos protocol that provides $n/3$-accountable-safety cannot exceed $n/3$ {\em in the absence of external trust}, so the improvement of Babylon's resilience from $n/3$ to $n/2$ depends crucially on the use of the data-limited timestamping service.
Indeed, if the adversary controls $f \in [n/3,n/2)$ of the active validators and violates liveness, the improved protocol uses \btc as a fallback mechanism to guarantee eventual liveness.

The full Babylon protocol proceeds in two modes: the normal mode and the rollup mode, where Bitcoin plays a more direct role in the ordering of the \pos blocks.
Execution starts and continues in the normal mode as long as no \pos transaction is censored.
If a transaction is observed to be censored, clients can force the execution to switch to the rollup mode. 
During the normal mode, checkpointing of the \pos chain, fork-choice rule, stake withdrawals and slashing work in the same way as described in Section~\ref{sec:btc-protocol}
(\cf Alg.~\ref{alg.liveness.recovery} for the complete fork-choice rule of the full Babylon protocol).
Thus, we next focus on the rollup mode.

\paragraph{Checkpointing}
If a transaction $\tx$, input to an honest validator by the environment $\mathcal{Z}$ at slot $r$, has not appeared in an honest validator $\validator$'s \pos chain within $\Ttendermint$ slots, \ie, is not in $\poschain^{\validator}_{r+\Ttendermint}$, $\validator$ sends a \emph{liveness transaction} to \btc.
Here, $\Ttendermint$ denotes the latency of Tendermint under synchrony.
The liveness transaction contains the hash of the censored $\tx$, and signals a liveness violation in $\validator$'s view.

Suppose there is a block $b$ within an honest validator $\validator$'s \btc chain, containing a liveness transaction for some censored $\tx$.
Upon observing $b$ become $k$ deep, $\validator$ sends a checkpoint transaction for its \pos chain, even if it has not reached the last block of its epoch.
If $\tx$ is not in its checkpointed chain, $\validator$ also stops executing Tendermint.
When $b$ becomes $2k$ deep in its \btc chain, if $\tx$ is still not in $\validator$'s checkpointed chain, $\validator$ enters the \emph{rollup mode}.
If $\tx$ appears in $\validator$'s checkpointed chain before $b$ becomes $2k$ deep in its \btc chain, then $\validator$ resumes its execution of Tendermint.
Honest validators agree on when to halt or resume the protocol execution as these decisions are based on the checkpoints on Bitcoin.

Once in the rollup mode, each honest validator collects transactions into \emph{bundles} that are broadcast to all other validators.
Upon observing a bundle of externally valid transactions, honest validators sign the hash of the bundle.
If an honest validator observes a bundle whose hash has been signed by over $n/2$ validators, it sends a checkpoint transaction containing the hash of the bundle, the last epoch number, and the signatures to \btc. 
We refer to the hash, epoch and the quorum of signatures collectively as a bundle checkpoint.

\paragraph{Fork-choice rule (Figure~\ref{fig:liveness-recovery}, Alg.~\ref{alg.liveness.recovery})}

Consider a client $\client$ that observes the protocol at some slot $r' \geq r+\Ttendermint$.
Suppose there is a block $b$ within $\powchain^{\client}_{r'}$, containing a liveness transaction for some censored $\tx$.
Once $b$ becomes $k$ deep in $\client$'s \btc chain, if $\tx$ is still not in $\client$'s checkpointed chain, $\client$ freezes its \pos chain.
At this point, $\client$ also sends a checkpoint transaction for its \pos chain, even if it has not reached the last block of its epoch.
Afterwards, $\client$ outputs new PoS blocks as part of its PoS chain, only if these new blocks are also part of its checkpointed chain (Alg.~\ref{alg.liveness.recovery}, Line~\ref{line:liveness5}). Note that if $\client$ was previously awake and has already outputted \pos blocks extending its checkpointed chain when the PoS chain was frozen, it does not roll back these blocks.

If $\client$ observes $\tx$ within its checkpointed chain by the time $b$ becomes $2k$ deep, it resumes outputting new blocks as part of its PoS chain, and continues the protocol execution in the normal mode(Alg.~\ref{alg.liveness.recovery}, Line~\ref{line:not-censored}).
Otherwise, once $b$ becomes $2k$ deep in $\client$'s \btc chain, if $\tx$ is not yet in $\client$'s checkpointed chain, it enters the \emph{rollup mode} (Figure~\ref{fig:liveness-recovery}, Alg.~\ref{alg.liveness.recovery}, Line~\ref{line:liveness1}).

Once in the rollup mode, $\client$ first constructs the checkpointed chain by observing the prefix of its \btc chain that ends at the $2k^{\text{th}}$ block extending $b$.
Suppose $B$ from epoch $e$ is the last \pos block appended to the checkpointed chain, and it is followed by a sequence $h_j$, $j \in [m]$,of bundle checkpoints in $\client$'s \btc chain.
Let $\tilde{e}=e+1$ if $B$ is the last block of epoch $e$, and $\tilde{e}=e$ otherwise.
The bundle checkpoint $h_{j}$ is said to be \emph{valid} (with respect to its prefix) if $h_{j}$ contains $\tilde{e}$ as its epoch, and over $n/2$ signatures on its block hash by the active validators of $\tilde{e}$ (Alg.~\ref{alg.liveness.recovery}, Line~\ref{line:isvalidbundle}).
After constructing the checkpointed chain, $\client$ sifts through $h_j$, $j \in [m]$, iteratively:
\begin{enumerate}
    \item (Alg.~\ref{alg.liveness.recovery}, Line~\ref{line:liveness3}) 
    If (i) $h_j$ is valid, and (ii) its preimage bundle is available in $\client$'s view, then $\client$ attaches the bundle to its checkpointed chain. During the rollup mode, clients output the same chain as the checkpointed and PoS chains.
    \item (Alg.~\ref{alg.liveness.recovery}, Line~\ref{line:liveness4}) If (i) $h_j$ is valid, and (ii) its preimage bundle is not available in $\client$'s view, then $\client$ stops going through the sequence $h_j$, $j \in [m]$, and returns its current checkpointed and PoS chains.
    \item If neither of the conditions above are satisfied, $\client$ skips $h_j$, and moves to $h_{j+1}$ as the next candidate.
\end{enumerate}

The client $\client$ leaves the rollup mode when it sees the $\Tbtcmode^{\text{th}}$ \btc block extending $b$ (\cf Alg.~\ref{alg.liveness.recovery} Line~\ref{line:liveness2}).
Here, $\Tbtcmode$ is a protocol parameter for the duration of the rollup mode.
After exiting the rollup mode, validators treat the hash of the last valid bundle checkpoint as the parent hash for the new \pos blocks, and execute the protocol in the normal mode.
It is possible to support stake withdrawals during the rollup mode, once the bundles containing withdrawal requests are timestamped by $k$ deep valid bundle checkpoints in \btc.

\subsection{Slashing and liveness after a safety violation}
\label{sec:slashing}
Theorem~\ref{thm:btc-only-slashable-safety} and its analogue below states that if \btc is secure, at least $n/3$ adversarial validators become slashable in the view of all clients when there is a safety violation.
However, the theorems do not specify whether these validators can be \emph{slashed} at all.
Indeed, slashing can only be done if the \pos chain is live, a condition that might not be true after a safety violation.

When blocks on conflicting chains are finalized in Tendermint, the chain with the earlier checkpoint in \btc is chosen as the canonical one.
However, the honest active validators might not be able to unlock from the conflicting chains they have previously signed and start extending the canonical one, as that would require them to sign conflicting blocks. 
Due to the absence of signatures from these stuck validators, the \pos chain might stall after a safety violation. 
Then, even though the validators that have caused the safety violation become slashable, the on-chain contract cannot slash them since the chain itself is not live.

This issue of liveness recovery after a safety violation is present in many BFT protocols, which strive to support accountability. 
For example, Cosmos chains enter into a panic state when safety is violated, and need a manual reboot based on social consensus with slashing done off chain~\cite{tendermint-panic}. 
With Bitcoin, however, this problem can be solved to an extent. 
As in the case of stalling or censorship, after which the protocol switches to the rollup mode, the honest validators can use \btc to unstuck from their respective forks, bootstrap the \pos chain, and slash the adversarial validators, if the honest validators constitute over \emph{half} of the active validator set.
They can use the same process described in Section~\ref{sec:bitcoin-with-honesty} for entering the rollup mode:
Once the \pos chain loses liveness, an honest validator posts a liveness transaction to \btc for the censored \pos transactions.
Soon afterwards, the honest \pos validators enter the rollup mode. 
This is because new checkpoints appearing on \btc and signed by the slashable validators will not be considered valid by the honest validators,
and cannot prevent the protocol from switching to the rollup mode.
Once in the rollup mode, with their majority, the honest \pos validators can sign new bundles, and send their checkpoints to \btc. 
Through these new bundles, they can finalize the censored transactions, and slash the adversarial validators that have previously become slashable.

After $n/3$ adversarial active validators are slashed, the remaining $n/2$ honest validators constitute a supermajority of the active validator set.
By treating the last valid bundle checkpoint as the new genesis block, they can switch back to the normal mode, and continue finalizing new \pos blocks through Tendermint (Alg.~\ref{alg.liveness.recovery}, Line~\ref{line:liveness2}).
Thus, the \pos chain can bootstrap liveness, and eventually return to the normal mode with fast finality after the safety violation.

\subsection{Analysis}
\label{sec:full-protocol-analysis}
\begin{theorem}[Slashable Safety]
\label{thm:btc-honest-majority-slashable-safety}
Suppose \btc is secure with parameter $k$ with overwhelming probability, and there is one honest active validator at all times.
Then, the Babylon protocol (Section~\ref{sec:bitcoin-with-honesty}) with fast finality satisfies $n/3$-slashable safety with overwhelming probability.
\end{theorem}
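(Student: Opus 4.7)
The plan is to reduce to the analysis of Theorem~\ref{thm:btc-only-slashable-safety} wherever possible and to handle the new elements (rollup mode, mode transitions, bundle checkpoints) separately. First I would observe that whether the protocol is in normal or rollup mode at a given Bitcoin height is a function of the confirmed Bitcoin chain alone: the triggers in Alg.~\ref{alg.liveness.recovery} (Lines~\ref{line:liveness1} and~\ref{line:liveness2}) depend only on the depth of liveness transactions and bundle checkpoints in $\powchain^{\client}_r$. By safety of \btc (with parameter $k$), every client sees the same sequence of confirmed Bitcoin blocks, and hence the same mode boundaries, the same sequence of validity-checked checkpoints, and the same ``last valid bundle checkpoint'' that serves as the restart anchor when leaving rollup mode.

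Next I would partition any alleged safety violation between two clients $\client,\client'$ according to where their outputted chains $\poschain^{\client}_r$ and $\poschain^{\client'}_{r'}$ first diverge. If the divergence occurs at a block appended during a normal-mode segment, the argument is essentially that of Theorem~\ref{thm:btc-only-slashable-safety}: by Proposition~\ref{prop:auxiliary-chain} the checkpointed chains agree, so the divergence must lie among post-checkpoint Tendermint-finalized blocks (or at the emergency-break stall point), and Tendermint's $n/3$-accountable-safety, together with the active-validator set inferred from the common checkpointed prefix, identifies at least $n/3$ adversarial validators; the $k$-deep condition in the stake-withdrawal rule ensures that no implicated validator has withdrawn by the time the incriminating checkpoint is confirmed on \btc, so these validators become slashable in every client's view. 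If the divergence occurs at a bundle appended during a rollup-mode segment, then by construction both clients parse the same Bitcoin-ordered sequence of bundle checkpoints $(h_j)$ starting from the common anchor, verify the same $n/2$-signature validity predicate $\textsc{IsValB}$ against the same active validator set, and append preimage bundles in the same order; any disagreement therefore contradicts \btc safety, so this case does not occur. The transition from rollup back to normal mode is likewise anchored at the last valid bundle checkpoint, which is common to all clients, so a mode-crossing divergence reduces to one of the two cases above.

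For the ``no honest validator is slashable'' half of Definition~\ref{def:slashable-safety}, I would argue separately in each mode that the forensic evidence is tied to signatures by corrupt validators. In normal mode this is the accountability property of Tendermint applied to the intersected quorums on the two checkpointed forks. In rollup mode, the only signatures honest validators produce are on externally valid bundles they observe, and bundles are ordered by \btc rather than voted upon in conflicting quorums, so no honest validator can be framed by conflicting $n/2$-bundle signatures.

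The main obstacle I foresee is the mode boundary, where a client $\client$ that was awake before a liveness transaction appeared may have already extended its \pos chain past the last checkpoint with finalized Tendermint blocks that it does not roll back (per the freeze rule). I need to show that if another client $\client'$ outputs a chain that conflicts on such a block, then the block was in fact doubly finalized by Tendermint in conflicting ways, which again invokes $n/3$-accountable-safety and, crucially, that the offending validators have not withdrawn: this is where the $k$-deep-in-\btc requirement in the withdrawal rule, combined with the fact that an honest validator's checkpoint for its frozen tip is pushed to \btc before entering rollup mode, closes the window and renders the $n/3$ identified validators slashable.
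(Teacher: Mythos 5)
Your plan matches the paper's own proof in structure and key ideas: both reduce the non-bundle case to Theorem~\ref{thm:btc-only-slashable-safety} via Proposition~\ref{prop:auxiliary-chain}, and both rule out disagreement at a bundle by observing that bundles, the mode boundaries, and the restart anchor are all determined by the (safe, hence shared) confirmed Bitcoin chain. Your extra attention to the mode-boundary/freeze-rule interaction and the ``no honest validator slashable'' half is a sensible elaboration of what the paper compresses into its final case split and its reliance on the earlier theorem, not a different route.
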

\begin{theorem}[Liveness]
\label{thm:btc-honest-majority-slashable-liveness}
Suppose \btc is secure with parameter $k$, and the number of adversarial active validators is less than $n/2$ at all times.
Then, the Babylon protocol (Section~\ref{sec:bitcoin-with-honesty}) with fast finality satisfies $\Tconfirm$-liveness with overwhelming probability, where $\Tconfirm$ is a polynomial in the security parameter $\lambda$.
\end{theorem}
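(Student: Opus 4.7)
The plan is to split the argument according to the adversarial fraction $f$. When $f < n/3$, Tendermint itself remains live, so any $\tx$ input to an honest validator at slot $r$ is finalized within $\Ttendermint$ slots and appears in the honest validators' \pos chains; no liveness transaction is ever sent to \btc, so the full protocol never triggers rollup mode and its normal-mode fork-choice rule coincides with that of Babylon 1.0. Liveness then reduces directly to Theorem~\ref{thm:btc-only-liveness}, with the same $\Tconfirm = \Theta(\lambda)$ bound.

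For the hard regime $n/3 \le f < n/2$, I would next bound the delay until the rollup mode is entered. If $\tx$ is not in some honest validator's \pos chain by slot $r+\Ttendermint$, that validator sends a liveness transaction to \btc. By \btc liveness and Proposition~\ref{prop:chain-growth}, this transaction is confirmed in some \btc block $b$ in every client's view within $\Tconfirm$ slots, and $b$ reaches depth $2k$ in every client's \btc chain within another $O(\Tconfirm)$ slots by chain growth. Because the triggers in Algorithm~\ref{alg.liveness.recovery} (lines~\ref{line:liveness1} and~\ref{line:liveness2}) are deterministic functions of the confirmed \btc chain, all honest validators and clients agree on the slot at which rollup mode is entered. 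Either $\tx$ appears in the checkpointed chain before $b$ becomes $2k$ deep (which satisfies liveness) or the rollup mode is entered in lockstep.

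Inside the rollup mode I would argue that $\tx$ is included in a bundle checkpoint within $O(\Tconfirm)$ further \btc slots. The honest validators broadcast a bundle containing $\tx$; since more than $n/2$ active validators are honest, after at most $\Delta$ slots this bundle collects the required quorum of $n/2$ signatures, and any honest validator posts the corresponding bundle checkpoint to \btc, which becomes confirmed within $\Tconfirm$ slots by Proposition~\ref{prop:chain-growth}. The rollup-mode branch of the fork-choice rule (Alg.~\ref{alg.liveness.recovery}, Line~\ref{line:liveness3}) appends every valid, available bundle to $\aux$, and because bundles are disseminated by honest validators they are available to all clients within $\Delta$ slots, so $\tx$ enters every client's $\poschain$. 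I would choose $\Tbtcmode$ to be a sufficiently large polynomial in $\lambda$ so that the bundle checkpoint is guaranteed to land before the rollup window closes; if for some reason it does not, a fresh liveness transaction restarts the process and the argument applies inductively.

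The main obstacle will be the mode-boundary interactions: one must rule out adversarial strategies that (i)~delay the appearance of $\tx$ in the normal-mode checkpointed chain just enough to force rollup entry while simultaneously stalling bundle signing, (ii)~push clients into temporarily inconsistent views of which mode is active, or (iii)~exploit the interaction with Theorem~\ref{thm:btc-honest-majority-slashable-safety}'s slashing machinery to shrink the effective honest set. Synchrony combined with the purely \btc-driven mode triggers handles~(ii); for~(i) and~(iii) I would use that a bundle checkpoint is valid the moment any $\lceil n/2 \rceil+1$ honest validators sign, and that signatures from honest validators are never invalidated as slashable (Theorem~\ref{thm:btc-honest-majority-slashable-safety}), so the honest signing power in the rollup mode always exceeds the bundle-validity threshold. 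Adding up the delays $\Ttendermint + O(\Tconfirm) + \Delta$ in each case yields the polynomial $\Tconfirm$ claimed in the statement.
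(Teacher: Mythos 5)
Your high-level route matches the paper's: wait $\Ttendermint$ for normal-mode finalization, send a liveness transaction if $\tx$ is missing, enter rollup mode once the liveness transaction is $2k$ deep, bundle $\tx$ and get an $n/2$-quorum checkpoint onto \btc, and use honest-signature validity to guarantee the bundle is accepted. The case split by $f < n/3$ versus $n/3 \le f < n/2$ is an unnecessary complication: the paper handles both regimes uniformly, since the argument only branches on whether $\tx$ appears in the checkpointed chain before the liveness transaction reaches depth $2k$, not on the adversarial fraction per se.

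There is one genuine gap. You never rule out the adversary triggering the emergency break, i.e.\ posting a valid-looking checkpoint for an \emph{unavailable} block or bundle that would cause clients to permanently stall at Lines~\ref{line:unavailable1} or~\ref{line:unavailable2} of Algorithm~\ref{alg.liveness.recovery}. This is the paper's very first observation and it does real work: a normal-mode checkpoint needs a $2n/3$ quorum and a bundle checkpoint needs an $n/2$ quorum, and because $f < n/2$ the adversary can form neither alone, while honest validators never sign blocks or bundles whose data they have not seen. Without this step, your timeline argument in either mode breaks down, since a single adversarial stalling checkpoint would freeze $\aux$ forever regardless of how quickly honest bundles are confirmed. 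Your obstacle~(iii) only establishes the complementary direction (honest bundles are accepted); you also need the direction that adversarial unavailable checkpoints are never accepted. Adding that observation up front would close the argument.
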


Proof sketches for Theorems~\ref{thm:btc-honest-majority-slashable-safety} and~\ref{thm:btc-honest-majority-slashable-liveness} are given in Appendix~\ref{sec:proof-sketches}, and the full proofs are in Appendix~\ref{sec:appendix-security-proofs}.

\section{Experiments}
\label{sec:implementation}

Babylon allows a validator to withdraw its stake once the checkpoint of a \pos chain containing the withdrawal request becomes $k$ deep in the confirmed Bitcoin chain.
This implies a withdrawal delay on the order of Bitcoin's confirmation latency for the checkpoints, as opposed to the week long periods imposed by many \pos chains (\cf Section~\ref{sec:related-work}).
To measure the reduction in the withdrawal delay and to demonstrate the feasibility of using Bitcoin for checkpointing, we implemented a prototype \emph{checkpointer} for Tendermint, the consensus protocol of Cosmos zones.
We sent periodic checkpoints to the \btc Mainnet via the checkpointer to emulate the operation of Babylon 1.0 securing a Cosmos zone and measured the confirmation times of the checkpoints. 

\subsection{Checkpoints}
Our checkpointer utilizes Bitcoin's $\mathrm{OP\_RETURN}$ opcode, which allows $80$ bytes of arbitrary data to be recorded in an unspendable transaction~\cite{op-return} (\cf Bartoletti and Pompianu for an empirical study of the data within $\mathrm{OP\_RETURN}$ transactions~\cite{op-return-data}).
Checkpoints consist of an epoch number, the hash of a PoS block, and signatures by $2/3$ of the active validators of the epoch.
In Tendermint, each pre-commit signature is of $64$ bytes~\cite{tendermint-signatures}, whereas
prominent Cosmos zones such as Osmosis and Cosmos Hub feature $100$ to $175$ validators~\cite{osmosis-validators,cosmoshub-validators}.
However, if the checkpoints include signatures by $67$ validators, a single checkpoint would be approximately $4.3$ kBytes, requiring at least $54$ separate \btc transactions to record the information within \emph{one} checkpoint.

To reduce the number of transactions needed per checkpoint, Cosmos zones can adopt Schnorr~\cite{schnorr} or BLS~\cite{bls} signatures that allow signature aggregation.
In our experiments, we consider a Cosmos zone that uses BLS signatures.
To construct a checkpoint, we first generate $32$ bytes representing the block hash as specified by Tendermint's LastCommitHash~\cite{tendermint-signatures}, and $67$ BLS signatures on this value, each of $48$ bytes.
They emulate the pre-commit signatures by a $2/3$ quorum among $100$ validators.
We then aggregate these $67$ BLS signatures into a single one.
Thus, the checkpoints contain (i) an epoch number of $8$ bytes, (ii) a block hash of $32$ bytes, (iii) an aggregate BLS signature of $48$ bytes, and (iv) a bit map of $13$ bytes to help clients identify which validators' signatures were aggregated.
This implies a total size of $111$ bytes per checkpoint, reducing the number of transactions needed per checkpoint from $54$ to $2$.
In each of the two transactions, the $\mathrm{OP\_RETURN}$ also includes a fixed tag of $4$ bytes for the letters `BBNT' to signal the Babylon checkpoints.
This makes the total size of the two Bitcoin transactions $195$ and $183$ bytes respectively including other parts of the transactions.

\subsection{Measurements}
The withdrawal delay consists of two components: (i) the time to send the first checkpoint for a PoS chain including the withdrawal request, and (ii) the time for the checkpoint to become confirmed, \ie, $k$ deep in Bitcoin, after it is sent.
The first component equals the epoch length as a checkpoint is sent at the end of every epoch by an honest validator, and the time to prepare the checkpoint transactions is negligible compared to the epoch length.
This is because it takes $0.143$ ms to generate one BLS signature, and $0.042$ ms to aggregate $67$ BLS signatures on an Apple Macbook Pro M1 machine using the codebase~\cite{bls-library}.

To estimate the withdrawal delay of a Tendermint based PoS protocol using Babylon 1.0, we sent hourly checkpoints over the course of one day\footnote{Experiment was started at 1 am August 18, 2022, and run until 1 am August 19 (PT).}, for an epoch length of one hour.
At every hour, \emph{two} pairs of $\mathrm{OP\_RETURN}$ transactions for two identical checkpoints were broadcast to the Bitcoin network: one pair sent with a miner fee of $12$ Satoshis/byte, and the other sent with a fee of $3$ Satoshis/byte.
Given the transaction sizes, and Bitcoin's token price\footnote{$1\ \text{BTC}$ is taken as $23,563.83\ \text{USD}$, the maximum Bitcoin price observed on August 18, 2022~\cite{conversion-rate}.}, this implies a total cost of $1.07$ USD \emph{per checkpoint} at the fee level of $12$, and $0.27$ USD per checkpoint at the level of $3$ Satoshis/byte.
The fee levels were determined before the experiments with the help of a Bitcoin transaction fee estimator~\cite{btc-fee}, which displayed $12$ and $3$ Satoshis as the fee levels to have the transaction included within the next block, and the next $6$ blocks, respectively.

For each checkpoint associated with a different fee, we measured (i) the time $T_{k=6}$ for both transactions to be at least $k=6$ deep, and (ii) the time $T_{k=20}$ for both to be at least $k=20$ deep.
The confirmation depth $k=6$ was chosen following the conventional rule for Bitcoin.
The depth $k=20$ implies a low probability ($10^{-7}$) of safety violation for the blocks containing the checkpoints~\cite[Figure 2]{btc-latency}\footnote{This is for an adversary controlling $10\%$ of the hash rate, given a network delay of $10$ seconds for the Bitcoin network}.
We report the mean and standard deviation for these parameters across the $24$ checkpoints in Table~\ref{tab:measurements}\footnote{The data and the exact $\mathrm{OP\_RETURN}$ bytes can be accessed at \url{https://github.com/gitferry/Babylon-checkpoints}.}.

\begin{table}[h!]
\centering
\begin{tabular}{|c|c|c|c|c|}
\hline
Checkpointing cost per annum & $T_{k=6}$ (mins) & $T_{k=20}$ (mins) \\
\hline
$ 9373$ USD & $69.7 \pm 20.7$ & $192.7 \pm 26.1$ \\
\hline
$ 2365$ USD & $77.0 \pm 21.2$ & $204.1 \pm 31.2$ \\
\hline
\end{tabular}
\caption{Mean and standard deviation for $T_{k=6}$ and $T_{k=20}$ for the checkpoints sent over a day at two different fee levels.}
\label{tab:measurements}
\end{table}

Table~\ref{tab:measurements} shows that at an annual cost of less than $10,000$ USD, a PoS chain using Babylon can reduce its withdrawal delay from weeks to below $4$ hours.
The small cost of checkpointing is achieved via aggregate signatures, in the absence of which the cost would be on the order of millions.

\subsection{Cost of the full Babylon protocol}

To estimate the additional cost of checkpointing during the rollup mode, we consider a liveness transaction for some $\tx$.
The transaction uses the $\mathrm{OP\_RETURN}$ opcode to record the hash of $\tx$.
Upon observing the liveness transaction on Bitcoin, clients send checkpoints to \btc for their canonical \pos chains.
To avoid duplicate transactions for the same $\tx$ and duplicate checkpoints for the same \pos chain, a special validator can be tasked with posting liveness transactions, collecting \pos blocks from the clients and posting the checkpoint of the longest canonical \pos chain.
Acting as watch towers, clients observe the posted checkpoints and send checkpoints of higher or conflicting \pos blocks if the validator fails.
The number of checkpoints on Bitcoin before the rollup mode can thus be bounded while the special validator remains functioning.

Once in the rollup mode, honest validators periodically send bundle checkpoints to Bitcoin.
These checkpoints have the same structure as the checkpoints of \pos blocks.
The period of bundle checkpoints determine the cost of checkpointing during the rollup mode.
Frequent checkpoints reduce latency down to Bitcoin latency, whereas infrequent checkpoints result in a smaller cost.
The honest validators can designate a leader in round-robin fashion to create a bundle every hour, thus send one checkpoint per hour.
This makes the cost of checkpointing in the rollup mode equal to the normal mode.

\subsection{Checkpoint verification by light nodes}

In Sections~\ref{sec:btc-protocol} and~\ref{sec:bitcoin-with-honesty}, clients download all block data from Bitcoin to obtain the complete sequence of checkpoints.
To detect checkpoints, they check for the `BBNT' tag at the beginning of the $\mathrm{OP\_RETURN}$ data.
Although any client can send a Bitcoin transaction with the `BBNT' tag, transaction fees can deter spamming attacks, and miners can reject bursts of $\mathrm{OP\_RETURN}$ transactions.
An alternative solution is to leverage Bitcoin's Taproot upgrade to prevent a minority adversary from posting invalid checkpoints~\cite{pikachu}.

Clients also run a Bitcoin light node to avoid downloading invalid checkpoints.
Light nodes download block headers to find the confirmed Bitcoin chain and rely on full nodes to receive the checkpoints along with their (Merkle) inclusion proofs that are verified against the transaction roots in the headers~\cite{bitcoin}.
However, as the validity of checkpoints depends on the preceding ones and Bitcoin does not verify their validity, clients cannot validate a checkpoint based solely on its inclusion in a Bitcoin block.
Noticing that Bitcoin acts as a \emph{lazy blockchain} towards checkpoints, we can use insights from light nodes of lazy blockchains to help clients identify the canonical \pos chain without downloading all checkpoints and validating them sequentially~\cite{lazylight}.

\section{Babylon with slow finality: Bitcoin safety}
\label{sec:btc-sec}
So far in the paper, we have focused on the scenario where the clients of the PoS chain use the native \emph{fast finality rule}, where blocks are finalized upon gathering signatures from the PoS validators.
Since Bitcoin confirmation operates at a slower time scale, Bitcoin cannot protect the PoS chain against safety attacks under the fast finality rule. 
Bitcoin instead makes these attacks {\em slashable} by not allowing the attackers to withdraw stake after signing conflicting blocks.
However, clients might want Bitcoin level safety for important transactions or during the bootstrapping phase of a PoS chain, when slashability is not sufficient.
For this purpose, clients can choose to use a \emph{slow finality rule}, which requires a PoS block to be checkpointed by Bitcoin in addition to its finalization on the PoS chain. 
More specifically, a client $\client$ using the slow finality rule sets its \pos chain to be the same as its checkpointed chain at any time slot: $\poschain^{\client}_r = \aux^{\client}_r$.
A major drawback of this scheme is its lateny: client now waits until the checkpoints are $k$ deep in \btc, before it can output their blocks as its \pos chain.

\begin{corollary}
\label{cor:btc-sec-safety}
Suppose \btc is secure with parameter $k$ with overwhelming probability, and there is an honest active validator at all times.
Then, the Babylon protocol with slow finality satisfies safety with overwhelming probability.
\end{corollary}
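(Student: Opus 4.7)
The plan is to reduce the corollary directly to Proposition~\ref{prop:auxiliary-chain}, which already asserts that the checkpointed chains $\aux^{\client}_r$ held by the clients satisfy safety (both across clients and monotonically within a single client's view) whenever Bitcoin is safe with parameter $k$ with overwhelming probability. The slow finality rule is defined by the identity $\poschain^{\client}_r = \aux^{\client}_r$ for every client $\client$ and every slot $r$, so no independent argument about conflicting \pos blocks appended past the checkpointed chain is needed in this setting: the client never outputs such blocks.

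Concretely, I would proceed in three short steps. First, invoke the hypothesis that \btc is secure with parameter $k$ to trigger Proposition~\ref{prop:auxiliary-chain}, obtaining that for any two clients $\client, \client'$ and any two slots $r, r'$, either $\aux^{\client}_r \preceq \aux^{\client'}_{r'}$ or the reverse, and that for a single client $\aux^{\client}_r \preceq \aux^{\client}_{r'}$ whenever $r \leq r'$, each holding with overwhelming probability. Second, substitute $\poschain^{\client}_r = \aux^{\client}_r$ into the two prefix conditions in the definition of safety from Section~\ref{sec:model}. Third, take a union bound over the negligibly many error events inherited from Proposition~\ref{prop:auxiliary-chain} to conclude that the resulting safety statement for $\poschain^{\client}_r$ holds with overwhelming probability.

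The assumption that there exists one honest active validator at all times is used only indirectly: it guarantees that checkpoints for successive epochs are actually sent to \btc, which is what Algorithm~\ref{alg.bitcoin.checkpointing} and the proof of Proposition~\ref{prop:auxiliary-chain} implicitly rely on when arguing that the sequence of valid checkpoints in the confirmed \btc chain faithfully determines a consistent $\aux^{\client}_r$; I would make this dependence explicit in one sentence but would not re-derive it. There is essentially no analytical obstacle here, since the heavy lifting has already been performed for the fast-finality version of the protocol; the only care needed is to confirm that the validity checks in Algorithm~\ref{alg.bitcoin.checkpointing} (epoch matching, quorum size, prefix relation $\aux \preceq L_i$, availability and finalization of $L_i$) are applied identically under slow finality, so that the checkpointed chain used in Proposition~\ref{prop:auxiliary-chain} is literally the same object that the slow-finality client outputs.
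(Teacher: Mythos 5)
Your proof takes exactly the same route as the paper, which simply states that Corollary~\ref{cor:btc-sec-safety} follows from Proposition~\ref{prop:auxiliary-chain} once one observes that under slow finality $\poschain^{\client}_r = \aux^{\client}_r$. Your elaboration of the three steps and your remark on the (essentially idle, for safety) role of the honest-validator hypothesis are both accurate and consistent with the paper's argument.
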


Proof follows from Proposition~\ref{prop:auxiliary-chain}.
Corollary~\ref{cor:btc-sec-safety} holds for any number of adversarial active validators less than $n$.

\begin{corollary}
\label{cor:btc-sec-liveness}
Suppose \btc is secure with parameter $k$ with overwhelming probability, and the number of adversarial active validators is less than $n/2$ at all times.
Then, the Babylon protocol with slow finality satisfies $\Tconfirm$-liveness with overwhelming probability, where $\Tconfirm$ is a polynomial in the security parameter $\lambda$. 
\end{corollary}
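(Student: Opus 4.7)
The plan is to exploit the identity $\poschain^{\client}_r = \aux^{\client}_r$ enforced by the slow finality rule, reducing the claim to bounding the time for any transaction input by the environment to an honest validator to appear in every client's checkpointed chain. Most of the work has been done in the proof of Theorem~\ref{thm:btc-honest-majority-slashable-liveness}: there, transactions are shown to enter $\aux^{\client}_r$ within polynomial time, and the fast-finality chain merely appends further Tendermint-finalized blocks that have yet to be checkpointed. So the strategy is to adapt that bound into a single polynomial $\Tconfirm$, splitting on the adversarial fraction.

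First, suppose $f < n/3$. Execution remains in the normal mode since no honest validator has cause to emit a liveness transaction. By Tendermint's $\Ttendermint$-liveness with fewer than $n/3$ adversarial validators under synchrony, $\tx$ is finalized within $\Ttendermint$ slots in some \pos block $B$, and within at most one further epoch an honest validator signs the last block of $B$'s epoch and broadcasts the corresponding checkpoint transaction to \btc. Proposition~\ref{prop:chain-growth} then bounds the time until this checkpoint is $k$-deep in every client's confirmed Bitcoin chain; at that point the checkpoint is valid (it carries a $2n/3$ quorum) and its preimage chain is available in every honest view, so the normal-mode update in Algorithm~\ref{alg.liveness.recovery} extends $\aux^{\client}_r$ to include $B$ and hence $\tx$. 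One still has to rule out an intervening emergency break: but such a break requires a valid checkpoint for unavailable data, and with $f < n/3$ the adversary cannot assemble such a $2n/3$ quorum signature on its own, while honest validators sign checkpoints only for blocks they have verified as available.

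Second, suppose $n/3 \leq f < n/2$. Either $\tx$ becomes Tendermint-finalized within $\Ttendermint$ (in which case the previous argument finishes the job), or an honest validator emits a liveness transaction for $\tx$ on \btc. Proposition~\ref{prop:chain-growth} bounds the time until this transaction is $2k$-deep in every client's Bitcoin chain; at that point every client enters the rollup mode, unless $\tx$ has in the meantime appeared in its checkpointed chain. In the rollup mode, the honest validators form a strict majority of the active set, and since each honest validator signs any externally-valid bundle it observes, a bundle containing $\tx$ collects more than $n/2$ signatures from honest validators alone. An honest validator then posts a valid bundle checkpoint to \btc; applying Proposition~\ref{prop:chain-growth} once more, this checkpoint is confirmed within polynomially many slots, after which Line~\ref{line:liveness3} of Algorithm~\ref{alg.liveness.recovery} appends the preimage bundle to $\aux^{\client}_r$ for every client.

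The main obstacle is ensuring that honest validators in the rollup mode coordinate on a bundle that both contains $\tx$ and collects the required quorum within a bounded window, despite the adversary potentially flooding \btc with conflicting bundle checkpoints or refusing to sign honest bundles. A clean resolution is to adopt a round-robin leader schedule for bundle proposal, as sketched in Section~\ref{sec:implementation}: in any polynomial-length window at least one honest leader proposes a bundle that includes all outstanding transactions, every honest validator signs it within $\Delta$, and the resulting checkpoint reaches \btc. Summing the bounds from Tendermint liveness, the $2k$-deep wait for rollup entry, Proposition~\ref{prop:chain-growth}, and the bundle-and-checkpoint cycle yields the desired polynomial $\Tconfirm$.
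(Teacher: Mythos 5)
Your decomposition differs from the paper's. The paper invokes Theorem~\ref{thm:btc-honest-majority-slashable-liveness} purely as a black box to conclude that $\tx$ enters every client's fast-finality chain $\poschain^{\client}_{r+\Tconfirm}$, then argues separately about how the checkpointed chain $\aux^{\client}$ catches up: either $\tx$ is already in some $\aux^{\client'}$ (done by synchrony and Proposition~\ref{prop:auxiliary-chain}), or all clients agree on the end of $\tx$'s epoch and the end-of-epoch checkpoint lands within a further $T+R$ slots, or the epoch \emph{fails} to end normally---via a safety violation, a conflicting checkpoint or fraud proof on Bitcoin, or a liveness transaction---in which case the rollup-mode path guarantees $\tx$ enters $\aux$. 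Your proof instead re-derives the argument by splitting on the adversarial fraction $f$, which is a legitimate alternative structure, but it opens a hole in the transition between the two cases.

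The gap is in the claim that for $n/3 \le f < n/2$, if ``$\tx$ becomes Tendermint-finalized within $\Ttendermint$ \dots the previous argument finishes the job.'' The previous ($f < n/3$) argument relies on the epoch of $\tx$'s block ending within a bounded number of slots, after which the end-of-epoch checkpoint is posted. But with $f \ge n/3$, Tendermint's liveness is no longer guaranteed, so the epoch can stall indefinitely even though $\tx$ itself was finalized in time: no end-of-epoch checkpoint is ever sent, and under the slow finality rule $\tx$ never enters $\aux^{\client}$ through the normal path. You cannot appeal to the first case here. What is needed---and what the paper supplies---is the observation that if the epoch does not end within $T$ slots, then either there has been a safety violation (reflected as a conflicting checkpoint or fraud proof on Bitcoin) or a liveness violation that triggers a liveness transaction, and either event forces the protocol into the rollup path, after which $\tx$ is carried into every $\aux^{\client}$ via the pre-rollup checkpoints that honest validators post for their (already $\tx$-containing) PoS chains once the liveness transaction is $k$ deep. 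Relatedly, your reading of the proof of Theorem~\ref{thm:btc-honest-majority-slashable-liveness} as establishing that ``transactions are shown to enter $\aux^{\client}_r$ within polynomial time'' overstates it: that proof bounds entry into $\poschain^{\client}$, and only in the branch where a liveness transaction is actually sent (and rollup mode may be entered) does it argue about $\aux$ directly. Filling this sub-case with the paper's epoch-failure trichotomy would make the argument complete.
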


Proof sketch for Corollary~\ref{cor:btc-sec-liveness} is given in Appendix~\ref{sec:proof-sketches}, and the full proof is in Appendix~\ref{sec:appendix-security-proofs}.

\section{Conclusion}
\label{sec:conclusion}
In this paper, we have constructed Babylon, where an accountable \pos chain uses Bitcoin as a timestamping service, and operates in two nodes: 
In the normal mode with \emph{fast finality}, clients of the \pos chain use the native finalization rule of the chain, and the role of Bitcoin is to provide slashable safety.
In the rollup mode, clients use a slow finality rule, and rely on Bitcoin as the consensus layer. Clients can choose which mode they want to operate at.
When there is censorship or stalling, the \pos chain can also switch to the rollup mode, and rely on Bitcoin to regain liveness. 
We have shown Babylon's optimality by characterizing the limitations of Bitcoin as a timestamping service. Just as Babylon accommodates a general set of accountable \pos protocols, the advantages and limitations provided for these protocols by Bitcoin are not specific to Bitcoin, but can be offered by any trusted public blockchain.
Hence, we can replace Bitcoin with any other trusted public blockchain that allows checkpointing \pos data, and all of the claimed improvements in the consensus security of \pos protocols would carry over with minor changes to the Babylon protocol.

\section*{Acknowledgements}
We thank Shresth Agrawal, Kamilla Nazirkhanova, Joachim Neu, Lei Yang and Dionysis Zindros for several insightful discussions on this project.
We thank Dionysis Zindros also for his help with the experiments.

Ertem Nusret Tas is supported by the Stanford Center for Blockchain Research, and did part of this work as part of an internship at BabylonChain Inc.
David Tse is a co-founder of BabylonChain Inc.

\bibliographystyle{plain}
\bibliography{references}

\appendices

\section{Comparison of Babylon with Bitcoin Improvement Proposals and Payment Channels}

Payment channels is a layer-2 scalability solution on Bitcoin that enables processing payments among multiple parties without using Bitcoin except in the case of adversarial participants~\cite{payment-channel-wattenhofer,payment-channel-lightening,payment-channel-eltoo}.
Unlike payment channels, Babylon uses Bitcoin to order checkpoints from PoS chains, while the transaction logic lives solely on the PoS chain.
Any existing PoS chain can opt in to use Babylon, with no change to Bitcoin.
Thus, Babylon preserves the responsiveness (fast finality) and high throughput of PoS chains without any upgrade to Bitcoin. 
In contrast, BIPs for offline transactions require changes to Bitcoin and cannot directly leverage off the properties of existing protocols.

\section{Proof Sketches for the Security Theorems}
\label{sec:proof-sketches}

\subsection{Proof Sketch for Theorem~\ref{thm:btc-only-slashable-safety}}

By the $n/3$-accountable safety of the \pos protocol, at least $n/3$ adversarial validators are identified after a safety violation.
These violations can only happen due to short-range forks at the tip of the clients' \pos chains, since the checkpointed chains do not conflict with each other by Proposition~\ref{prop:auxiliary-chain}. 
As the conflicting blocks within the short-range forks are not checkpointed by sufficiently deep Bitcoin blocks, the identified adversarial validators could not have withdrawn their stake in any clients' view.
Thus, at least $n/3$ validators become slashable in all clients' views after the safety violation.
The full proof of Theorem~\ref{thm:btc-only-slashable-safety} is presented in Appendix~\ref{sec:appendix-security-proofs}.

\subsection{Proof Sketch for Theorem~\ref{thm:btc-only-liveness}}

When the number of adversarial active validators is less than $n/3$, the \pos protocol satisfies safety and $\Tconfirm$-liveness.
Then, no unavailable checkpoint gathers enough signatures to be valid, preventing the adversary from triggering the stalling condition (Line~\ref{line:btc2}, Alg.~\ref{alg.bitcoin.checkpointing}).
The full proof of Theorem~\ref{thm:btc-only-liveness} is presented in Appendix~\ref{sec:appendix-security-proofs}.

\subsection{Proof Sketch for Theorem~\ref{thm:btc-honest-majority-slashable-safety}}

Since Bitcoin is responsible for ordering the bundles in the rollup mode and is assumed to be secure, there cannot be safety violations due to conflicting bundles. Thus, safety violations can only happen due to short-range forks with \pos blocks, in which case slashability of the adversarial validators follows from the proof of Theorem~\ref{thm:btc-only-slashable-safety}.
The full proof of Theorem~\ref{thm:btc-honest-majority-slashable-safety} is presented in Appendix~\ref{sec:appendix-security-proofs}.

\subsection{Proof Sketch for Theorem~\ref{thm:btc-honest-majority-slashable-liveness}}

When the number of adversarial active validators is less than $n/2$, the protocol satisfies liveness during the rollup mode, since the bundle checkpoints require signatures from half of the validators.
Moreover, no unavailable checkpoint gathers enough signatures to be valid, preventing the adversary from triggering the stalling condition (Lines~\ref{line:unavailable1} and~\ref{line:unavailable2}, Alg.~\ref{alg.bitcoin.checkpointing}).
The full proof of Theorem~\ref{thm:btc-honest-majority-slashable-liveness} is presented in Appendix~\ref{sec:appendix-security-proofs}.

\subsection{Proof Sketch for Corollary~\ref{cor:btc-sec-liveness}}

When the number of adversarial active validators is less than $n/2$, the \pos chains satisfy liveness by Theorem~\ref{thm:btc-honest-majority-slashable-liveness}.
In this case, after entering the \pos chains held by the clients, transactions soon appear in the checkpointed chains as the \pos chains are checkpointed regularly, implying the liveness of the protocol with slow finality.
Proof is given in Appendix~\ref{sec:appendix-security-proofs}.

\section{Proof of Theorem~\ref{thm:pos-non-slashable}}
\label{sec:appendix-slashable-safety}

\begin{proof}
Proof is by contradiction based on an indistinguishability argument between two worlds.
Towards contradiction, suppose there exists a \pos protocol $\PI$ that provides $\fL$-$\Tconfirm$-liveness and $\fA$-slashable-safety for some integers $\fL, \fA>0$ and $\Tconfirm<\infty$.

Let $n$ be the number of active validators at any given slot.
Let $P_0$, $P$, $Q'$ and $Q''$ denote disjoint sets of validators:
$P_0=\{\validator^0_i,i=1,..,n\}$, $P=\{\validator_i,i=1,..,n\}$, $Q'=\{\validator'_i,i=1,..,n\}$ and $Q''=\{\validator''_i,i=1,..,n\}$.
Let $T<\infty$ denote the time it takes for a validator to withdraw its stake after its withdrawal transaction is finalized by the \pos protocol.

The initial set of active validators is $P_0$.
The environment inputs transactions to the active validators so that the active validator set changes over time.
Suppose the active validator set becomes $P$ at some slot $t$ (\eg, $P=P_0$ at $t=0$).
We then consider the following four worlds:

\textbf{World 1:}
Validators in $P$ and $Q'$ are honest.
At slot $t$, $\Env$ inputs transactions $\tx'_i$, $i=1,..,n$, to the validators in $P$.
Here, $\tx'_i$ is the withdrawal transaction for $\validator_i$.
Validators in $P$ execute the \pos protocol, and record the consensus messages in their transcripts.
The environment $\mathcal{Z}$ replaces $\validator_i$ with $\validator'_i \in Q'$ as the new active validator.

At slot $t+\Tconfirm+T$, $(\Adv,\Env)$ spawns the client $\client_1$, which receives messages from the validators in $Q'$.
By $\Tconfirm$-liveness, for all $i \in [n]$, $\tx'_i \in \poschain^{\client_1}_{t+\Tconfirm+T}$.
Moreover, by slot $t+\Tconfirm+T$, all validators in $P$ have withdrawn their stake in $\client_1$'s view, and the set of active validators is $Q'$.

\textbf{World 2:}
Validators in $P$ and $Q''$ are honest.
At slot $t$, $\Env$ inputs transactions $\tx''_i$, $i=1,..,n$, to the validators in $P$.
Here, $\tx''_i$ is the withdrawal transaction for $\validator_i$.
Validators in $P$ execute the \pos protocol, and record the consensus messages in their transcripts.
The environment $\mathcal{Z}$ replaces each $\validator_i$ with $\validator''_i \in Q''$ as the new active validator.

At slot $t+\Tconfirm+T$, $(\Adv,\Env)$ spawns client $\client_2$, which receives messages from the validators in $Q''$.
By $\Tconfirm$-liveness, for all $i \in [n]$, $\tx''_i \in \poschain^{\client_2}_{t+\Tconfirm+T}$.
Moreover, by slot $t+\Tconfirm+T$, all validators in $P$ have withdrawn their stake in $\client_2$'s view, and the set of active validators is $Q''$.

\textbf{World 3:} 
Validators in $Q'$ are honest.
Validators in $P$ and $Q''$ are adversarial.
At slot $t$, $\Env$ inputs transactions $\tx'_i$, $i=1,..,n$, to the validators in $P$.
Validators in $P$ execute the \pos protocol, and record the consensus messages they observe in their transcripts.

Simultaneous with the execution above, $(\Adv, \Env)$ creates a simulated execution, where a different set of transactions, $\tx''_i$, $i \in [n]$, are input to the validators in $P$ at slot $t$.
In the simulated execution, $\mathcal{Z}$ replaces each $\validator_i$ with $\validator''_i \in Q''$ as the new active validator.
As in the real execution, validators in $P$ record the consensus messages of the simulated execution in their transcripts.

Finally, $(\Adv,\Env)$ spawns two clients $\client_1$ and $\client_2$ at slot $t+\Tconfirm+T$.
Here, $\client_1$ receives messages from the validators in $Q'$ whereas $\client_2$ receives messages from the validators in $Q''$.
Since the worlds 1 and 3 are indistinguishable by $\client_1$ except with negligible probability, for all $i \in [n]$, $\tx'_i \in \poschain^{\client_1}_{t+\Tconfirm+T}$ with overwhelming probability.
Since the worlds 2 and 3 are indistinguishable by $\client_2$ except with negligible probability, for all $i \in [n]$, $\tx''_i \in \poschain^{\client_2}_{t+\Tconfirm+T}$ with overwhelming probability.
Similarly, for all $i \in [n]$, $\tx'_i \notin \poschain^{\client_2}_{t+\Tconfirm+T}$, and $\tx''_i \notin \poschain^{\client_1}_{t+\Tconfirm+T}$.
Thus, $\poschain^{\client_1}_{t+\Tconfirm+T}$ and $\poschain^{\client_2}_{t+\Tconfirm+T}$ conflict with each other with overwhelming probability.
Moreover, at slot $t+\Tconfirm+T$, in the view of $\client_1$ and $\client_2$, the set of active validators are $Q'$ and $Q''$ respectively, and all validators in $P$ have withdrawn their stake.

As there is a safety violation and $\fA>0$, at least one validator must have become slashable in the view of both clients.
By definition of the forensic protocol, with overwhelming probability, a validator from the set $Q''$ becomes slashable in the clients' views as (i) the validators in $P$ have withdrawn their stake in the clients' views and (ii) those in $Q'$ are honest.

\textbf{World 4:} 
World 4 is the same as world 3, except that the validators in $Q'$ are adversarial, those in $Q''$ are honest, and the real and simulated executions are run with the transactions $\tx''_i$ and $\tx'_i$ respectively.
Given the messages received by the clients $\client_1$ and $\client_2$ from the validators, the worlds 3 and 4 are indistinguishable in their views except with negligible probability.
Thus, again a validator from $Q''$ becomes slashable in the clients' views in world 4 with non-negligible probability.
However, the validators in $Q''$ are honest in world 4, which is a contradiction with the definition of the forensic protocol.
\end{proof}

\section{Proof of Theorem~\ref{thm:data-limit}}
\label{sec:appendix-liveness-resilience}

\begin{proof}
\noindent
Proof is by contradiction based on an indistinguishability argument between two worlds.
Towards contradiction, suppose there exists a permissioned (or \pos) protocol $\PI$ with access to a data-limited timestamping service, that provides $\fS$-accountable-safety, and $\fL$-$\Tconfirm$-liveness for some integers $\fS>0$, $\fL \geq n/2$, and $\Tconfirm<\infty$.
Let $P$ and $Q$ denote two sets that partition the validators into two groups of size $\lceil n/2 \rceil$ and $\lfloor n/2 \rfloor$ respectively.
Consider the following worlds, where $\mathcal{Z}$ inputs externally valid bit strings, $\tx^P_i$, $i \in [\lceil n/2 \rceil]$, and $\tx^Q_j$, $j \in [\lfloor n/2 \rfloor]$, to the validators in $P$ and $Q$ respectively at the beginning of the execution.
Here, each validator $i$ in $P$ receives the unique string $\tx^P_i$, and each validator $j$ in $Q$ receives the unique string $\tx^Q_j$.
Each string consists of $m$ bits, where $m$ is a polynomial in the security parameter $\lambda$.

\textbf{World 1:}
There are two clients $\client_1$ and $\client_2$.
Validators in $P$ are honest, and those in $Q$ are adversarial.
In their heads, the adversarial validators simulate the execution of $\lfloor n/2 \rfloor$ honest validators that do not receive any messages from those in $P$ over the network.
They also do not send any messages to $P$ and $\client_1$, but reply to $\client_2$.

Validators in $Q$ send messages to the timestamping service $I$ as dictated by the protocol $\PI$.
There could be messages on $I$ sent by the validators in $P$ that require a response from those in $Q$. 
In this case, the validators in $Q$ reply as if they are honest validators and have not received any messages from those in $P$ over the network.

As $|Q| = \lfloor n/2 \rfloor \leq \fL$, by the $\fL$-liveness of $\PI$, clients $\client_1$ and $\client_2$ both output $\tx^P_i$, $i \in [\lceil n/2 \rceil]$ as part of their chains by slot $\Tconfirm$.
Since there can be at most $m\lfloor n/2 \rfloor-1$ bits of data on $I$, and $\tx^Q_j$, $j \in [\lfloor n/2 \rfloor]$, consist of $m\lfloor n/2 \rfloor$ bits in total, $\client_1$ does not learn and cannot output all of $\tx^Q_j$, $j \in [\lfloor n/2 \rfloor]$, as part of its chain by slot $\Tconfirm$, with overwhelming probability.

\textbf{World 2:}
There are again two clients $\client_1$ and $\client_2$.
Validators in $P$ are adversarial, and those in $Q$ are honest.
In their heads, the adversarial validators simulate the execution of the $\lceil n/2 \rceil$ honest validators from world 1, and pretend as if they do not receive any messages from those in $Q$ over the network.
They also do not send any messages to $Q$ and $\client_1$, but reply to $\client_2$.
They send the same messages to $I$ as those sent by the honest validators within world 1.

As $|P| = \lceil n/2 \rceil \leq \fL$, by the $\fL$-liveness of $\PI$, clients $\client_1$ and $\client_2$ both output $\tx^Q_j$, $j \in [\lfloor n/2 \rfloor]$, as part of their chains by slot $\Tconfirm$.
Since there can be at most $m\lfloor n/2 \rfloor-1$ bits of data on $I$, and $\tx^P_i$, $i \in [\lceil n/2 \rceil]$, consist of $m\lceil n/2 \rceil$ bits in total, $\client_1$ does not learn and cannot output all of $\tx^P_i$, $i \in [\lceil n/2 \rceil]$, as part of its chain by slot $\Tconfirm$, with overwhelming probability

The worlds 1 and 2 are indistinguishable by $\client_2$ in terms of the messages received from the validators and observed on the timestamping service, except with negligible probability.
Thus, it outputs the same chain containing $\tx^P_i$, $i \in [\lceil n/2 \rceil]$, and $\tx^Q_j$, $j \in [\lfloor n/2 \rfloor]$, in both worlds with overwhelming probability.
However, $\client_1$'s chain contains $\tx^P_i$, $i \in [\lceil n/2 \rceil]$, but not $\tx^Q_j$, $j \in [\lfloor n/2 \rfloor]$, in world 1, and $\tx^Q_j$, $j \in [\lfloor n/2 \rfloor]$, but not $\tx^P_i$, $i \in [\lceil n/2 \rceil]$, in world 2.
This implies that there is a safety violation in either world 1 or world 2 or both worlds with non-negligible probability.

Without loss of generality, suppose there is a safety violation in world 2.
In this case, an honest validator $\validator$ asks the other validators for their transcripts, upon which the adversarial validators in $P$ reply with transcripts that omit the messages received from the set $Q$. 
As $\fS > 0$, by invoking the forensic protocol with the transcripts received, $\validator$ identifies a non-empty subset $S \subseteq P$ of the adversarial validators, and outputs a proof that the validators in $S$ have violated the protocol $\PI$.
However, in this case, an adversarial validator in world 1 can emulate the behavior of $\validator$ in world 2, and ask the validators for their transcripts.
It can then invoke the forensic protocol with the transcripts, and output a proof that identifies the same subset $S \subseteq P$ of validators as protocol violators.
Since the two worlds are indistinguishable by $\client_2$ except with negligible probability, upon receiving this proof, it identifies the honest validators in $S \subseteq P$ as protocol violators in world 1 as well with non-negligible probability, which is a contradiction.
By the same reasoning, if the safety violation happened in world 1, an adversarial validator in world 2 can construct a proof accusing an honest validator in world 2 in $\client_2$'s view with non-negligible probability, again a contradiction.
\end{proof}

\section{Security Proofs}
\label{sec:appendix-security-proofs}

\begin{proof}[Proof of Proposition~\ref{prop:auxiliary-chain}]
Since \btc is safe with parameter $k$, without loss of generality, suppose $\powchain^{\client_1}_{r_1} \preceq \powchain^{\client_2}_{r_2}$ for two clients $\client_1$ and $\client_2$ and time slots $r_1$ and $r_2$.
Let $h_i$, $i \in [m_1]$, and $h_j$, $j \in [m_2]$, $m_1 \leq m_2$, denote the sequence of checkpoints in $\client_1$'s and $\client_2$'s views at slots $r_1$ and $r_2$ respectively.
Note that the sequence observed by $\client_1$ is a prefix of $\client_2$'s sequence.

Starting from the genesis block, let $B_1$ denote the first \pos block in $\aux^{\client_1}_{r_1}$ that is not available or finalized in $\client_2$'s view at slot $r_2$, and define $i_1$ as the index of the first valid checkpoint on $\powchain^{\client_1}_{r_1}$, whose block extends or is the same as $B_1$ on $\aux^{\client_1}_{r_1}$.
(If there is no such block $B_1$, $i_1 = \infty$.)
Similarly, let $B_2$ denote the first \pos block in $\aux^{\client_2}_{r_2}$ that is not available or finalized in $\client_1$'s view at slot $r_1$, and define $i_2$ as the index of the first valid checkpoint, whose block extends or is the same as $B_2$ on $\aux^{\client_2}_{r_2}$.
(If there is no such block $B_2$, $i_2 = \infty$.)

By the collision-resistance of the hash function, for any valid checkpoint $h_i$, $i \in [m_1]$, with index $i<\min(r_1,r_2)$, the condition at Line~\ref{line:btc1} of Alg.~\ref{alg.bitcoin.checkpointing} is true for $\client_1$ if and only if it is true for $\client_2$. 
Similarly, the clients must have updated their checkpointed chains with the same $L_i$ at Line~\ref{line:update} of Alg.~\ref{alg.bitcoin.checkpointing}.
This is because no checkpoint with index $i < \min(r_1,r_2)$, $i \in [m_1]$ triggers the stalling condition (Line~\ref{line:btc2}, Alg.~\ref{alg.bitcoin.checkpointing}).
Thus, if $i_1 = \infty$, $\aux^{\client_1}_{r_1} \preceq \aux^{\client_2}_{r_2}$.
However, if $i_1<\infty$, then $i_2 = \infty$, due to Line~\ref{line:btc2} of Alg.~\ref{alg.bitcoin.checkpointing} being triggered for $\client_2$.
Thus, if $i_1<i_2$, then $\aux^{\client_2}_{r_2} \prec \aux^{\client_1}_{r_1}$, and if $i_2 \leq i_1$, then $\aux^{\client_1}_{r_1} \preceq \aux^{\client_2}_{r_2}$.
Finally, by the safety of \btc with parameter $k$, $\powchain^{\client}_{r_1} \preceq \powchain^{\client}_{r_2}$ for any $r_2 \geq r_1$ and client $\client$.
Thus, $\aux^{\client}_{r_1} \preceq \aux^{\client}_{r_2}$.
\end{proof}

For the \pos protocol underlying Babylon to enforce accountability when there is a safety violation, it must ensure that the \pos blocks outputted by a client and the corresponding signatures are seen by at least one honest validator before the client goes offline; so that the honest validators can detect safety violations and create proofs accusing the adversarial validators.
Hence, in the proof below, we assume that the consensus messages and blocks in a clients' views are eventually observed by the validators before the client goes offline.

\begin{proof}[Proof of Theorem~\ref{thm:btc-only-slashable-safety}]
Suppose there are two clients $\client_1$, $\client_2$, and slots $r_1$, $r_2$ such that $\poschain^{\client_1}_{r_1}$ conflicts with $\poschain^{\client_2}_{r_2}$.
Starting from the genesis, let $B_1$ and $B_2$ denote the earliest conflicting blocks in $\poschain^{\client_1}_{r_1}$ and $\poschain^{\client_2}_{r_2}$ respectively.
As $B_1$ and $B_2$ share a common parent, they also share the same active validator set.

Let $r$ denote the first slot such that a valid checkpoint for a chain $L$ extending $B$ appears in the (confirmed) Bitcoin chain held by a client $\client$.
If there is no such slot, let $r = \infty$.
If $r < \infty$, define $b$ as the \btc block containing the checkpoint on $\powchain^{\client}_{r}$.
By synchrony and the safety of Bitcoin, $b$ appears in the Bitcoin chains of all awake clients (including honest validators) by slot $r+\Delta$ with the same prefix and contains the first valid checkpoint on those chains, whose \pos block extends $B$.
Since $B_1$ and $B_2$ conflict with each other, a block in $L$ also conflicts with at least one of the blocks $B_1$ and $B_2$ (By the collision-resistance of the hash function, the adversary cannot open the checkpoint within $b$ to conflicting chains).
Without loss of generality, suppose a block in $L$ conflicts with $B_1$.
Let $B'$ denote the first block in $L$ conflicting with $B_1$ and $r' \in [r,r+\Delta)$ denote the slot, in which $b$ appears in $\client_1$'s Bitcoin chain for the first time (if $\client_1$ is online). 
We next consider the following cases:

\paragraph{$r = \infty$}
The active validators for the blocks $B_1$ and $B_2$ cannot withdraw their stake in any client's view before a valid checkpoint for a chain extending $B$ appears in the (confirmed) Bitcoin chain held by the client, \ie, before slot $r$.
Moreover, the blocks $B_1$ and $B_2$ and the corresponding pre-commit signatures are eventually observed by an honest validator.
Since $B_1$ and $B_2$ are finalized and conflicting, and the \pos protocol satisfies $n/3$-accountable safety, the honest validator creates and sends a fraud proof (to Bitcoin) that identifies at least $n/3$ adversarial validators in the common active validator set of $B_1$ and $B_2$ as protocol violators.
This proof eventually appears in the Bitcoin chains held by all clients, implying that at least $n/3$ adversarial active validators become slashable in their views.
We hereafter assume that $r < \infty$.

\paragraph{$r_1 \geq r'+\Delta$}
If $L$ is available and finalized in $\client_1$'s view, then $\client_1$ would not have outputted $B_1$ as part of its \pos chain since $B_1$ conflicts with $B'$ (Line~\ref{line:btc1}, Alg.~\ref{alg.bitcoin.checkpointing}).
If $L$ contains a block that is unavailable or not finalized in $\client_1$'s view, then it would have triggered the stalling condition for $\client_1$, implying that $\client_1$ would not have outputted $B_1$ as part of its \pos chain (Line~\ref{line:btc1}, Alg.~\ref{alg.bitcoin.checkpointing}).
Thus, by contradiction, it must be that $r_1 < r'+\Delta$, and we hereafter assume $r_1 < r'+\Delta$.

\paragraph{The client $\client_1$ is online, and $L$ is available and finalized in $\client_1$'s view}
In this case, since $B_1$ and $B'$ are conflicting and finalized, and the \pos protocol satisfies $n/3$-accountable safety, $\client_1$ creates and sends a fraud proof (to Bitcoin) that identifies at least $n/3$ adversarial validators in the common active validator set of $B_1$ and $B'$ as protocol violators, by slot $r'+\Delta \leq r+2\Delta$.
By Proposition~\ref{prop:chain-growth}, this proof appears in the Bitcoin chains held by all clients $\client$ by slot $r''$ such that $|\powchain^{\client}_{r''}| = |\powchain^{\client}_{r}|+k$.  
Thus, at least $n/3$ adversarial active validators become slashable in all clients' views.

\paragraph{The client $\client_1$ is online, and $L$ is unavailable or not finalized in $\client_1$'s view}
In this case, the chain $L$ triggers the stalling condition for $\client_1$ (Line~\ref{line:btc2}, Alg.~\ref{alg.bitcoin.checkpointing}), in which case it sends a checkpoint to Bitcoin for its \pos chain at slot $r'+\Delta \leq r+2\Delta$.
By Proposition~\ref{prop:chain-growth}, this checkpoint appears in the Bitcoin chains held by all clients $\client$ by slot $r''$ such that $|\powchain^{\client}_{r''}| = |\powchain^{\client}_{r}|+k$.
Since there are now valid checkpoints for conflicting \pos blocks ($B_1$ and $B'$) with the same active validator set within $k$ Bitcoin blocks of each other, and the \pos protocol satisfies $n/3$-accountable safety, at least $n/3$ adversarial active validators become slashable in all clients' views (\cf condition 3 for withdrawals in Section~\ref{sec:btc-protocol}).

\paragraph{The client $\client_1$ has gone offline by slot $r+2\Delta$, and $L$ is available and finalized in an honest validator $\validator$'s view}
In this case, $\client_1$ must have sent its \pos chain including block $B_1$ and its pre-commit signatures to the validators by slot $r+2\Delta$, which are received by all honest validators by slot $r+3\Delta$.
Thus, both $B_1$ and $B'$ must have become available and finalized in $\validator$'s view by slot $r+3\Delta$.
Since these are conflicting blocks, and the \pos protocol satisfies $n/3$-accountable safety, $\validator$ creates and sends a fraud proof (to Bitcoin) that identifies at least $n/3$ adversarial validators in the common active validator set of $B_1$ and $B'$ as protocol violators, by slot $r+3\Delta$.
By Proposition~\ref{prop:chain-growth}, this proof appears in the Bitcoin chains held by all clients $\client$ by slot $r''$ such that $|\powchain^{\client}_{r''}| = |\powchain^{\client}_{r}|+k$, implying that at least $n/3$ adversarial active validators become slashable in all clients' views.

\paragraph{The client $\client_1$ has gone offline by slot $r+2\Delta$, and $L$ is unavailable or not finalized in all honest validators' views}
In this case, $\client_1$ must have sent its \pos chain including block $B_1$ and its pre-commit signatures to the validators by slot $r+2\Delta$, which are received by all honest validators by slot $r+3\Delta$.
If there is an honest validator $\validator$ whose \pos chain at slot $r+3\Delta$ has a block $B'$ conflicting with $B_1$, then, $\validator$ creates and sends a fraud proof (to Bitcoin) that identifies at least $n/3$ adversarial validators in the common active validator set of $B_1$ and $B'$ as protocol violators, by slot $r+3\Delta$.
By Proposition~\ref{prop:chain-growth}, this proof appears in the Bitcoin chains held by all clients $\client$ by slot $r''$ such that $|\powchain^{\client}_{r''}| = |\powchain^{\client}_{r}|+k$, implying that at least $n/3$ adversarial active validators become slashable in all clients' views.

On the other hand, suppose the \pos chains of all honest validators are consistent with that of $\client_1$.
Since $L$ contains a block that is unavailable or not finalized in the honest validators' views, it triggers the stalling condition (Line~\ref{line:btc2}, Alg.~\ref{alg.bitcoin.checkpointing}), in which case each honest validator sends a checkpoint to Bitcoin for its \pos chain at some slot in $[r+\Delta,r+3\Delta]$.
By Proposition~\ref{prop:chain-growth}, these checkpoints appear in the Bitcoin chains held by all clients $\client$ by slot $r''$ such that $|\powchain^{\client}_{r''}| = |\powchain^{\client}_{r}|+k$.
Moreover, these checkpoints are for \pos chains extending that of $\client_1$.
Since there are now valid checkpoints for conflicting \pos blocks ($B_1$ and $B'$) with the same active validator set within $k$ Bitcoin blocks of each other, and the \pos protocol satisfies $n/3$-accountable safety, at least $n/3$ adversarial active validators become slashable in all clients' views (\cf condition 3 for withdrawals in Section~\ref{sec:btc-protocol}).
\end{proof}

\begin{proof}[Proof of Theorem~\ref{thm:btc-only-liveness}]
By Theorem~\ref{thm:btc-only-slashable-safety}, Babylon 1.0 protocol satisfies $n/3$-slashable safety.
Hence, when the number of active adversarial validators is less than $n/3$ at all slots, it satisfies safety.
In this case, no valid checkpoint for unavailable or non-finalized blocks appears on Bitcoin.
Thus, clients never stop outputting new \pos blocks as part of their checkpointed and \pos chains.

Suppose a transaction $\tx$ is first input to an honest validator at some slot $r$.
Then, from slot $r$ on, each honest validator $\validator$ includes $\tx$ in its proposal block until it outputs a \pos block containing $\tx$.
Let $\client'$ be the client with the longest \pos chain among all clients at slot $r$.
By synchrony and the safety of the \pos protocol, for every client $\client$, $\poschain^{\client'}_r \preceq \poschain^{\client}_{r+\Delta}$.
Then, either $\poschain^{\client'}_r = \poschain^{\client}_{r+\Delta}$ for all clients $\client$, or there exists a client $\client$ such that $\poschain^{\client'}_r \prec \poschain^{\client}_{r+\Delta}$.

In the former case, every client agrees on the validator set at slot $r+\Delta$.
By \cite[Lemma 7]{tendermint}, there exists a finite $\Ttendermint = \mathrm{poly}(\lambda)$ such that if every client agrees on the validator set, a new block that extends $\poschain^{\client'}_r$ is finalized and becomes part of all clients' \pos chains by slot $r+\Ttendermint$, with overwhelming probability.
In the latter case, by synchrony, a new block that extends the longest \pos chain, thus all \pos chains held by the clients at slot $r$, is finalized in the view of $\client'$ and all other clients by slot $r+2\Delta$.

Finally, when the number of adversarial active validators is bounded by $n/3$, with probability at least $2/3$, each block finalized after slot $r$ must have been proposed by an honest validator.
Then, for any given integer $k>1$, by slot $r+(k+1)\max(\Ttendermint,2\Delta)$, the transaction $\tx$ will appear in all clients' \pos chains, except with probability $k \cdot \negl(\lambda) + (1/3)^k$.
Setting $k = \Theta(\lambda)$, it holds that $k \cdot \negl(\lambda) + (1/3)^k = \negl(\lambda)$.
Consequently, liveness is satisfied with parameter $\Tconfirm = \mathrm{poly}(\lambda)$, with overwhelming probability.
\end{proof}

\begin{proof}[Proof of Theorem~\ref{thm:btc-honest-majority-slashable-safety}]
Suppose there are two clients $\client_1$, $\client_2$, and slots $r_1$, $r_2$ such that $\poschain^{\client_1}_{r_1}$ conflicts with $\poschain^{\client_2}_{r_2}$.
Starting from the genesis, let $B_1$ and $B_2$ denote the earliest conflicting \pos blocks or bundles in $\poschain^{\client_1}_{r_1}$ and $\poschain^{\client_2}_{r_2}$ respectively.
Without loss of generality, let $r_1$ and $r_2$ be the first slots $B_1$ and $B_2$ appear in $\client_1$'s and $\client_2$'s \pos chains respectively.

By the safety of \btc, $\powchain^{\client_1}_{r_1}$ is a prefix of $\powchain^{\client_2}_{r_2}$ or vice versa with overwhelming probability.
By Proposition~\ref{prop:auxiliary-chain}, $\aux^{\client_1}_{r_1}$ is a prefix of $\aux^{\client_2}_{r_2}$ or vice versa.

We first consider the case, where at least one of the blocks is a bundle.
Without loss of generality, let $B_1$ be a bundle and $B$ denote the common parent of $B_1$ and $B_2$.
Let $b$ denote the \btc block with the liveness transaction that triggered the rollup mode, during which the valid bundle checkpoint $h_1$ for $B_1$ appeared in $\powchain^{\client_1}_{r_1}$.
At slot $r_1$, the prefix of $\client_1$'s \pos chain ending at $B_1$ consists of two parts: (i) a checkpointed chain outputted using the prefix of $\powchain^{\client_1}_{r_1}$ that ends at the $2k^{\text{th}}$ block extending $b$, (ii) bundles extending the checkpointed chain until $B_1$.

If $B$ is also a bundle, the next block in $\poschain^{\client_2}_{r_2}$ extending $B$, \ie $B_2$, has to be the same block as $B_1$ due to the consistency of $\powchain^{\client_1}_{r_1}$ and $\powchain^{\client_2}_{r_2}$.
However, as $B_2 \neq B_1$, $B$ cannot be a bundle.
If $B$ is not a bundle, it must be the last \pos block in $\client_1$'s checkpointed chain preceding $B_1$, implying that $B_1$ is the first bundle in $\poschain^{\client_1}_{r_1}$.
However, this again implies $B_1=B_2$, since $\client_1$ and $\client_2$ agree on the first block of the rollup mode whenever $\powchain^{\client_1}_{r_1}$ and $\powchain^{\client_2}_{r_2}$ are consistent.
As this is a contradiction, neither of the blocks $B_1$ or $B_2$ can be a bundle, with overwhelming probability.

Finally, if neither of $B_1$ and $B_2$ is a bundle, proof of slashable safety proceeds as given for Theorem~\ref{thm:btc-only-slashable-safety}.
\end{proof}

\begin{proof}[Proof of Theorem~\ref{thm:btc-honest-majority-slashable-liveness}]
As the number of honest active validators is over $n/2$ at all times, no valid \pos or bundle checkpoint for unavailable or non-finalized blocks or bundles appears on Bitcoin.
Hence, the clients do not stop outputting new \pos blocks or bundles as part of their \pos chains due to unavailable checkpoints (Lines~\ref{line:unavailable1} and~\ref{line:unavailable2} of Alg~\ref{alg.liveness.recovery} are never triggered).

Consider a transaction $\tx$ input to an honest validator at some slot $r$.
If $\tx$ does not appear in $\poschain^{\validator}_{r+\Ttendermint}$ in an honest validator $\validator$'s view, $\validator$ sends a liveness transaction to \btc containing $\tx$, at slot $r+\Ttendermint$.
Let $R = \mathrm{poly}(\lambda)$, denote the confirmation latency of \btc with parameter $k$.
Then, by the security of \btc, for all clients $\client$, the liveness transaction appears in $\powchain^{\client}_{r_1}$ within the same \btc block $b$, by slot $r_1 = r+\Ttendermint+R$, with overwhelming probability.

Once a client $\client$ observes $b$ become $k$ deep in its \btc chain, which happens by some slot less than $r_1+R$, it sends a checkpoint transaction for its \pos chain.
Subsequently, $b$ becomes at least $2k$ deep in $\client$'s \btc chain by some slot $r_2 \leq r_1+2R$.
In this case, there are two possibilities:
If $\tx \in \aux^{\client}_{r_2}$, then for all clients $\client'$, it holds that $\tx \in \poschain^{\client'}_{r_2+\Delta}$.
If $\tx \notin \aux^{\client}_{r_2}$, then all clients enter the rollup mode by slot $r_2+\Delta$.

Upon entering the rollup mode, an honest validator $\validator$ prepares a bundle containing $\tx$, which is viewed by all clients and signed by all honest validators by slot $r_2+2\Delta$.
Upon gathering these signatures, \ie, by slot $r_2+3\Delta$, $\validator$ sends a checkpoint for that bundle.
By the security of \btc, the checkpoint appears in the \btc chain of all clients at the same position by slot $r_3 = r_2 + 3\Delta + R$, with overwhelming probability.
Since the \pos protocol is accountable, an honest validator cannot be identified as a protocol violator and never becomes slashable in the view of any client.
Thus, the signatures within the bundle checkpoint from the honest validators are counted and suffice to pass the $n/2$ threshold, implying the validity of the bundle.
Consequently, $\tx \in \poschain^{\client}_{r_3}$ for all clients $\client$.

Setting $\Tconfirm = r_3 - r = 3\Delta + 4R + \Ttendermint = \mathrm{poly}(\lambda)$, we conclude the proof of $\Tconfirm$-liveness for the Babylon protocol.
\end{proof}

\begin{proof}[Proof of Corollary~\ref{cor:btc-sec-liveness}]
By Theorem~\ref{thm:btc-honest-majority-slashable-liveness}, if the number of adversarial active validators is less than $n/2$ at all times, the Babylon protocol of Section~\ref{sec:bitcoin-with-honesty} with fast finality satisfies $\Tconfirm$-liveness, where $\Tconfirm = \mathrm{poly}(\lambda)$.
Thus, if a transaction $\tx$ is input to an honest validator at some slot $r$, then for all clients $\client$ that follow the fast finality rule, $\tx$ will be in $\poschain^{\client}_{r+\Tconfirm}$.
Let $R = \mathrm{poly}(\lambda)$, denote the confirmation latency of \btc with parameter $k$.
Let $T = \mathrm{poly}(\lambda)$, denote an upper bound on the duration of epochs when there is no safety or liveness violation on Tendermint.

If $\tx \in \aux^{\client'}_{r+\Tconfirm}$ for a client $\client'$, then, by synchrony and the safety of the checkpointed chains, it appears within the checkpointed chains of all clients by slot $r+\Tconfirm+\Delta$.
On the other hand, $\tx$ might be in $\poschain^{\client}_{r+\Tconfirm}$ for all clients $\client$, but not in $\aux^{\client}_{r+\Tconfirm}$ for any client $\client$.
In this case, let $e$ denote the epoch of the \pos block containing $\tx$.

Suppose all clients eventually observe the same canonical \pos chain ending at the last block of epoch $e$, and no liveness transaction appears on the Bitcoin chain of any client by that time.
At the end of the epoch, an honest active validator $\validator$ sends a checkpoint transaction for its \pos chain, which happens at some slot $r'<r+\Tconfirm+T$.
Then, for any client $\client$, $\validator$'s checkpoint is in $\powchain^{\client}_{r'+R}$, with overwhelming probability.
Then, either $\tx \in \aux^{\client}_{r'+R}$ for all clients $\client$, or there is a checkpoint or fraud proof in the prefix of $\validator$'s checkpoint for conflicting \pos blocks.
In the latter case, the protocol enters the rollup mode as described in Section~\ref{sec:slashing}, in which case $\tx$ is eventually included within the checkpointed chains of all clients by slot $r'+R+\Tconfirm$.

Finally, if the above condition is violated, then there must have been a safety or liveness violation, or a liveness transaction must have appeared in the Bitcoin chain of a client, by slot $r+\Tconfirm+T$.
In this case, the process to enter the rollup mode is triggered by slot $r+\Tconfirm+T$, and $\tx$ is eventually included within the checkpointed chains of all clients by slot $r'+T+2\Tconfirm$.
Consequently, Babylon with slow finality satisfies $\Tconfirm'$-liveness, where $\Tconfirm' = T+R+2\Tconfirm = \mathrm{poly}(\lambda)$.
\end{proof}

\section{Accountability for Liveness Violations}
\label{sec:accountable-liveness}
In this section, we investigate if the concept of accountability can be extended to liveness violations.
Towards this goal, we first define the notion of \alr for blockchain protocols using the same formalism as accountable safety.

To ensure that the clients can detect liveness violations, we assume that they know and agree on the time when a transaction $\tx$ is first input to an honest validator.
In practice, disagreements on the input time of transactions can be bounded by $\Delta$ in a synchronous network by stipulating that the honest validators broadcast new transactions upon reception.
If a client observes that $\Tconfirm$-liveness is violated, \ie, a transaction input to an honest validator at some slot $r$ is not in the client's \pos chain by slot $t+\Tconfirm$, it invokes a forensic protocol by querying the validators.
The honest validators then send their transcripts to the client.
The forensic protocol takes these transcripts, and outputs a proof that identifies $f$ adversarial validators as protocol violators.
This proof is subsequently sent to all other clients, and serves as evidence that the identified validators have irrefutably violated the protocol rules.

\begin{definition}
\label{def:accountable-liveness}
A blockchain protocol is said to provide accountable liveness with resilience $f$ if when there is a liveness violation, (i) at least $f$ adversarial validators are identified by the forensic protocol as protocol violators, and (ii) no honest validator is identified, with overwhelming probability.
Such a protocol provides \emph{$f$-accountable-liveness}.
\end{definition}
$f$-accountable-liveness implies $f$-liveness, and as such is a stronger property.

Unfortunately, accountable liveness is not possible even with the help of Bitcoin, or for that matter any timestamping service, unless the entirety of the blocks are uploaded to the service. 
In particular, the adversary can execute unaccountable liveness attacks whenever it controls more than half the validators.
\begin{theorem}
\label{thm:data-limit-liveness}
Consider a PoS or permissioned protocol with $n$ validators in a $\Delta$ synchronous network such that the protocol provides $\fS$-accountable-safety for some $\fS>0$, and has access to a timestamping service.
Suppose each validator is given an externally valid input of $m$ bits by the environment $\mathcal{Z}$, but the number of bits written to the timestamping service is less than $m\lfloor n/2 \rfloor-1$.
Then, the protocol cannot provide $\fA$-$\Tconfirm$-accountable-liveness for any $\fA>0$ and $\Tconfirm<\infty$ if the number of active adversarial validators can be $n/2$ or more.
\end{theorem}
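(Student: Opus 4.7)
The plan is to adapt the two-world indistinguishability construction from the proof of Theorem~\ref{thm:data-limit} to contradict $\fA$-accountable-liveness. I would assume for contradiction that $\PI$ provides both $\fS$-accountable-safety and $\fA$-$\Tconfirm$-accountable-liveness for some $\fA>0$ and $\Tconfirm<\infty$, with the adversary allowed to control at least $n/2$ validators. Partition the $n$ validators into disjoint sets $P$ and $Q$ of sizes $\lceil n/2 \rceil$ and $\lfloor n/2 \rfloor$ (so that each of $P,Q$ can be taken as the adversarial set in one of the two worlds), and construct two worlds: in World 1, $P$ is honest and $Q$ is adversarial, with $\Env$ inputting the $m$-bit strings $\tx^P_i$ and $\tx^Q_j$ to the two groups; World 2 is the symmetric swap. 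In each world the adversarial side simulates, in its head, an honest execution in which it runs alone with its $\tx$-inputs and never receives any message from the other side; it posts to the timestamping service exactly the bits such a simulated honest execution would produce, and delivers these simulated messages to a designated client $\client$.

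By construction, $\client$'s received messages (those labeled ``from $P$'' and ``from $Q$'') together with its observed service bits coincide across the two worlds except with negligible probability, so $\client$ outputs the same chain and invokes the forensic protocol in the same way in both worlds. The next step is to show that $\client$'s chain must omit at least one of $\tx^P$ and $\tx^Q$ and, with non-negligible probability, must omit the honest-side input in both worlds. Here I would combine the data limit of less than $m\lfloor n/2\rfloor-1$ bits on the service (the pigeonhole used in the proof of Theorem~\ref{thm:data-limit}, which prevents both input sets from being fully reconstructible from the service) with $\fS$-accountable-safety: because $\fS>0$ forces the protocol to reject adoption of both of the apparently conflicting advertised executions (otherwise a sub-$n/2$ quorum could produce accountable conflicting finalizations), $\client$ cannot finalize transactions from both sides, and the symmetric role of $P$ and $Q$ makes the identity of the omitted side effectively independent of which side is actually honest.

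Because $\tx^P$ is input to the honest $P$ in World 1 and $\tx^Q$ is input to the honest $Q$ in World 2, the omission realizes a $\Tconfirm$-liveness violation from $\client$'s viewpoint in both worlds. By $\fA$-accountable-liveness the forensic protocol invoked by $\client$ must in each world identify at least $\fA\geq 1$ adversarial validator and no honest validator, but indistinguishability forces the forensic output to be the same set of accused validators across the two worlds, while the honest/adversarial labels of $P$ and $Q$ are swapped; hence any accused validator in $P$ is honest in World 1 and any accused validator in $Q$ is honest in World 2. In at least one of the worlds an honest validator is accused with non-negligible probability, contradicting condition (ii) of $\fA$-accountable-liveness. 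The main obstacle I anticipate is the middle step: unlike the proof of Theorem~\ref{thm:data-limit}, which merely compares two different clients' chains to obtain a safety violation, here a single symmetrically-placed client must be shown to omit the honest-side input in both worlds simultaneously, which requires the quorum lower bound implied by $\fS$-accountable-safety together with a careful pigeonhole on the service bits to rule out $\client$'s chain containing either complete input set.
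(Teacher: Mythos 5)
Your proposal has a genuine gap centered on the use of a single client $\client$, and also omits a case analysis that the paper's proof needs.

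The paper's proof (like its proof of Theorem~\ref{thm:data-limit}) works with \emph{two} clients per world: a client $\client_1$ that receives messages only from the nominally-honest half, and a client $\client_2$ that receives from both halves. This two-client structure is load-bearing. $\client_2$'s view is what is indistinguishable across the two worlds (it sees the same total message transcript and the same bounded timestamping output), so its chain and its reaction to forensic proofs are the same in both worlds. $\client_1$, on the other hand, is the place where the pigeonhole on the service bits bites: since $\client_1$ hears only from one side over the network, the only route for the other side's $m\lfloor n/2\rfloor$ bits of input to reach $\client_1$ is the timestamping service, and the bound $<m\lfloor n/2\rfloor-1$ rules that out. Your single-client construction cannot reproduce this: if $\client$ hears only from the (per-world) honest side, its views in the two worlds differ, breaking the indistinguishability argument; if $\client$ hears from both sides (which is what your description ``delivers these simulated messages to a designated client $\client$'' suggests), then nothing stops $\client$ from learning and outputting both $\tx^P$ and $\tx^Q$ — two disjoint externally valid strings are not mutually conflicting, so $\fS$-accountable-safety does \emph{not} ``force the protocol to reject adoption of both,'' contrary to your middle step. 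That step is where your argument actually fails: accountable safety constrains what happens \emph{when} different clients output conflicting chains; it does not prevent a single well-informed client from outputting a chain containing all inputs.

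Separately, the paper's proof splits into two regimes according to the protocol's actual liveness resilience $\fL$, a quantity your write-up never mentions. When $\fL\geq n/2$, liveness guarantees are strong enough to force both clients to output the honest side's inputs, producing a \emph{safety} violation between $\client_1$ and $\client_2$ that the paper then turns against $\fS$-accountable-safety (so $\fA$-accountable-liveness is never directly invoked there). Only when $\fL<n/2$ can one manufacture a genuine $\Tconfirm$-liveness violation visible to a client, and then the paper carefully case-splits on whether $\client_1$ or $\client_2$ omits the relevant inputs, in each subcase showing that a forensic proof accusing a set $S$ of validators in one world would be accepted by the indistinguishable $\client_2$ in the symmetric world where $S$ is honest. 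Your single-world-pair, single-client sketch jumps directly to ``$\client$ omits the honest-side input in both worlds and invokes the liveness forensic protocol,'' but as argued above there is no reason the omission happens, and even if it did, which transactions count as ``input to an honest validator'' (and hence trigger a liveness violation) differs between the two worlds because the honest/adversarial labeling is swapped — a subtlety your ``effectively independent of which side is honest'' phrase glosses over. The two-client structure and the $\fL$ case split are what make the paper's version go through; without them the proposal does not close.
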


Although a PoS protocol $\PI$ can provide accountable safety and liveness given access to a timestamping service with no data limitation, such a service is equivalent to a secure external blockchain protocol, which can be directly used to order transactions instead of $\PI$.
This would in turn satisfy safety and liveness for any number of adversarial validators on the protocol $\PI$ as long as there is one honest validator to forward the transactions to the timestamping service.

\begin{proof}[Proof of Theorem~\ref{thm:data-limit-liveness}]
\noindent
Proof is by contradiction based on an indistinguishability argument between two worlds.
Towards contradiction, suppose there exists a permissioned (or \pos) protocol $\PI$, with access to a data-limited timestamping service $I$, that provides $\fS$-accountable-safety, and $\fA$-$\Tconfirm$-accountable-liveness for some integers $\fA,\fS>0$ and $\Tconfirm<\infty$.
Let $\fL$ denote the $\Tconfirm$-liveness resilience of $\PI$.

We first analyze the case $\fL \geq n/2$.
Let $P$ and $Q$ denote two sets that partition the validators into two groups of size $\lceil n/2 \rceil$ and $\lfloor n/2 \rfloor$ respectively.
Consider the following worlds, where $\mathcal{Z}$ inputs externally valid bit strings, $\tx^P_i$, $i \in [\lceil n/2 \rceil]$, and $\tx^Q_j$, $j \in [\lfloor n/2 \rfloor]$, to the validators in $P$ and $Q$ respectively at the beginning of the execution.
Here, each validator $i$ in $P$ receives the unique string $\tx^P_i$, and each validator $j$ in $Q$ receives the unique string $\tx^Q_j$.
Each string consists of $m$ bits, where $m$ is a polynomial in the security parameter $\lambda$.

\textbf{World 1:}
There are two clients $\client_1$ and $\client_2$.
Validators in $P$ are honest, and those in $Q$ are adversarial.
In their heads, the adversarial validators simulate the execution of $\lfloor n/2 \rfloor$ honest validators that do not receive any messages from those in $P$ over the network.
They also do not send any messages to $P$ and $\client_1$, but reply to $\client_2$.

Validators in $Q$ send messages to the timestamping service $I$ as dictated by the protocol $\PI$.
There could be messages on $I$ sent by the validators in $P$ that require a response from those in $Q$. 
In this case, the validators in $Q$ reply as if they are honest validators and have not received any messages from those in $P$ over the network.

As $|Q| = \lfloor n/2 \rfloor \leq \fL$, by the $\fL$-liveness of $\PI$, clients $\client_1$ and $\client_2$ both output $\tx^P_i$ for all $i \in [\lceil n/2 \rceil]$ as part of their chains by slot $\Tconfirm$.
Since there can be at most $m\lfloor n/2 \rfloor-1$ bits of data on $I$, and $\tx^Q_j$, $j \in [\lfloor n/2 \rfloor]$, consists of $m\lfloor n/2 \rfloor$ bits in total, $\client_1$ does not learn and cannot output all of $\tx^Q_j$, $j \in [\lfloor n/2 \rfloor]$, as part of its chain by slot $\Tconfirm$ with overwhelming probability.

\textbf{World 2:}
There are again two clients $\client_1$ and $\client_2$.
Validators in $P$ are adversarial, and those in $Q$ are honest.
In their heads, the adversarial validators simulate the execution of the $\lceil n/2 \rceil$ honest validators from world 1 and pretend as if they do not receive any messages from those in $Q$ over the network.
They also do not send any messages to $Q$ and $\client_1$, but reply to $\client_2$.
They send the same messages to $I$ as those sent by the honest validators within world 1.

As $|P| = \lceil n/2 \rceil \leq \fL$, by the $\fL$-liveness of $\PI$, clients $\client_1$ and $\client_2$ both output $\tx^Q_j$, $j \in [\lfloor n/2 \rfloor]$, as part of their chains by slot $\Tconfirm$.
Since there can be at most $m\lfloor n/2 \rfloor-1$ bits of data on $I$, and $\tx^P_i$, $i \in [\lceil n/2 \rceil]$, consists of $m\lceil n/2 \rceil$ bits in total, $\client_1$ does not learn and cannot output all of $\tx^P_i$, $i \in [\lceil n/2 \rceil]$, as part of its chain by slot $\Tconfirm$ with overwhelming probability

The worlds 1 and 2 are indistinguishable by $\client_2$ in terms of the messages received from the validators and observed on the timestamping service, except with negligible probability.
Thus, it outputs the same chain containing $\tx^P_i$, $i \in [\lceil n/2 \rceil]$, and $\tx^Q_j$, $j \in [\lfloor n/2 \rfloor]$, in both worlds with overwhelming probability.
However, $\client_1$'s chain contains $\tx^P_i$, $i \in [\lceil n/2 \rceil]$, but not all of $\tx^Q_j$, $j \in [\lfloor n/2 \rfloor]$, in world 1, and $\tx^Q_j$, $j \in [\lfloor n/2 \rfloor]$, but not all of $\tx^P_i$, $i \in [\lceil n/2 \rceil]$, in world 2.
This implies that there is a safety violation in either world 1 or world 2 or both worlds with non-negligible probability.

Without loss of generality, suppose there is a safety violation in world 2.
In this case, $\client_1$ asks the validators for their transcripts, upon which the adversarial validators in $P$ reply with transcripts that omit the messages received from the set $Q$ over the network.
As $\fS > 0$, by invoking the forensic protocol with the transcripts received, $\client_1$ identifies a non-empty subset $S \subseteq P$ of the adversarial validators, and outputs a proof that the validators in $S$ have violated the protocol $\PI$.
However, in this case, an adversarial validator in world 1 can emulate the behavior of $\client_1$ in world 2, and ask the validators for their transcripts.
It can then invoke the forensic protocol with the transcripts, and output a proof that identifies the same subset $S \subseteq P$ of validators as protocol violators.
Since the two worlds are indistinguishable by $\client_2$ except with negligible probability, upon receiving this proof, it identifies the honest validators in $S \subseteq P$ as protocol violators in world 1 as well, with non-negligible probability, which is a contradiction.
By the same reasoning, if the safety violation happened in world 1, an adversarial validator in world 2 can construct a proof accusing an honest validator in world 2 in $\client_2$'s view with non-negligible probability, again a contradiction.

We next analyze the case $\fL < n/2$.

\textbf{World 3:}
There are two clients, $\client_1$ and $\client_2$.
Validators in $P$ are honest, and those in $Q$ are adversarial.
Adversarial validators behave as described in world 1.
Since there can be at most $m\lfloor n/2 \rfloor-1$ bits of data on $I$, and $\tx^Q_j$, $j \in [\lfloor n/2 \rfloor]$, consists of $m\lfloor n/2 \rfloor$ bits in total, $\client_1$ does not learn and cannot output all of $\tx^Q_j$, $j \in [\lfloor n/2 \rfloor]$, as part of its chain by slot $\Tconfirm$.
As there are at least $\fL$ adversarial validators, either of the following cases can happen:
\begin{itemize}
    \item $\client_1$ outputs $\tx^P_i$, $i \in [\lceil n/2 \rceil]$, as part of its chain by slot $\Tconfirm$.
    \item $\client_1$ does not output $\tx^P_i$, $i \in [\lceil n/2 \rceil]$, by slot $\Tconfirm$.
\end{itemize}

\textbf{World 4:}
There are again two clients, $\client_1$ and $\client_2$.
Validators in $P$ are adversarial, and those in $Q$ are honest.
Adversarial validators behave as described in world 2.
Since there can be at most $m\lfloor n/2 \rfloor-1$ bits of data on $I$, and $\tx^P_i$, $i \in [\lceil n/2 \rceil]$, consists of $m\lceil n/2 \rceil$ bits in total, $\client_1$ does not learn and cannot output all of $\tx^P_i$, $i \in [\lceil n/2 \rceil]$, as part of its chain by slot $\Tconfirm$.
As there are at least $\fL$ adversarial validators, either of the following cases can happen:
\begin{itemize}
    \item $\client_1$ outputs $\tx^Q_j$, $j \in [\lfloor n/2 \rfloor]$, as part of its chain by slot $\Tconfirm$.
    \item $\client_1$ does not output $\tx^Q_j$, $j \in [\lfloor n/2 \rfloor]$, by slot $\Tconfirm$.
\end{itemize}

The worlds 3 and 4 are indistinguishable by $\client_2$ in terms of the messages received from the validators and observed on the timestamping service, except with negligible probability.
Thus, it outputs the same (potentially empty) chain in both worlds by slot $\Tconfirm$ with overwhelming probability.
Suppose $\client_2$ does not output all of $\tx^P_i$, $i \in [\lceil n/2 \rceil]$, as part of its chain by $\Tconfirm$.
As this implies a violation of $\Tconfirm$-liveness in world 3, it asks the validators for their transcripts, upon which the adversarial validators in $Q$ reply with transcripts that omit the messages received from the set $P$ over the network.
As $\fA > 0$, by invoking the forensic protocol with the received transcripts, $\client_1$ identifies a non-empty subset $S \subseteq Q$ of the adversarial validators, and outputs a proof that the validators in $S$ have violated the protocol $\PI$.
However, in this case, an adversarial validator in world 4 can emulate the behavior of $\client_2$ in world 3, and ask the validators for their transcripts.
It can then invoke the forensic protocol with the transcripts, and output a proof that identifies the same subset $S \subseteq Q$ of validators as protocol violators.
Since the two worlds are indistinguishable by $\client_2$ except with negligible probability, upon receiving this proof, it would identify the honest validators in $S \subseteq Q$ as protocol violators in world 4 as well, with non-negligible probability, which is a contradiction.
By the same reasoning, if $\client_2$ does not output all of $\tx^Q_j$, $j \in [\lfloor n/2 \rfloor]$, as part of its chain by slot $\Tconfirm$, the adversary can construct a proof accusing an honest validator in world 3 in $\client_2$'s view with non-negligible probability, again a contradiction.

Next, suppose $\client_1$ does not output all of $\tx^P_i$, $i \in [\lceil n/2 \rceil]$, as part of its chain by slot $\Tconfirm$ in world 3.
As this implies a violation of $\Tconfirm$-liveness in world 3, it asks the validators for their transcripts, upon which the adversarial validators in $Q$ reply with transcripts that omit the messages received from the set $P$ over the network. 
As $\fA > 0$, by invoking the forensic protocol with the transcripts received, $\client_1$ identifies a non-empty subset $S \subseteq Q$ of the adversarial validators, and outputs a proof that the validators in $S$ have violated the protocol $\PI$.
However, in this case, an adversarial validator in world 4 can emulate the behavior of $\client_1$ in world 3, and ask the validators for their transcripts.
It can then invoke the forensic protocol with the transcripts, and output a proof that identifies the same subset $S \subseteq Q$ of validators as protocol violators.
Since the two worlds are indistinguishable by $\client_2$ except with negligible probability, upon receiving this proof, it would identify the honest validators in $S \subseteq Q$ as protocol violators in world 4 with non-negligible probability, which is a contradiction.
By the same reasoning, if $\client_1$ does not output $\tx^Q_j$, $j \in [\lfloor n/2 \rfloor]$, as part of its chain by slot $\Tconfirm$, the adversary can construct a proof accusing an honest validator in world 3 in $\client_2$'s view with non-negligible probability, again a contradiction.

Finally, if $\client_1$ outputs $\tx^P_i$, $i \in [\lceil n/2 \rceil]$, and $\tx^Q_j$, $j \in [\lfloor n/2 \rfloor]$, as part of its chain in worlds 3 and 4 respectively, and $\client_2$ outputs both $\tx^P_i$, $i \in [\lceil n/2 \rceil]$, and $\tx^Q_j$, $j \in [\lfloor n/2 \rfloor]$, as part of its chain in both worlds, we reach a contradiction with the statement $f_s>0$ by the same reasoning presented for the worlds 1 and 2.
Consequently, under the given conditions, no permissioned or PoS protocol can provide a positive accountable safety and liveness resilience simultaneously even with access to a timestamping service.
\end{proof}

\section{Security under Partial Synchrony}
\label{sec:appendix-partial-synchrony}

In this section, we extend our results to a new model that considers network partitions among the \pos validators. 
Note that \btc would not be secure in a partially synchronous network, where the adversary can delay the protocol messages arbitrarily before a global stabilization time (GST)\footnote{Here, GST is determined by the adversary and unknown to the clients and honest validators. All messages by honest validators are delivered within $\Delta$ rounds after GST.}.
Thus, we cannot directly prove slashable safety or liveness of our protocols under partial synchrony.
Moreover, replacing \btc with a different protocol that is secure under partial synchrony does not suffice to ensure slashable safety, since slashable safety requires the timely confirmation of the fraud proofs sent by the \pos validators.
Therefore, we analyze Babylon 1.0 in a model, where the communication among the \pos validators and clients happen over a partially-synchronous network, yet \btc is assumed to be secure and Proposition~\ref{prop:chain-growth} holds, \ie, messages exchanged between validators and \btc miners reach their destination within $\Delta$ time.
Although the validators can achieve a synchronous communication (albeit with a large delay upper bound) by posting their messages to \btc, this would not only overload \btc, which is data-limited, but is unnecessary altogether, as we argue that Babylon 1.0 achieves slashable safety and liveness in this model without any change.

Since clients that go offline might not be able to communicate the \pos blocks in their views to the other clients before GST, for the proof below, we assume that the clients are always online.
Note that satisfying slashable safety remains challenging in this setting due to partial synchrony.

Proof of Theorem~\ref{thm:btc-only-slashable-safety} under a partially-synchronous network for \pos validators and clients remains the same as the proof in Appendix~\ref{sec:appendix-security-proofs}, except that it does not consider offline clients.
This is because the proof does not use any synchrony assumption for the messages exchanged among \pos validators and clients.
On the other hand, Theorem~\ref{thm:btc-only-liveness} is updated to reflect the fact that liveness can only be guaranteed after GST.

\begin{theorem}[Liveness under Partial Synchrony]
\label{thm:btc-only-liveness-psync}
Suppose \btc is secure with parameter $k$ with overwhelming probability, and the number of adversarial active validators is less than $n/3$ at all times.
Then, the Babylon 1.0 protocol with fast finality (Section~\ref{sec:btc-protocol}) satisfies $\Tconfirm$-liveness after GST with overwhelming probability, where $\Tconfirm=\Theta(\lambda)$. 
\end{theorem}

\begin{proof}[Proof of Theorem~\ref{thm:btc-only-liveness-psync}]
By Theorem~\ref{thm:btc-only-slashable-safety}, Babylon 1.0 protocol satisfies $n/3$-slashable safety.
Hence, when the number of active adversarial validators is less than $n/3$ at all slots, it satisfies safety.
In this case, no valid checkpoint for unavailable or non-finalized blocks appears on \btc after GST.
Thus, clients never stop outputting new \pos blocks as part of their checkpointed and \pos chains after GST.
Then, liveness follows from the liveness of the \pos protocol (\ie, Tendermint~\cite{tendermint}) after GST.

\end{proof}

\end{document}